\documentclass[12pt,draftcls,onecolumn]{IEEEtran}
\usepackage[utf8]{inputenc}
\usepackage{amsmath,mathtools}
\usepackage{amsthm}
\usepackage{amsfonts}
\usepackage{amssymb}
\DeclareUnicodeCharacter{00A0}{ }
\usepackage{graphicx}
\usepackage{graphics}
\usepackage{float}
\usepackage{tikz}
\usepackage{booktabs, makecell}
\usepackage{caption}
\usepackage{graphicx}
\usepackage{blindtext}
\usepackage{booktabs}
\usepackage[rightbars]{changebar}
\usetikzlibrary{shapes,snakes,patterns}
\usepackage{epstopdf}
\usepackage[normalem]{ulem}
\renewcommand{\vec}[1]{\mathbf{#1}}
\usepackage[justification=centering]{caption}
\usepackage{multirow}
\usepackage{multicol}
\usepackage{enumerate} 
\usepackage{cite}
\usepackage{placeins}
\newcommand{\vast}{\bBigg@{4}}
\newcommand{\Vast}{\bBigg@{5}}
\newtheorem{theorem}{Theorem}[]
\newtheorem{lemma}[]{Lemma}
\setcounter{secnumdepth}{5}
\usepackage{suffix}
\usepackage{mathtools}
\usepackage{amsthm}
\usepackage{enumitem}
\usepackage{xfrac}
\usepackage{relsize}
\usepackage{dsfont}
\usepackage{url}
\usepackage{xcolor}
\usepackage{slashbox}
\usepackage{breqn}
\theoremstyle{definition}

\newtheorem{corollary}{Corollary}[theorem]
\usepackage[list=true]{subcaption}
\usepackage{authblk}
\allowdisplaybreaks
\begin{document}
\bstctlcite{IEEEexample:BSTcontrol}
\title{Outage Probability Expressions for an IRS-Assisted System with and without Source-Destination Link for the Case of Quantized Phase Shifts in $\kappa - \mu$ Fading}
\author{Mavilla Charishma$^{*}$, Athira Subhash$^{*}$, Shashank Shekhar$^{*}$ and Sheetal Kalyani  \footnote{Mavilla Charishma, Athira Subhash, Shashank Shekhar and Sheetal Kalyani are with the Department of Electrical Engineering, Indian Institute of Technology Madras. (email:\{ee19m020@smail,ee16d027@smail,ee17d022@smail,skalyani@ee\}.iitm.ac.in).\\$^*$Mavilla Charishma, Athira Subhash and Shashank Shekhar are co-first authors.}
}
\maketitle 
\section*{Abstract}
\cbstart \textcolor{blue}{In this work, we study the outage probability (OP) at the destination of an intelligent reflecting surface (IRS) assisted communication system in a $\kappa-\mu$ fading environment. A practical system model that takes into account the presence of phase error due to quantization at the IRS when a)  source-destination (SD)  link is present and b) SD link is absent is considered.} \cbend First, an exact expression is derived, and then we derive three simple approximations for the OP using the following approaches: (i) uni-variate dimension reduction, (ii)  moment matching and, (iii) Kullback–Leibler divergence minimization. The resulting expressions for OP are simple to evaluate and quite tight even in the tail region. The validity of these approximations is demonstrated using extensive Monte Carlo simulations. \cbstart \textcolor{blue}{We also study the impact of the number of bits available for quantization, the position of IRS with respect to the source and destination
and the number of IRS elements on the OP for systems with and without an SD link.} \cbend
\begin{IEEEkeywords}
Intelligent reflecting surface, phase error, outage probability, multivariate integration, Kullback–Leibler divergence, uni-variate dimension reduction.
\end{IEEEkeywords}
\section{Introduction}
Intelligent reflecting surface (IRS) assisted communication has gained much research momentum recently \cite{lis_renzo_2020}. Reconfigurable IRS realized using arrays of passive antenna elements or scattering elements made from metamaterials can introduce specific phase shifts on the incident electromagnetic signal without any decoding, encoding, or radiofrequency processing operations \cite{basar2019,Zhang5g}. \cbstart \textcolor{blue}{Appropriate phase shifts introduced on a large number of reflector elements can create a coherent combination of their individually scattered signals, which is narrow and focused at the user \cite{Zhang5g}. This strategy is known as energy focusing \cite{Zhang5g,wu2019towards}.} \cbend Traditionally, reliable communication links were achieved by implementing intelligent transmitter and receiver designs that combat signal deterioration's introduced by the propagation environment. However, in the past few years, there has been a shift in this paradigm towards the idea of smart radio environments (SRE) where the performance gains achievable via `smartly' modifying the wireless propagation channel are explored \cite{lis_renzo_2020}. 
\par Several works in literature study the performance of SRE realized using IRS and compare them with the performance achievable using techniques like cooperative relaying, massive multiple-input multiple-output (M-MIMO), distributed antennas, backscatter communication, millimetre (mm)-wave communication, and network densification \cite{renzo_2020,wang_spawc_2020,9348003,gopi2020intelligent,wang2020intelligent}. The authors of \cite{wang_spawc_2020} show that an  IRS-assisted M-MIMO system can use the same channel estimation overhead as an M-MIMO system with no IRS to achieve a higher user signal to interference and noise ratio (SINR). The authors of \cite{sena_oct_2020} study the performance gains achieved by an IRS-assisted M-MIMO system integrated with a non-orthogonal multiple access (NOMA) network. They also discuss the critical challenges in realizing an IRS-aided NOMA network. The key differences and similarities between an IRS-aided network and a relay network are studied by the authors of \cite{renzo_2020}. Using mathematical analysis and numerical simulations, they demonstrate that a sufficiently large IRS can outperform relay-aided systems in terms of data rate while reducing the implementation complexity. The number of reflector elements required to outperform the performance of the decode and forward (DF) relay system is studied by the authors of \cite{bjornson2019intelligent}. IRS-aided bistatic backscatter communication (BackCom)  system is studied in \cite{9348003}. Notably, the joint optimization of the phase shifts at the  IRS  and the transmit beamforming vector of the carrier emitter that minimizes the transmit power consumption is studied. Another exciting venue where IRS has useful applications are the mm-wave communication systems \textcolor{blue}{\cite{gopi2020intelligent,Zhangmmwave,wang2020intelligent}}. Authors of \cite{gopi2020intelligent} proposed three different low-cost architectures based on IRS for beam index modulation schemes in mm-wave communication systems. These schemes are capable of eliminating the line-of-sight blockage of millimetre wave frequencies. \cbstart \textcolor{blue}{ The authors of \cite{Zhangmmwave} modeled the small scale fading in an mm-wave communication system using the fluctuating two-ray (FTR) distribution and studied the outage probability (OP) and average bit error rate performance of both an IRS aided system and an amplify and forward (AF) relay-based system. They demonstrate the superiority of the IRS aided systems over AF systems in the mm-wave regime even with a small number of reflecting elements}. \cbend The authors of \cite{wang2020intelligent} also study the prospects of combating issues in mm-wave systems like severe path loss and blockage using an IRS to provide effective reflected paths and hence enhance the coverage. 
\par Most of the works discussed above consider performance metrics such as OP, rate and spectral efficiency to evaluate the performance of the IRS-aided communication network. \textcolor{blue}{Several works} in the literature propose different approximations for the OP of an IRS aided system involving one source and one destination node. The authors of \cite{tao2020performance} assume the availability of a large number of reflector elements and hence use the central limiting theorem (CLT) to derive an approximate expression for the OP. Similarly, the authors of \cite{kudathanthirige_icc_20} also derive an approximation for the OP using CLT. Unlike the system model in \cite{tao2020performance}, they assume that the direct link between the source and the destination is in a permanent outage. Such Gaussian approximations are also used to characterize metrics like ergodic capacity, secrecy outage probability in many other works, including \cite{xu2020ergodic,yang2020secrecy,tang2020physical,li2020ergodic}. Similarly, Gamma approximations (using moment matching) are used for deriving approximate OP by the authors of \cite{atapattu2020reconfigurable,de2021large,hou2019mimo,9369134}. As mentioned by the authors of \cite{9369134}, the Gamma distribution is a  Type-III  Pearson distribution and is widely used in fitting distributions for positive random variables (RVs) \cite{al2010approximation,srinivasan2018secrecy}. The authors of  \cite{xu2020ergodic,li2020ergodic,wang2020chernoff,de2021large} consider more practical IRS models, where due to hardware constraints, the possible phase shifts at the IRS elements are restricted to a finite set of discrete values. \cbstart \textcolor{blue}{Most of the above works consider Rayleigh fading channels for analytical tractability. However, the IRS can be deployed at a height such that a LOS link may be available with the source or the destination or both. The authors of \cite{tao2020performance} consider one such scenario where both the links with the source and the destination experience Rician fading. In this work, we consider the more general scenario where all the links experience independent $\kappa-\mu$ fading. Most of the common fading models like the Rayleigh, Rician, Nakagami$-m$ can be derived as special cases of the $\kappa-\mu$ fading model \cite{yacoub2007kappa}.} \cbend 
\par Table \ref{literature_summary} provides a brief summary of the critical literature on this topic\footnote{Note that the authors of \cite{9369134} also use Gamma approximation. However, they consider a different path loss model for a system without an SD link and hence, neither we include \cite{9369134} in Table \ref{literature_summary}, nor do we compare our performance with their results.}. Here, the antenna model refers to the antenna model of the source and destination pair devices. From the table, it is apparent that the study of OP considering both $b$ bit phase quantization at the IRS and an active source-destination (SD) link is not available in the open literature. 
\begin{table}
\centering
\begin{tabular}{|l|l|l|l|l|l|l|}
\hline
 Reference & \begin{tabular}[c]{@{}l@{}}Exact/\\ approximate\end{tabular} & \begin{tabular}[c]{@{}l@{}}Kind of \\ approximation\end{tabular} & Impairment & \begin{tabular}[c]{@{}l@{}}Antenna\\  model\end{tabular}                  & \begin{tabular}[c]{@{}l@{}}SD link\end{tabular}  & \textcolor{blue}{Fading}         \\  \hline
 \cite{tao2020performance},\cite{kudathanthirige2020performance} & Approximate & \begin{tabular}[c]{@{}l@{}}CLT and hence \\ gaussian \\ approximation\end{tabular} &No & SISO & \begin{tabular}[c]{@{}l@{}}\cite{tao2020performance}:Yes \\  \cite{kudathanthirige2020performance}: No\end{tabular} & \begin{tabular}[c]{@{}l@{}}\textcolor{blue}{Rician}, \\  \textcolor{blue}{Rayleigh}\end{tabular}  \\ 
\hline
 \cite{atapattu2020reconfigurable},\cite{hou2019mimo} & Approximate                                                  & \begin{tabular}[c]{@{}l@{}}Gamma moment\\ mathching\\ for sum of \\ double Rayleigh\end{tabular}                       & No         & SISO                                                                      & \begin{tabular}[c]{@{}l@{}}\cite{hou2019mimo}:Yes \\ \cite{atapattu2020reconfigurable}: No \end{tabular}               & \textcolor{blue}{Rayleigh}                                          \\ \hline
  \cite{wang2020study},\cite{de2021large}     & Approximate                                                  & \begin{tabular}[c]{@{}l@{}}Gamma approximation\\  \cite{wang2020study}:for double-rayleigh,\\\cite{de2021large}: for SNR\end{tabular} & \begin{tabular}[c]{@{}l@{}}Yes, quantisa-\\ tion error for \\ \cite{wang2020study}: 1 bit phase \\ \cite{de2021large}: $b$ bit phase \end{tabular} & SISO                    & \begin{tabular}[c]{@{}l@{}}\cite{wang2020study}:Yes \\ \cite{de2021large}: No \end{tabular}                        & \textcolor{blue}{Rayleigh}                                      \\
\hline
 This work  & Approximate & \begin{tabular}[c]{@{}l@{}}(a) Approximation for \\ exact integral \\ for outage \\ (b) Gamma Moment \\ matching  \\ (c) Gamma KL\\ divergence min\end{tabular}  &  \begin{tabular}[c]{@{}l@{}}Yes,\\ quantisation \\ error for\\ $b$ bit phase\\ representation \end{tabular} & SISO & Yes & \color{blue}$\kappa-\mu$ \color{black} \\ \hline 
\end{tabular}
\caption{Key literature studying the OP of IRS-assisted communication systems.}
\label{literature_summary}
\end{table}
\par \cbstart \textcolor{blue}{In this work, we present an exact expression and three different approximations for characterizing the OP at the destination of an IRS-assisted communication system in a $\kappa-\mu$ fading environment.} \cbend Here, we consider a practical scenario where the phase shift at the IRS elements only takes a finite number of possible values owing to the quantization of the phase at the IRS. Furthermore, we evaluate the system's performance both in the presence and absence of a direct link between the source and the destination node. Our major contributions are summarised as follows:
\cbstart
\begin{itemize}
\item \textcolor{blue}{We study the OP of an IRS assisted system in the presence of phase error due to quantization at the IRS in a $\kappa-\mu$ fading environment.}
\item We derive an exact expression for the OP in terms of a multi-fold integral and approximate it using the uni-variate dimension reduction method. 
\item Using the method of moment matching\footnote{The authors of \cite{de2021large} and \cite{atapattu2020reconfigurable} also uses moment matching to obtain a Gamma approximation, however in scenarios without an SD link and without quantization error, respectively. Our approximation for the received SNR is for a more general scenario, and it recovers the result in \cite{de2021large} as a special case.}, we approximate the received SNR as a Gamma RV and hence derive a simple expression for the corresponding OP. 
\item  We also derive the parameters of the Gamma distribution that has the least Kullback–Leibler (KL) divergence with the exact distribution of SNR\footnote{The Gamma distribution, which has minimum KL divergence with respect to the distribution of the received SNR cannot be obtained by moment matching.}. We thus characterize the OP in terms of the cumulative distribution function (CDF) of the resulting Gamma RV.
\item \textcolor{blue}{We also observed the impact of the number of IRS elements, location of the IRS and the number of bits available for quantization on the OP for an IRS aided system with and without an SD link.}
\end{itemize}
\cbend 
\subsubsection*{Organization} The rest of the paper is organized as follows. The system model we consider is presented in Section \ref{model}. Next, in Section \ref{sec_op}, we propose three approximations to evaluate the OP. In Section \ref{simulation}, we verify the utility of our expressions through simulation experiments and  present insights regarding the impact of various system parameters on the OP and finally, Section \ref{conclusion} concludes the work.
\subsubsection*{Notation}
\textcolor{blue}{Here, $\mathbf{S}$, and $\mathbf{D}$ denotes the source, and the destination, respectively.}
symmetric complex Gaussian random variable with mean zero and variance $\sigma^2$.  $\text{diag}(a_{1},\cdots,a_{N})$ denotes a diagonal matrix with entries $a_{1},\cdots,a_{N}$ and $\text{arg(z)}$ denotes the argument (phase) of the complex number $z$.
\section{System model}
\label{model}
 \begin{figure}[ht]
    \centering
    \begin{tikzpicture}[scale=0.8]
\node at(2,1.4) {$\mathbf{h}^{SR}$};
\node at(6.6,1.4) {$\mathbf{h}^{RD}$};
\node at(4.2,-0.3) {$h^{SD}$};
    \draw[->] (0.2, 0) -- (8.8, 0) ;
\draw[->] (0.2,0.2) -- (3.8,1.8){} ;
\draw[->] (4.2,1.8) -- (8.8,0.2) ;
\draw(2,2) rectangle(6,4);
\draw[step=0.2cm,gray,very thin](2,2) grid(6,4);
    \node at (0, 0) [circle,fill,scale=0.5]{};
    \node at (9, 0) [circle,fill,scale=0.5]{};
    \node at (0, -0.5) []{$\mathbf{S}$};
    \node at (9, -0.5) []{$\mathbf{D}$};
    \node at (4, 3) []{IRS};
    \put(0.2,0.5){books}
    \end{tikzpicture}
     \caption{System Model}
      \label{system_model}
\end{figure}
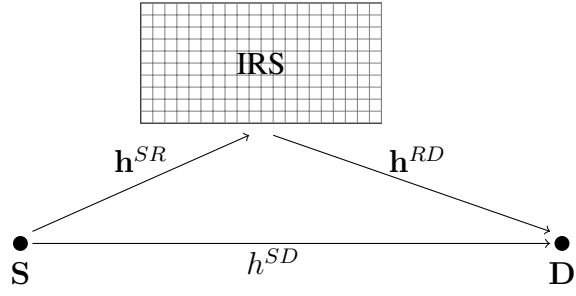
\par We consider a system consisting of one source node ($\mathbf{S}$) communicating with one destination node ($\mathbf{D}$) using an  $\mathbf{IRS}$ with $N$ reflector elements as shown in Fig. \ref{system_model}. Here $\mathbf{S}$ and $\mathbf{D}$  are both equipped with a single antenna each. Furthermore, we assume that the distance between $\mathbf{IRS}$  and $\mathbf{S/D}$ is large enough such that all elements of the $\mathbf{IRS}$  are at the same distance from $\mathbf{S/D}$.
Let,
$\vec{h}^{SR} \in  \mathbb{C}^{N\times1}$, $\vec{h}^{RD} \in \mathbb{C}^{N\times1}$
and
${h}^{SD} \in  \mathbb{C}^{1}$ denote the small-scale fading channel coefficients of the $\mathbf{S}$ to $\mathbf{IRS}$, $\mathbf{IRS}$ to $\mathbf{D}$ and $\mathbf{S}$ to $\mathbf{D}$ link respectively. \cbstart  \textcolor{blue}{It is assumed that all the channels experience independent $\kappa-\mu$ fading, i.e $[\vec{h}^{AB}]_n$ follows the $\kappa-\mu$ distribution with parameters $\kappa_{AB}$ and $\mu_{AB}$ where $A,B \in \{S,R,D\}$. Also the total power of each channel, \textit{i.e.}, $\mathbb{E}\left[ \vert \vec{h}^{AB} \vert^{2}  \right] = d_{AB}^{-\beta}:= \hat{t}^{2}_{AB} $  }
Let $\alpha$ and $\theta_{n}$ represent amplitude coefficient and phase shift introduced by the $n$-{th} $\mathbf{IRS}$ element, respectively. The signal received at node $\mathbf{D}$ is then given by, 
\textcolor{blue}{$y = \sqrt{p}\left({h}^{SD} +\alpha \left(\vec{h}^{SR}\right)^T \mathbf{\Theta} \vec{h}^{RD}\right) s + w,$} where \textcolor{blue}{$ \mathbf{\Theta} = \ \text{diag} \left( e^{j\theta_{1}},...,e^{j\theta_{N}} \right) $,} $p$ is the transmit power, $s$ is the transmitted signal with $\mathbb{E}$[$|s|^2$]=1 and $w$ is the AWGN with noise power $\sigma^2$.
The SNR at the node $\mathbf{D}$  of the IRS-supported
network is then given by
\textcolor{blue}{$\gamma_{IRS} = \gamma_s \Big\vert {h}^{SD} + \alpha\left(\vec{h}^{SR}\right)^T \mathbf{\Theta} \vec{h}^{RD}\Big\vert^{2},$}
where $\gamma_s=\frac{p}{\sigma^2}$. \textcolor{blue}{To achieve maximum SNR at $\mathbf{D}$, the phase-shift of the $n$-th  IRS element needs to be selected as follows,}
\begin{equation}
\theta_{n}^{opt} = \text{arg}\left({h}^{SD}\right)- \text{arg}\left(\left[\vec{h}^{SR}\right]_{n}\left[\vec{h}^{RD}\right]_{n}\right).
\label{theta}
\end{equation}
\textcolor{blue}{Note that such a choice of phase shifts that maximise the SNR is very common in literature and is also used by the authors of \cite{li2020ergodic,de2021large,bjornson2019intelligent}.} \cbend 
Let $b$ be the number of bits used to represent the phase. Then the set of all possible phase shifts at each of the IRS element is given by $\lbrace 0 , \frac{2\pi}{2^b} , \cdots, \frac{(2^b-1)2\pi}{2^b} \rbrace$ \cite{li2020ergodic}. Hence, $\theta_{n}^{opt}$ may not be always available and the exact phases shift at the $n$-th IRS element can be represented as $\theta_{n}$ = $\theta_{n}^{opt}$ + $\Phi_n$, where $\Phi_{n}$ denotes the phase error at the $n^{th}$ reflector. Note that, {$-2^{-b}\pi \le \Phi_{n} \le  2^{-b}\pi$} and we model $\Phi_n \sim \mathcal{U}[-2^{-b}\pi,2^{-b}\pi]$, similar to the authors of \cite{badiu2019communication}, \cite{li2020ergodic}. Here, $\mathcal{U}[a,b]$ represents the uniform distribution over the support $[a,b]$. Thus, the expression for SNR incorporating the phase error term is given as follows:
\begin{equation}
    \gamma_{IRS} =\gamma_{s} \left(\Big\vert \big|{h}^{SD}\big| + \alpha \sum\limits_{n=1}^{N}\big\vert\left[\vec{h}^{SR}\right]_{n}\big\vert \big\vert\left[\vec{h}^{RD}\right]_{n}\big\vert e^{j\Phi_{n}}\Big\vert^2 \right).
    \label{snr_err}
\end{equation}
In the next section, we derive an exact expression and then three simple approximations for the CDF of $\gamma_{IRS}$ in (\ref{snr_err}) and hence the OP at $\mathbf{D}$. \cbstart \textcolor{blue}{Note that, our system model is similar to \cite{li2020ergodic}, however we consider the more general case of $\kappa-\mu$ fading here. Furthermore, the focus in \cite{li2020ergodic} was the ergodic capacity and hence they do not derive the CDF of $\gamma_{IRS}$.} \cbend 
\section{Outage probability} \label{sec_op}
Outage at a node is the phenomenon of the instantaneous SNR falling below a particular threshold, say $\gamma$. The OP at node $\mathbf{D}$ can be evaluated as,
\begin{equation}
   P_{outage}=\mathbb{P}\left[\gamma_{IRS}<\gamma\right].
   \label{snr_outage}
\end{equation}
\cbstart 
\textcolor{blue}{Note that $\gamma_{IRS}$ is the square of the absolute value of the sum of a $\kappa-\mu$ RV and a sum of i.i.d. double $\kappa-\mu$ RVs \cite{Bhargav2018:ProductKappaMu} each scaled by the exponential of a uniform RV. The PDF of a double $\kappa-\mu$ RV can be expressed in terms of a double infinite summation of terms involving the Meijer G-functions \cite[(7)]{Bhargav2018:ProductKappaMu}. Hence the distribution of the sum of $N$ such scaled double $\kappa-\mu$ RVs has a very complicated expression\cite{wang2020study,atapattu2020reconfigurable}.} \cbend This makes the characterization of the exact distribution of the OP a mathematically intractable task. So far, there were various kinds of approximations for OP (as shown in Table \ref{literature_summary}) proposed in the open literature. \cbstart \textcolor{blue}{However, to the best of our knowledge, none of them considered the most general case, \textit{i.e.}, the presence of an SD link, b-bit quantization error and $\kappa-\mu$ fading channels.} \cbend Hence, in the subsequent subsections, we consider the most general case and present an exact expression and three different simple approximations for (\ref{snr_outage}).
\subsection{Dimension Reduction Approximation
} \label{sec_dim_reduc}
\cbstart 
\color{blue}
In this subsection, we first derive an exact expression for the CDF of SNR in the form of a multi-fold integration where the order of integration grows linearly with the number of elements in the IRS. Solving this multi-fold integration analytically is mathematically intractable, and we demonstrate how the method of uni-variate dimension reduction \cite{Rahman2004:Integral_DimensionReduction} can be used to circumvent this issue.
\begin{lemma}\label{Lemma:LIS_Exact} 
For a threshold $\gamma$, an exact expression for OP at node $\mathbf{D}$ is given by (\ref{Eq:LIS_ExactCDF_Final}), where $\mathbf{c} = \left[ \alpha\cos(\phi_{1}) \dots \alpha\cos(\phi_{N})  \right]^{T} $, $\mathbf{s} = \left[ \alpha\sin(\phi_{1}) \dots \alpha\sin(\phi_{N})  \right]^{T} $, $\mathbf{x} = \left[x_{1},\dots,x_{N}\right]$, $ \rho_{SR} = e^{\mu_{SR}\kappa_{SR}}, \rho_{RD} = e^{\mu_{RD}\kappa_{RD}} $ and $a_{SR} = \frac{\mu_{SR}\left(1+\kappa_{SR}\right) }{\hat{t}^{2}_{SR}}, a_{RD} = \frac{\mu_{RD}\left(1+\kappa_{RD}\right) }{\hat{t}^{2}_{RD}} $. Here, $\textit{K}_{\nu}\left(\cdot\right)$ is modified Bessel function of the second kind of order $\nu$ \cite{BesselKdef}, $Q_{M}\left(a,b\right)$ is the Marcum-Q function and $\operatorname{U}(\cdot)$ is the unit step function.
\begin{align}  
P_{outage} &= \left(\frac{ 2^{b+1} \sqrt{a_{SR} a_{RD}}}{\pi \sqrt{\gamma_{s}}\rho_{SR}\rho_{RD}} \right)^{N} \int\dots\int 
             \left[\scriptstyle{\left(  1- \text{Q}_{\mu_{SD}}\left(\sqrt{2\mu_{SD}\kappa_{SD}},\sqrt{2\mu_{SD}\left(1 + \kappa_{SD}\right)} \frac{\left(\sqrt{\gamma - \left(\mathbf{s}^{T}\mathbf{x}\right)^{2}}- \mathbf{c}^{T}\mathbf{x}\right)}{\sqrt{\gamma_{s}}\hat{t}_{SD}} \right) \right)} \right.  \nonumber
             \\&\hspace{-15mm}
 \times\operatorname{U}\left(\sqrt{\gamma - \left(\mathbf{s}^{T}\mathbf{x}\right)^{2}}- \mathbf{c}^{T}\mathbf{x}\right) \prod_{i=1}^{N} \left( \sum_{m=0}^{\infty} \sum_{n=0}^{\infty} \frac{\left(\mu_{SR}\kappa_{SR} \right)^{m}\left(\mu_{RD}\kappa_{RD} \right)^{n}}{m! \Gamma\left( \mu_{SR} + m\right)n!\Gamma\left( \mu_{RD} + n\right)} \right. \nonumber 
              \\& \left. \left. \hspace{-15mm} \times \left(\sqrt{\frac{a_{SR}a_{RD}}{\gamma_{s}}} x_{i}\right)^{\mu_{SR}+\mu_{RD}+m+n-1}
             \textit{K}_{\mu_{SR}-\mu_{RD} + m-n}\left(2\sqrt{\frac{a_{SR}a_{RD}}{\gamma_{s}}} x_{i} \right) 
             \right) \right] d x_{1} d \phi_{1} \dots d x_{N} d \phi_{N},
 \label{Eq:LIS_ExactCDF_Final}            
\end{align}
\end{lemma}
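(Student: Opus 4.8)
The plan is to reduce this multivariate outage probability to a $2N$-fold integral by conditioning on everything except the direct-link envelope $|h^{SD}|$. First I would put the SNR into Cartesian form: writing $x_n=\big|[\vec{h}^{SR}]_n\big|\,\big|[\vec{h}^{RD}]_n\big|$ and expanding the squared modulus in (\ref{snr_err}) via $|z|^2=(\Re z)^2+(\Im z)^2$, with $\Re$ part $|h^{SD}|+\alpha\sum_n x_n\cos\Phi_n$ and $\Im$ part $\alpha\sum_n x_n\sin\Phi_n$, one gets $\gamma_{IRS}=\gamma_s\big[(\,|h^{SD}|+\vec{c}^{T}\vec{x}\,)^2+(\vec{s}^{T}\vec{x})^2\big]$, with $\vec{c},\vec{s},\vec{x}$ exactly as in the statement evaluated at $\phi_n=\Phi_n$.

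Next I would condition on the $N$ products $x_n$ and the $N$ phase errors $\Phi_n$, so that $\{\gamma_{IRS}<\gamma\}$ becomes the one-dimensional event $(\,|h^{SD}|+\vec{c}^{T}\vec{x}\,)^2<\gamma/\gamma_s-(\vec{s}^{T}\vec{x})^2$ in $|h^{SD}|$. The key point is that $|h^{SD}|\ge 0$ and, since $|\Phi_n|\le 2^{-b}\pi\le\pi/2$ for $b\ge1$, each $\cos\Phi_n\ge0$ and hence $\vec{c}^{T}\vec{x}\ge0$; therefore $t\mapsto(t+\vec{c}^{T}\vec{x})^2$ is nondecreasing on $t\ge0$, and the conditional outage probability equals $F_{|h^{SD}|}\!\big(\sqrt{\gamma/\gamma_s-(\vec{s}^{T}\vec{x})^2}-\vec{c}^{T}\vec{x}\big)$ when $(\vec{s}^{T}\vec{x})^2\le\gamma/\gamma_s$ and $\vec{c}^{T}\vec{x}<\sqrt{\gamma/\gamma_s-(\vec{s}^{T}\vec{x})^2}$, and equals $0$ otherwise — in the first failing case $\gamma_{IRS}\ge\gamma_s(\vec{s}^{T}\vec{x})^2\ge\gamma$, and in the second the event would force $|h^{SD}|<0$. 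Both conditions are captured by the factor $\operatorname{U}\!\big(\sqrt{\gamma-(\vec{s}^{T}\vec{x})^2}-\vec{c}^{T}\vec{x}\big)$ after the rescaling described below. Inserting the $\kappa-\mu$ envelope CDF $F_{|h^{SD}|}(r)=1-Q_{\mu_{SD}}\!\big(\sqrt{2\mu_{SD}\kappa_{SD}},\,\sqrt{2\mu_{SD}(1+\kappa_{SD})}\,r/\hat{t}_{SD}\big)$ then produces the bracketed Marcum-$Q$ term of (\ref{Eq:LIS_ExactCDF_Final}).

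It then remains to average over $(x_n,\Phi_n)_{n=1}^N$. Each $x_n$ is a product of two independent $\kappa-\mu$ envelopes, so I would obtain its density in the stated form by expanding the modified Bessel-$I$ factor of each $\kappa-\mu$ PDF as a power series, writing the product density as $f_{x}(x)=\int_0^\infty r^{-1}f_{SR}(r)\,f_{RD}(x/r)\,dr$, and evaluating the $r$-integral with $\int_0^\infty t^{\nu-1}e^{-At-B/t}\,dt=2(B/A)^{\nu/2}K_\nu(2\sqrt{AB})$; this reproduces the double sum over $m,n$ of Bessel-$K$ terms appearing in the product, up to a constant $4\sqrt{a_{SR}a_{RD}}/(\rho_{SR}\rho_{RD})$ per element (equivalently one may just invoke the double $\kappa-\mu$ density of \cite{Bhargav2018:ProductKappaMu}). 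Since the $x_n$ are i.i.d., the $\Phi_n$ are i.i.d.\ $\mathcal{U}[-2^{-b}\pi,2^{-b}\pi]$ with density $2^{b-1}/\pi$, and all are mutually independent, unconditioning gives an $N$-fold product integral over $(x_n,\phi_n)$; the change of variable $x_n\mapsto\sqrt{\gamma_s}\,x_n$ then moves $\gamma_s$ into the displayed form, and the per-element Jacobian $1/\sqrt{\gamma_s}$, the density constant $4\sqrt{a_{SR}a_{RD}}/(\rho_{SR}\rho_{RD})$, and the phase-error density $2^{b-1}/\pi$ combine into the prefactor $\big(2^{b+1}\sqrt{a_{SR}a_{RD}}/(\pi\sqrt{\gamma_s}\rho_{SR}\rho_{RD})\big)^N$, yielding (\ref{Eq:LIS_ExactCDF_Final}).

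I expect the main obstacle to be the conditioning step: squaring is order-preserving only on the nonnegative half-line, so one must exploit $|h^{SD}|\ge0$ together with $\vec{c}^{T}\vec{x}\ge0$ (which is where $b\ge1$ enters) and separately dispose of the two sub-regions where the conditional probability vanishes; packaging these into the single unit-step factor is what keeps the final expression compact. The remaining steps — deriving the double $\kappa-\mu$ product density and tracking the multiplicative constants through the change of variables — are essentially bookkeeping.
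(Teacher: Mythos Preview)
Your proposal is correct and follows essentially the same approach as the paper's own proof: decompose $\gamma_{IRS}$ into real and imaginary parts, condition on the products $x_n$ and phase errors $\Phi_n$ to reduce the outage event to a one-dimensional threshold on $|h^{SD}|$, insert the $\kappa$--$\mu$ envelope CDF via the Marcum-$Q$ function, and then uncondition using the uniform phase-error density and the double $\kappa$--$\mu$ product density from \cite{Bhargav2018:ProductKappaMu}. If anything, your treatment of the conditioning step is more careful than the paper's, since you explicitly justify the monotonicity via $\vec{c}^{T}\vec{x}\ge 0$ (from $b\ge 1$) and explain why both degenerate sub-regions are absorbed into the single unit-step factor.
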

\color{black}
\cbend 
\begin{proof}
Please refer Appendix \ref{proof:LIS_Exact} for the proof.
\end{proof}
Note that (\ref{Eq:LIS_ExactCDF_Final}) provides an exact expression for the OP at $\mathbf{D}$, but it is a multi-fold integration of order $2N$. It is very difficult to evaluate this expression even numerically for values of $N$ such as $50$ using common mathematical software such as Matlab/Mathematica. So, it is important to have an approximation that is close to (\ref{Eq:LIS_ExactCDF_Final}) and is also easily computable. One such approximation for general multivariate integrals is presented in \cite{Rahman2004:Integral_DimensionReduction}  in the context of stochastic mechanics. The work in  \cite{Rahman2004:Integral_DimensionReduction}  approximate an $n$-th order integration as a  sum of $n$ single order integrations. This approximation has also been recently used to approximate complicated multidimensional integrals for cell-free massive MIMO system\cite{shekhar2021outage}. In this work, we propose to approximate the integral in  (\ref{Eq:LIS_ExactCDF_Final}) using the uni-variate dimension reduction method. The key challenge in applying the dimension reduction method is to express the desired multi-fold integration as an expectation of some function, where the expectation is taken with respect to a random vector of length same as the order of integration. The proposed approximation and detailed proof are presented in the following theorem. 
\color{blue}
\cbstart
\begin{theorem} \label{Thm:LIS_DimReduc}
For a threshold $\gamma$, an approximate expression of the OP at node $\mathbf{D}$ is given by (\ref{Eq:LIS_DimReducApprox}), where $T_{1} = \sqrt{\gamma} - \left(N-1\right)\alpha\mu$, $\mu = \frac{\sqrt{\gamma_{s}}\left( \mu_{SR} \right)_{\frac{1}{2}} \left( \mu_{RD} \right)_{\frac{1}{2}}}{\sqrt{a_{SR} a_{RD}}} {}_{1}F_{1}\left(-\frac{1}{2} ; \mu_{SR} ; -\kappa_{SR} \mu_{SR}\right) {}_{1}F_{1}\left(-\frac{1}{2} ; \mu_{RD} ; -\kappa_{RD} \mu_{RD}\right)$, $h_{1}\left(x\right) = \left(  1- \text{Q}_{\mu_{SD}}\left(\sqrt{2\mu_{SD}\kappa_{SD}},\sqrt{2\mu_{SD}\left(1 + \kappa_{SD}\right)} \frac{\left(\sqrt{\gamma} - x - (N-1)\alpha\mu\right)}{\sqrt{\gamma_{s}}\hat{t}_{SD}} \right) \right)$ and $\textit{K}_{\nu}$ is modified Bessel function of the second kind of order $\nu$ \cite{BesselKdef} $Q_{M}\left(a,b\right)$ is the Marcum-Q function and $\operatorname{U}(\cdot)$ is the unit step function. 
 \end{theorem}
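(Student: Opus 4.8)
The plan is to start from the exact expression (\ref{Eq:LIS_ExactCDF_Final}) of Lemma~\ref{Lemma:LIS_Exact} and recognize that the $2N$-fold integral is, after suitable normalization of the integrand, precisely an expectation of a function $g(\mathbf{x},\boldsymbol{\phi})$ of a $2N$-dimensional random vector whose components are independent. Concretely, I would first absorb the prefactor $\left(\tfrac{2^{b+1}\sqrt{a_{SR}a_{RD}}}{\pi\sqrt{\gamma_s}\rho_{SR}\rho_{RD}}\right)^{N}$ together with the product over $i$ of the Bessel-series factor into the density of $N$ i.i.d.\ random variables $X_i$ (these are exactly the per-element double-$\kappa$--$\mu$ amplitude magnitudes scaled by $1/\sqrt{\gamma_s}$, whose PDF is the Meijer-$G$/Bessel series appearing inside the product), and absorb the $d\phi_i$ integration over $[-2^{-b}\pi, 2^{-b}\pi]$ into $N$ i.i.d.\ uniform phase-error variables $\Phi_i$, normalized by the $2^{b+1}/\pi$ factor. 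What remains inside the integral — the Marcum-$Q$ term coming from the SD link together with the unit-step — is then a bounded measurable function $h(\mathbf{s}^{T}\mathbf{x}, \mathbf{c}^{T}\mathbf{x})$ of the two linear statistics $U=\mathbf{c}^{T}\mathbf{x}=\sum_i \alpha\cos(\phi_i)x_i$ and $V=\mathbf{s}^{T}\mathbf{x}=\sum_i \alpha\sin(\phi_i)x_i$, so that $P_{outage}=\mathbb{E}\!\left[h(U,V)\right]$.

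Next I would invoke the univariate dimension-reduction formula of Rahman--Xu \cite{Rahman2004:Integral_DimensionReduction}: for a function $g$ of $d$ independent variables $Y_1,\dots,Y_d$ with reference point $\mathbf{c}_0=(c_{0,1},\dots,c_{0,d})$, one has the approximation $\mathbb{E}[g(Y_1,\dots,Y_d)] \approx \sum_{k=1}^{d}\mathbb{E}\!\left[g(c_{0,1},\dots,c_{0,k-1},Y_k,c_{0,k+1},\dots,c_{0,d})\right] - (d-1)\,g(\mathbf{c}_0)$. I would apply this with $d=2N$, grouping the variables as the $N$ pairs $(X_i,\Phi_i)$, and choose the reference point so that each $X_i$ is fixed at its mean $\mu$ (computed via the standard double-$\kappa$--$\mu$ first-moment integral, giving the ${}_1F_1$ expression stated for $\mu$ in the theorem) and each $\Phi_i$ is fixed at $0$ (its mean, which also makes $\alpha\cos\Phi_i \to \alpha$ and $\alpha\sin\Phi_i \to 0$ at the reference point, so $V\to 0$ there). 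With this choice, when we vary only the pair $(X_j,\Phi_j)$ the linear statistics collapse to $U = \alpha\cos(\phi_j)x_j + (N-1)\alpha\mu$ and $V=\alpha\sin(\phi_j)x_j$; feeding these into $h$ and simplifying $\sqrt{\gamma-V^2}-U$ with $V^2=(\alpha\sin\phi_j x_j)^2$ produces the argument structure $\sqrt{\gamma}-x-(N-1)\alpha\mu$ that appears in $h_1(x)$ and in the definition of $T_1$. Summing the $N$ identical single-pair terms and subtracting $(N-1)$ copies of the fully-reference term $h(\,(N-1+1)\alpha\mu,0\,)=h_1$ evaluated appropriately then yields (\ref{Eq:LIS_DimReducApprox}).

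The third step is just bookkeeping: carry out each surviving two-dimensional integral (over one $x_j\in(0,\infty)$ and one $\phi_j\in[-2^{-b}\pi,2^{-b}\pi]$) explicitly, keeping the Bessel series for the single remaining $X_j$ factor, and verify that the unit-step $\operatorname{U}(\sqrt{\gamma-V^2}-U)$ restricts the $x_j$-range so that the upper limit is governed by $T_1=\sqrt{\gamma}-(N-1)\alpha\mu$; this is where the $T_1$ and $h_1$ notation in the statement comes from. Everything else — the prefactors $\rho_{SR},\rho_{RD},a_{SR},a_{RD}$, the Pochhammer symbols $(\mu_{SR})_{1/2}$, etc.\ — is inherited verbatim from Lemma~\ref{Lemma:LIS_Exact} and the moment computation.

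The main obstacle I expect is not the dimension-reduction step itself (which is a direct quotation of \cite{Rahman2004:Integral_DimensionReduction}) but rather the \emph{reformulation} step: showing rigorously that the integrand of (\ref{Eq:LIS_ExactCDF_Final}) genuinely factorizes into a product of $N$ legitimate one-dimensional probability densities times a bounded function of the two linear forms, so that the hypotheses of the dimension-reduction lemma (independence of the coordinates, integrability) are met. In particular one must check that the per-element Bessel series, after attaching the $\left(\tfrac{2^{b+1}\sqrt{a_{SR}a_{RD}}}{\pi\sqrt{\gamma_s}\rho_{SR}\rho_{RD}}\right)$ factor, integrates to the correct normalization over $x_i\in(0,\infty)$ and $\phi_i$, i.e.\ that it is exactly the joint density of $(X_i,\Phi_i)$ — this requires the known Mellin/Meijer-$G$ integral for the product of two $\kappa$--$\mu$ amplitudes. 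A secondary subtlety is the choice of reference point: $\mathbb{E}[\Phi_i]=0$ is unambiguous, but one should note that fixing $X_i$ at its mean (rather than, say, its mode or median) is the conventional Rahman--Xu choice and is what makes the $(N-1)\alpha\mu$ shifts line up; this should be stated rather than silently assumed.
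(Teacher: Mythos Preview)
Your proposal contains a genuine misapplication of the Rahman--Xu formula that would not reproduce (\ref{Eq:LIS_DimReducApprox}). You state the univariate reduction correctly with $d=2N$, namely $\sum_{k=1}^{2N}\mathbb{E}[g(\dots,Y_k,\dots)] - (2N-1)g(\mathbf{c}_0)$, but you then apply it as if the unit of variation were the \emph{pair} $(X_j,\Phi_j)$: you write ``when we vary only the pair $(X_j,\Phi_j)$ the linear statistics collapse to $U=\alpha\cos(\phi_j)x_j+(N-1)\alpha\mu$ and $V=\alpha\sin(\phi_j)x_j$'', and you conclude with ``$N$ identical single-pair terms'' minus ``$(N-1)$ copies''. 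That is not the univariate reduction with $d=2N$; it is a bivariate reduction with $d=N$ groups, and it produces $N$ two-dimensional integrals rather than two families of one-dimensional integrals. The resulting expression would not match (\ref{Eq:LIS_DimReducApprox}), whose structure is visibly $N\cdot(\text{integral in }x)+N\cdot(\text{integral in }\phi)-(2N-1)\cdot(\text{constant})$.

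The paper's proof applies the formula coordinate by coordinate. For $1\le i\le N$ one freezes all $\Phi_j$ at their mean $0$ \emph{and} all $X_j$ at $\mu$ except the single $X_i$, so that $\mathbf{s}^T\mathbf{x}=0$ and $\mathbf{c}^T\mathbf{x}=\alpha x_i+(N-1)\alpha\mu$; this is precisely what makes the argument collapse to $\sqrt{\gamma}-x_i-(N-1)\alpha\mu$ and produces $h_1(x)$ together with the truncation at $T_1$. For $N+1\le i\le 2N$ one freezes all $X_j$ at $\mu$ \emph{and} all $\Phi_j$ at $0$ except the single $\Phi_i$, giving $\mathbf{s}^T\mathbf{x}=\alpha\sin(\phi_i)\mu$ and $\mathbf{c}^T\mathbf{x}=\alpha\cos(\phi_i)\mu+(N-1)\alpha\mu$; this yields the second integral in (\ref{Eq:LIS_DimReducApprox}). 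Summing the $2N$ one-dimensional expectations and subtracting $(2N-1)$ copies of $g$ at the full mean vector gives the stated approximation. Your claim that ``$\sqrt{\gamma-V^2}-U$ with $V^2=(\alpha\sin\phi_j\,x_j)^2$ produces the argument structure $\sqrt{\gamma}-x-(N-1)\alpha\mu$'' is false unless $\phi_j=0$, which is exactly the point: in the $x$-varying terms $\phi_j$ is held at $0$, and in the $\phi$-varying terms $x_j$ is held at $\mu$; they are never varied simultaneously.
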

  \begin{proof}
Please refer Appendix \ref{proof:LIS_DimReduc} for the proof. 
\end{proof}
\begin{align}
& P_{outage} \approx N \left( \frac{4\sqrt{a_{SR} a_{RD}}}{\sqrt{\gamma_{s}}\rho_{SR}\rho_{RD}}\right) \sum_{m=0}^{\infty} \sum_{n=0}^{\infty}
         \frac{\left(\mu_{SR}\kappa_{SR} \right)^{m}}{m! \Gamma\left( \mu_{SR} + m\right)}\frac{\left(\mu_{RD}\kappa_{RD} \right)^{n}}{n!\Gamma\left( \mu_{RD} + n\right)} \left(  
         \int_{0}^{T_{1}} h_{1}\left(x\right)  \right.  \nonumber  \\& \times 
 \left(\sqrt{\frac{a_{SR}a_{RD}}{\gamma_{s}}}  x\right)^{\mu_{SR}+\mu_{RD}+m+n-1} \left. K_{\mu_{SR}-\mu_{RD} + m-n}\left(2\sqrt{\frac{a_{SR}a_{RD}}{\gamma_{s}}} x \right) dx  \right) \nonumber
        \\
        &+N \frac{ 2^{b}}{2\pi} \int^{\frac{\pi}{2^{b}}}_{\frac{-\pi}{2^{b}}}
        \begin{array}{l}
            \left(    1- \text{Q}_{\mu_{SD}}\left(\sqrt{2\mu_{SD}\kappa_{SD}},\sqrt{2\mu_{SD}\left(1 + \kappa_{SD}\right)} \frac{\left(\sqrt{\gamma - \left(\alpha\sin{\left( \phi_{i} \right)}\mu \right)^{2}}- 
        \alpha\cos{\left(\phi_{i}\right)} \mu - (N-1)\alpha\mu\right)}{\sqrt{\gamma_{s}}\hat{t}_{SD}} \right) \right) \nonumber \\
         \times\operatorname{U}\Bigg( \Bigg.\sqrt{\gamma - \left(\alpha\sin{\left( \phi \right)}\mu \right)^{2}}-
        \alpha\cos{\left( \phi \right)} \mu - (N-1)\alpha\mu \Bigg.\Bigg)  d \phi   \nonumber 
        \end{array}
         \\
     &-(2N-1)  \left(  1- \text{Q}_{\mu_{SD}}\left(\sqrt{2\mu_{SD}\kappa_{SD}},\sqrt{2\mu_{SD}\left(1 + \kappa_{SD}\right)} \frac{\left(\sqrt{\gamma}- N\alpha\mu\right)}{\sqrt{\gamma_{s}}\hat{t}_{SD}} \right) \right) \operatorname{U}\left(\sqrt{\gamma}- N\alpha\mu \right),
     \label{Eq:LIS_DimReducApprox}
    \end{align}
\color{black}
\cbend 
 The approximation proposed in (\ref{Eq:LIS_DimReducApprox}) consists of only $2$ single integrations\footnote{Note that for practical values of $d_{sr},d_{rd}$ and $\beta$ the value of $\mu$ is quite small and hence, in such regimes  $\gamma - \left(\alpha\sin{\left( \phi \right)}\mu \right)^{2} > 0$ for the range of thresholds typically used. Furthermore, for large $b$ the range of $\sin{\phi}$ itself is small and that too make $\gamma - \left(\alpha\sin{\left( \phi \right)}\mu \right)^{2} $ positive. Hence ${\gamma - \left(\alpha\sin{\left( \phi \right)}\mu \right)^{2}}$ which appears in the second integral is always non negative for all the cases of interest in this application.} which are easy to solve numerically and provides a good approximation for the OP which is confirmed through extensive simulations presented in Section \ref{simulation}. 
 \par Next, we will discuss one case which is prevalent in literature. We consider the scenario where we have a perfect phase alignment at the IRS, \textit{i.e.}, there is no phase error, and the IRS has no hardware impairments. The approximation of the OP, for perfect phase alignment case,  can be obtained by substituting $\phi = 0$ in (\ref{Eq:LIS_DimReducApprox}) and observing that we are approximating an $N$-th order integration. The final result is presented in the following corollary.
\color{blue}
\cbstart
\begin{corollary}\label{Cor:LIS_dimreduc_nophase}
For perfect phase alignment and threshold $\gamma$, the OP at node $\mathbf{D}$ is approximated as
\begin{align}
\label{Eq:LIS_DimReducApproxNoPhaseerror}
    % \begin{aligned}
         P_{outage} &\approx 
        N \left( \frac{4\sqrt{a_{SR} a_{RD}}}{\sqrt{\gamma_{s}}\rho_{SR}\rho_{RD}}\right) \sum_{m=0}^{\infty} \sum_{n=0}^{\infty}
         \frac{\left(\mu_{SR}\kappa_{SR} \right)^{m}}{m! \Gamma\left( \mu_{SR} + m\right)}\frac{\left(\mu_{RD}\kappa_{RD} \right)^{n}}{n!\Gamma\left( \mu_{RD} + n\right)} \left(  
         \int_{0}^{T_{1}} h_{1}\left(x\right)  \right. \nonumber  \\& \times 
            \left(\sqrt{\frac{a_{SR}a_{RD}}{\gamma_{s}}}  x\right)^{\mu_{SR}+\mu_{RD}+m+n-1} \left. K_{\mu_{SR}-\mu_{RD} + m-n}\left(2\sqrt{\frac{a_{SR}a_{RD}}{\gamma_{s}}} x \right) dx  \right)
        \\ 
     &-(N-1)  \left(  1- \text{Q}_{\mu_{SD}}\left(\sqrt{2\mu_{SD}\kappa_{SD}},\sqrt{2\mu_{SD}\left(1 + \kappa_{SD}\right)} \frac{\left(\sqrt{\gamma}- N\alpha\mu\right)}{\sqrt{\gamma_{s}}\hat{t}_{SD}} \right) \right) \operatorname{U}\left(\sqrt{\gamma}- N \alpha\mu \right). \nonumber
    % \end{aligned}
\end{align}
\end{corollary}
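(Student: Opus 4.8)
The plan is to re-run the derivation behind Lemma~\ref{Lemma:LIS_Exact} and Theorem~\ref{Thm:LIS_DimReduc} after deleting the phase-error degrees of freedom, while keeping careful track of the fact that the ambient dimension of the integral now drops from $2N$ to $N$. First I would set $\Phi_{n}=0$ for every $n$ in (\ref{snr_err}); the received envelope then collapses to the purely real, nonnegative quantity $\vert h^{SD}\vert + \alpha\sum_{n=1}^{N} P_{n}$ with $P_{n} := \vert[\vec h^{SR}]_{n}\vert\,\vert[\vec h^{RD}]_{n}\vert$, so no $\cos\phi_{n}$ or $\sin\phi_{n}$ factors survive and the outage event is simply $\{\sqrt{\gamma_{s}}\,(\vert h^{SD}\vert + \alpha\sum_{n} P_{n}) < \sqrt{\gamma}\}$. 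Repeating the conditioning step of Lemma~\ref{Lemma:LIS_Exact} --- condition on $(P_{1},\dots,P_{N})$, use independence of the source-to-IRS and IRS-to-destination links so the joint density factorises into $N$ copies of the double $\kappa-\mu$ product density of \cite{Bhargav2018:ProductKappaMu} (the Bessel-$K$ double series already present in (\ref{Eq:LIS_ExactCDF_Final})), and write the conditional outage probability through the $\kappa-\mu$ CDF of $\vert h^{SD}\vert$, i.e.\ a Marcum-$Q$ term --- produces an $N$-fold integral whose integrand is exactly that of (\ref{Eq:LIS_ExactCDF_Final}) but with the $N$ angular integrations removed and the unit-step argument reduced to a constant minus $\alpha$ times the sum of the (suitably scaled) product variables.

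Next I would express this $N$-fold integral as $\mathbb{E}\!\left[g(X_{1},\dots,X_{N})\right]$ with the $X_{i}$ i.i.d., and apply the univariate dimension-reduction formula of \cite{Rahman2004:Integral_DimensionReduction} anchored at the common mean $\mu$ of the scaled double $\kappa-\mu$ product, namely $\mathbb{E}[g(\vec X)] \approx \sum_{i=1}^{N}\mathbb{E}\!\left[g(\mu,\dots,\mu,X_{i},\mu,\dots,\mu)\right] - (N-1)\,g(\mu,\dots,\mu)$. The value $\mu$ is the expression displayed in Theorem~\ref{Thm:LIS_DimReduc}; it follows from the known half-order moment $\mathbb{E}\!\left[\,\vert[\vec h^{AB}]_{n}\vert\,\right] = (\mu_{AB})_{1/2}\,a_{AB}^{-1/2}\,{}_{1}F_{1}(-\tfrac{1}{2};\mu_{AB};-\kappa_{AB}\mu_{AB})$ of a $\kappa-\mu$ variate together with independence of the two links (and the $\sqrt{\gamma_{s}}$ normalisation). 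Because the $X_{i}$ are identically distributed, the $N$ summands coincide, so the sum equals $N$ times a single integral in a scalar variable $x$; fixing the remaining $N-1$ coordinates at $\mu$ both caps that integral at $T_{1} = \sqrt{\gamma} - (N-1)\alpha\mu$ through the unit step and turns the conditional Marcum-$Q$ factor into $h_{1}(x)$, reproducing the first term of (\ref{Eq:LIS_DimReducApproxNoPhaseerror}). Evaluating the anchor term $g(\mu,\dots,\mu)$ gives the Marcum-$Q$ expression with argument proportional to $\sqrt{\gamma} - N\alpha\mu$ times $\operatorname{U}(\sqrt{\gamma} - N\alpha\mu)$, weighted by $-(N-1)$, which is exactly the $-(S-1)$ correction weight with $S = N$.

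Equivalently, the statement is what one obtains by formally putting $\phi_{i}=0$ (hence $\cos\phi_{i}=1$, $\sin\phi_{i}=0$) in (\ref{Eq:LIS_DimReducApprox}): the $\phi$-integral term drops out, its anchor contribution merges into the last term, and the correction weight changes from $-(2N-1)$ to $-(N-1)$. I expect the only genuinely delicate point to be this bookkeeping of the dimension count --- one must \emph{not} simply substitute $\phi_{i}=0$ into (\ref{Eq:LIS_DimReducApprox}) and retain the weight $-(2N-1)$, but rather recognise that with the phase errors absent the dimension-reduction is applied to an $N$-dimensional, not a $2N$-dimensional, integral, which is precisely why the corollary carries $-(N-1)$. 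A secondary sanity check is that the positivity caveats in the footnote to Theorem~\ref{Thm:LIS_DimReduc} concerning $\gamma - (\alpha\sin\phi\,\mu)^{2}$ become vacuous here since $\sin\phi = 0$, so all the step functions simplify cleanly.
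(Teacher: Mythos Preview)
Your proposal is correct and matches the paper's own justification, which simply states that the result follows by substituting $\phi=0$ in (\ref{Eq:LIS_DimReducApprox}) and observing that one is now approximating an $N$-fold (rather than $2N$-fold) integral. Your first two paragraphs spell out exactly this argument in detail; the only minor imprecision is in your third paragraph, where the $\phi$-integral in (\ref{Eq:LIS_DimReducApprox}) does not so much ``drop out'' as collapse to $N$ copies of the anchor term, which then combine with the $-(2N-1)$ weight to yield $-(N-1)$ --- so a naive substitution into the full expression (\ref{Eq:LIS_DimReducApprox}) actually does give the right answer, consistent with the paper's one-line remark.
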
 
\color{black}
\cbend
\par While the uni-variate approximation is highly accurate, we could not apply it to the case of no SD link. In order to address this gap, we look at moment matching in the following subsection.
\subsection{Gamma approximation using Moment matching}  \label{sec_mom}
In this sub-section, we approximate the SNR as a Gamma RV with shape parameter $k_{mom}$ and scale parameter $\theta_{mom}$ by matching their first and second moments. Using this result, the OP at node $\mathbf{D}$ is given by the following theorem.
\begin{theorem} \label{gamma_approx}
The OP for a threshold $\gamma$ at node $\mathbf{D}$ can be evaluated as
\begin{equation}
P_{outage}=\frac{\gamma^{k_{mom}}}{\theta_{mom}^{k_{mom}} \Gamma\left(k_{mom}+1\right)}{ }_{1} F_{1}\left(k_{mom}, k_{mom}+1, \frac{-\gamma}{\theta_{mom}}\right),
\label{p_out_mom}
\end{equation}
where the shape parameter ($k_{mom}$) and the scale parameter ($\theta_{mom}$) of the Gamma distribution can be evaluated using:
\begin{equation}
    \theta_{mom}= \frac{\mathbb{E}[\gamma_{IRS}^2] - \mathbb{E}^2[\gamma_{IRS}]}{E[\gamma_{IRS}]},
    \label{theta}
\end{equation}
\begin{equation}
    k_{mom} =\frac{\mathbb{E}[\gamma_{IRS}]}{\theta_{mom}}.
    \label{k}
\end{equation}
Here, ${ }_{1}F_{1}(\cdot,\cdot,\cdot)$ is the confluent hypergeometric function of the first kind \cite{confluent} and $\mathbb{E}[\gamma_{IRS}], \mathbb{E}[\gamma_{IRS}^2]$ can be evaluated using (\ref{mean_1_general_fad}) and  (\ref{mean_2_general_fad}) respectively. 
\end{theorem}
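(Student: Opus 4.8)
The plan is to establish Theorem \ref{gamma_approx} in two stages: first justify the CDF formula (\ref{p_out_mom}) given that $\gamma_{IRS}$ is approximated by a Gamma RV with shape $k_{mom}$ and scale $\theta_{mom}$, and second produce closed-form expressions for the first two moments $\mathbb{E}[\gamma_{IRS}]$ and $\mathbb{E}[\gamma_{IRS}^2]$ that feed into (\ref{theta}) and (\ref{k}). For the first stage, I would start from the Gamma density $f(x) = x^{k_{mom}-1} e^{-x/\theta_{mom}} / (\theta_{mom}^{k_{mom}} \Gamma(k_{mom}))$ and integrate from $0$ to $\gamma$. Rather than invoke the lower incomplete Gamma function directly, I would substitute $u = x/\theta_{mom}$, expand $e^{-u}$ (or rather keep it and use the standard integral representation of ${}_1F_1$), and recognize that $\int_0^{z} u^{k_{mom}-1} e^{-u}\, du = \frac{z^{k_{mom}}}{k_{mom}}\, {}_1F_1(k_{mom}; k_{mom}+1; -z)$, which is the Kummer confluent hypergeometric identity for the lower incomplete Gamma function. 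Setting $z = \gamma/\theta_{mom}$ and using $k_{mom}\,\Gamma(k_{mom}) = \Gamma(k_{mom}+1)$ yields (\ref{p_out_mom}) immediately. The moment-matching equations (\ref{theta})–(\ref{k}) are then just the textbook inversion of the relations $\mathbb{E}[X] = k_{mom}\theta_{mom}$ and $\mathrm{Var}(X) = k_{mom}\theta_{mom}^2$ for a Gamma RV, so those require only a line.

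The substantive work is computing the two moments of $\gamma_{IRS}$ from (\ref{snr_err}). I would write $\gamma_{IRS} = \gamma_s\, |Z|^2$ where $Z = |h^{SD}| + \alpha \sum_{n=1}^{N} R_n e^{j\Phi_n}$ with $R_n = |[\vec{h}^{SR}]_n|\,|[\vec{h}^{RD}]_n|$ the double-$\kappa$-$\mu$ magnitudes. Expanding $|Z|^2 = \big(|h^{SD}| + \alpha\sum_n R_n \cos\Phi_n\big)^2 + \big(\alpha \sum_n R_n \sin\Phi_n\big)^2$ and taking expectations, I would exploit: (i) independence of $h^{SD}$, the $R_n$'s, and the $\Phi_n$'s; (ii) $\mathbb{E}[\cos\Phi_n] = \frac{2^b}{\pi}\sin(2^{-b}\pi) =: \eta$ and $\mathbb{E}[\sin\Phi_n]=0$ by symmetry, with analogous closed forms for $\mathbb{E}[\cos^2\Phi_n]$, $\mathbb{E}[\sin^2\Phi_n]$; (iii) the known moments $\mathbb{E}[R_n^k]$ of a product of two independent $\kappa$-$\mu$ RVs (from the reference \cite{Bhargav2018:ProductKappaMu}), which decompose as a product $\mathbb{E}[|[\vec{h}^{SR}]_n|^k]\,\mathbb{E}[|[\vec{h}^{RD}]_n|^k]$, each given by the standard $\kappa$-$\mu$ raw-moment formula in terms of a ${}_1F_1$; and (iv) the $\kappa$-$\mu$ moments of $|h^{SD}|$. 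For the first moment this produces a finite sum; for the second moment, expanding the square of a sum generates cross terms $\mathbb{E}[R_n R_m \cos\Phi_n\cos\Phi_m]$ for $n\neq m$ (which factor into products of first moments) and diagonal terms $\mathbb{E}[R_n^2]$, plus terms coupling $|h^{SD}|$ with the sum — all reducing to products of the elementary moments above. Collecting these gives (\ref{mean_1_general_fad}) and (\ref{mean_2_general_fad}).

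The main obstacle is bookkeeping in the second-moment computation: $\mathbb{E}[|Z|^4]$ expands into four-fold sums over IRS indices, and one must carefully separate the coincident-index from the distinct-index contributions and track which $\Phi$-expectations are $\eta$, $\eta^2$, $\mathbb{E}[\cos^2\Phi]$, etc., so that the $O(N^2)$ and $O(N)$ pieces are correctly grouped; a sign or combinatorial slip there would propagate into $\theta_{mom}$ and $k_{mom}$. A secondary point worth stating carefully is that the Gamma fit is an approximation, not an identity — so the theorem is really a statement about the moment-matched surrogate, and the quality of the approximation is deferred to the numerical section. I would also remark that setting the SD link to zero and $\kappa$-$\mu$ parameters to the Rayleigh special case should recover the expression of \cite{de2021large}, which serves as a consistency check on the moment formulas.
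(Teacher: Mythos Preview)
Your proposal is correct and follows essentially the same route as the paper. The paper's Appendix~\ref{gamma} does exactly the moment computation you outline; the only organizational difference is that it first rewrites $\gamma_{IRS}/\gamma_s$ as a sum of three named blocks $A+B+C$ (with $C$ collecting the off-diagonal cross terms via $\cos(\Phi_i-\Phi_k)$ directly), and then computes $\mathbb{E}[A]$, $\mathbb{E}[B]$, $\mathbb{E}[C]$ and the six second-order pieces $\mathbb{E}[A^2],\mathbb{E}[B^2],\mathbb{E}[C^2],\mathbb{E}[AB],\mathbb{E}[BC],\mathbb{E}[AC]$ --- this is precisely the bookkeeping device that tames the index-coincidence issue you flag as the main obstacle, and the paper does not spell out the Gamma-CDF/${}_1F_1$ step you include, treating it as standard.
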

\begin{proof}
Please refer Appendix \ref{gamma} for the proof.
\end{proof}
Note that the expression in (\ref{p_out_mom}) is very easy to evaluate when compared to the OP approximations proposed in a few of the recent literature including \cite{wang2020study,wang2020chernoff,zhang2019analysis}. Also, the proposed approximations hold well both for the cases of small and large values of $N$, unlike the Gaussian approximations using CLT \cite{tang2020physical,9369134,yang2020secrecy,xu2020ergodic,kudathanthirige2020performance} which holds only for large $N$. \cbstart \textcolor{blue}{Furthermore, the expressions for the first and second moments presented in (\ref{mean_1_general_fad}) and (\ref{mean_2_general_fad}) are general and can be used to evaluate the statistics of SNR for any fading scenario once we have the first four moments of the underlying fading distribution, which are readily available for most of the well-known fading models.} \cbend Also, the proposed CDF of $\gamma_{IRS}$ can be easily used for deriving the expressions of other metrics of interest like the rate \cite{srinivasan2017approximate}. Since we have considered a very general scenario in Theorem \ref{gamma_approx}, we present certain special cases of interest in the following corollaries.  
\color{blue}
\cbstart
\begin{corollary} \label{without_phase_error}
In the absence of phase errors, the OP can be approximated using (\ref{p_out_mom}), where 
(\ref{theta}) and (\ref{k}) can be evaluated using the following expressions for the moments of SNR.
\begin{equation}
    \mathbb{E}[\gamma_{IRS}^{np}] = \gamma_{s}\left( m_2^{SD}+N\alpha^2 m_2^{SR} m_2^{RD}+2N\ \alpha m_1^{SD} m_1^{SR} m_1^{RD}+N(N-1) \alpha^2(m_1^{SR})^2 (m_1^{RD})^2 \right).
    \label{meanK_no_phase_error}
\end{equation}
\begin{align} \label{mean_K_2_no_phase_error}
 &\mathbb{E}[(\gamma^{np}_{IRS})^2] =  \gamma^2_{s}\left\lbrace m_4^{SD}+2\alpha^2N m_2^{SR}m_2^{RD}m_2^{SD}+N \alpha^4 m_4^{SR}m_4^{RD}+N(N-1) \alpha^4 (m_2^{SR})^2 (m_2^{RD})^2 \right. \nonumber \\
 & \left.+ 4\alpha^2 m_2^{SD}\left[N m_2^{SR}m_2^{RD} +N(N-1)(m_2^{SR})^2 (m_2^{RD})^2 \right] + N(N-1)\alpha^4 \left[2 (N-2) (1+p) \right.  \right. \nonumber \\ 
 &\left. \left. m_2^{SR}m_2^{RD}(m_1^{SR})^2(m_1^{RD})^2 +(m_2^{SR})^2(m_2^{RD})^2(1+p^2)+(N-2)(N-3)(m_1^{SR})^4(m_1^{RD})^4\right] \right. \nonumber \\ 
 &\left.+ 4N\alpha \left[ m_3^{SD} m_1^{SR} m_1^{RD} + \alpha^2  m_1^{SD} m_3^{SR} m_3^{RD} +(N-1) \alpha^2 m_1^{SD} m_1^{SR} m_1^{RD} m_2^{SR} m_2^{RD}
 \right] \right. \nonumber \\ 
 &\left.+ 4N(N-1)\alpha^3 \left[ m_1^{SD} \left(m_1^{SR} \right)^3\left(m_1^{RD}  \right)^3
(N-2)  +(1+p) m_1^{SD}  m_2^{SR}m_2^{RD} m_1^{SR} m_1^{RD} 
\right]\right. \nonumber \\ 
 &\left.+ 2 \alpha^2 N(N-1)  m_1^{SR} m_1^{RD}  \left[m_2^{SD} m_1^{SR} m_1^{RD} +\alpha^2 \left(2 m_3^{SR} m_3^{RD}  + m_1^{SR}  m_1^{RD} m_2^{SR} m_2^{RD} (N-2)\right)
 \right]\right\rbrace,
\end{align}
where $m_p^{AB}:=\mathbb{E}\left[|[\vec{h}^{AB}]_n|^p\right]$ for $p \in \{1,2,3,4\}$ and $A,B \in \{S,R,D\}$.
\end{corollary}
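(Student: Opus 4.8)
The plan is to specialize Theorem~\ref{gamma_approx} to the no-phase-error regime, so the only real work is computing the first two moments of $\gamma_{IRS}^{np}$ in (\ref{meanK_no_phase_error}) and (\ref{mean_K_2_no_phase_error}). First I would set $\Phi_n = 0$ in (\ref{snr_err}), so that $\gamma_{IRS}^{np} = \gamma_s \big( |h^{SD}| + \alpha \sum_{n=1}^N |[\vec{h}^{SR}]_n|\,|[\vec{h}^{RD}]_n| \big)^2$. Writing $X_n := |[\vec{h}^{SR}]_n|\,|[\vec{h}^{RD}]_n|$ and $Y := |h^{SD}|$, the SNR is $\gamma_s (Y + \alpha \sum_n X_n)^2$, a square of a sum of $N+1$ independent nonnegative RVs. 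The $X_n$ are i.i.d.\ double-$\kappa$-$\mu$ magnitudes and $Y$ is an independent $\kappa$-$\mu$ magnitude. By independence, $\mathbb{E}[X_n^p] = m_p^{SR} m_p^{RD}$ and $\mathbb{E}[Y^p] = m_p^{SD}$ for $p \in \{1,2,3,4\}$, where $m_p^{AB}$ are the raw moments of the underlying $\kappa$-$\mu$ fading amplitudes (known in closed form, e.g.\ via ${}_1F_1$, as already used for $\mu$ in Theorem~\ref{Thm:LIS_DimReduc}).

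Next I would expand the relevant powers of the sum. For the first moment, expand $(Y + \alpha S)^2 = Y^2 + 2\alpha Y S + \alpha^2 S^2$ with $S = \sum_{n=1}^N X_n$; using $\mathbb{E}[S] = N m_1^{SR} m_1^{RD}$ and $\mathbb{E}[S^2] = N\, m_2^{SR} m_2^{RD} + N(N-1) (m_1^{SR})^2 (m_1^{RD})^2$ (splitting the double sum into diagonal and off-diagonal terms, the latter factoring by independence), and $\mathbb{E}[Y]\mathbb{E}[S] = m_1^{SD}\, N m_1^{SR} m_1^{RD}$, collects exactly to (\ref{meanK_no_phase_error}). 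For the second moment, expand $(Y+\alpha S)^4 = \sum_{j=0}^4 \binom{4}{j} Y^{4-j} \alpha^j S^j$ and take expectations termwise, which requires $\mathbb{E}[S^3]$ and $\mathbb{E}[S^4]$. These are obtained by expanding $(\sum_n X_n)^k$ via the multinomial theorem and grouping index-coincidence patterns: $\mathbb{E}[S^3]$ splits into the all-equal term ($N\,\mathbb{E}[X^3]$), the two-equal term ($3N(N-1)\mathbb{E}[X^2]\mathbb{E}[X]$), and the all-distinct term ($N(N-1)(N-2)\mathbb{E}[X]^3$); $\mathbb{E}[S^4]$ similarly splits into five partition types with combinatorial coefficients $1,\,\binom{4}{2}=6,\,\binom{4}{2}\cdot\tfrac12\cdot\text{(ordering)},\,\ldots$ down to $N(N-1)(N-2)(N-3)\mathbb{E}[X]^4$. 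Substituting $\mathbb{E}[X^p] = m_p^{SR} m_p^{RD}$ throughout, multiplying by $\gamma_s^2$, and collecting terms by powers of $\alpha$ should reproduce (\ref{mean_K_2_no_phase_error}).

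The main obstacle is purely bookkeeping: getting the combinatorial coefficients in $\mathbb{E}[S^3]$ and $\mathbb{E}[S^4]$ right (the set-partition / falling-factorial structure), and then matching the author's particular grouping — note the appearance of the symbol $p$ inside (\ref{mean_K_2_no_phase_error}), which (given it cannot mean the transmit power here) presumably denotes a correction/correlation index from the double-$\kappa$-$\mu$ product structure of $X_n$ and must be tracked consistently; I would reconcile this against the general moment formulas (\ref{mean_1_general_fad})–(\ref{mean_2_general_fad}) referenced in Theorem~\ref{gamma_approx}. Once both moments are in hand, the corollary follows immediately: substitute (\ref{meanK_no_phase_error}) and (\ref{mean_K_2_no_phase_error}) into (\ref{theta}) and (\ref{k}) to get $\theta_{mom}, k_{mom}$, and then (\ref{p_out_mom}) gives $P_{outage}$ with no further argument needed, since the Gamma-CDF form was already established in Theorem~\ref{gamma_approx}. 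I would also briefly check the sanity limit $\alpha = 0$ (pure SD link), where (\ref{meanK_no_phase_error}) collapses to $\gamma_s m_2^{SD}$ and (\ref{mean_K_2_no_phase_error}) to $\gamma_s^2 m_4^{SD}$, recovering the moments of a scaled single $\kappa$-$\mu$ power — a quick consistency test that the expansions were done correctly.
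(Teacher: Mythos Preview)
Your approach is correct but takes a longer route than the paper. The paper's proof is a single line: equations (\ref{meanK_no_phase_error}) and (\ref{mean_K_2_no_phase_error}) are obtained from the general moment expressions (\ref{mean_1_general_fad}) and (\ref{mean_2_general_fad}) by letting $b\to\infty$. Since $s=\frac{2^b}{\pi}\sin(\pi/2^b)\to 1$ and $p=\frac{2^b}{2\pi}\sin(2\pi/2^b)\to 1$ in that limit, the corollary follows by substitution with no fresh combinatorics. Your plan instead recomputes the moments from scratch by setting $\Phi_n=0$ and expanding $(Y+\alpha S)^2$ and $(Y+\alpha S)^4$ via multinomial partitioning; this reproduces exactly the computation done in Appendix~\ref{gamma} (with all the trigonometric expectations replaced by $1$), so it is valid but redundant once Theorem~\ref{gamma_approx} is in hand.

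One point where you went astray: the stray symbol $p$ inside (\ref{mean_K_2_no_phase_error}) is \emph{not} a correlation index arising from the double-$\kappa$-$\mu$ product structure. It is the same quantization parameter $p=\frac{2^b}{2\pi}\sin(2\pi/2^b)$ defined for the phase-error case (see Corollary~\ref{without_sd}); in the no-phase-error limit it should simply read $p=1$, so that $(1+p)=2$ and $(1+p^2)=2$. Its survival in the displayed formula is a leftover from the general expression rather than new structure you need to derive. If you pursue your direct computation you will obtain those factors of $2$ naturally (from $\mathbb{E}[\cos^2(0)]=1$, etc.), and reconciling against (\ref{mean_1_general_fad})--(\ref{mean_2_general_fad}) as you suggest will confirm this.
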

\begin{proof}
Equations (\ref{meanK_no_phase_error}) and (\ref{mean_K_2_no_phase_error}) are obtained from (\ref{mean_1_general_fad}) and (\ref{mean_2_general_fad}) by substituting $b \rightarrow \infty$.
\end{proof}
\begin{corollary} 
\label{without_sd}
When the SD link is in a permanent outage, the OP can be approximated using (\ref{p_out_mom}) where 
equations (\ref{theta}) and (\ref{k}) can be evaluated using the following expressions:
\begin{equation}
    \mathbb{E}[\gamma_{IRS}^{ndl}] = \gamma_{s}N \alpha^2 \left(  m_2^{SR} m_2^{RD}+(N-1) (m_1^{SR})^2 (m_1^{RD})^2 s^2\right).
    \label{meanK_no_sd}
\end{equation}
\begin{equation}
    \begin{aligned}
      &  \mathbb{E}[\left(\gamma_{IRS}^{ndl}\right)^2] = N \alpha^4 \left \lbrace m_4^{SR}m_4^{RD}+(N-1) (m_2^{SR})^2 (m_2^{RD})^2+(N-1) \left[ 2(N-2)m_2^{SR}m_2^{RD} \right. \right.\\ & \left. \left.  s^2\left(1+p\right)(m_1^{SR})^2(m_1^{RD})^2+(m_2^{SR})^2(m_2^{RD})^2(1+p^2)+s^4(N-2)(N-3)(m_1^{SR})^4(m_1^{RD})^4\right]+\right.\\& \left. 4(N-1)  m_3^{SR} m_3^{RD} m_1^{SR} m_1^{RD} s^2+2(N-1)(N-2)  (m_1^{SR})^2  (m_1^{RD})^2 m_2^{SR} m_2^{RD} s^2 \right\rbrace.
    \label{mean_K_2_no_sd}
    \end{aligned}
\end{equation}
where $s=\frac{2^b}{\pi}\sin\left(\frac{\pi}{2^b}\right), p=\frac{2^b}{2\pi }\sin\left(\frac{2\pi}{2^b}\right)$.
\end{corollary}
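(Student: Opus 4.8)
When the SD link is in permanent outage I set $\big|h^{SD}\big|\equiv 0$ in (\ref{snr_err}), so that $\gamma_{IRS}^{ndl}=\gamma_{s}\alpha^{2}\big|\sum_{n=1}^{N}R_{n}e^{j\Phi_{n}}\big|^{2}$ with $R_{n}:=\big|[\vec{h}^{SR}]_{n}\big|\,\big|[\vec{h}^{RD}]_{n}\big|$. This is precisely the general received SNR of (\ref{snr_err}) with the $\big|h^{SD}\big|$ term removed, so at the level of moments it coincides with the general expression when every SD-channel moment is set to zero. The shortest proof is therefore to take (\ref{mean_1_general_fad}) and (\ref{mean_2_general_fad}) from Theorem \ref{gamma_approx}, substitute $m_{p}^{SD}=0$ for $p\in\{1,2,3,4\}$, and note that every term carrying an $m_{p}^{SD}$ factor drops out, leaving exactly (\ref{meanK_no_sd}) and (\ref{mean_K_2_no_sd}). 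For completeness I would also give the self-contained computation sketched next.

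\textbf{First moment.} Using independence of the channel magnitudes from the phase errors and mutual independence of the $\Phi_{n}$, $\mathbb{E}[\gamma_{IRS}^{ndl}]=\gamma_{s}\alpha^{2}\sum_{n,m}\mathbb{E}[R_{n}R_{m}]\,\mathbb{E}[e^{j(\Phi_{n}-\Phi_{m})}]$. The $n=m$ terms give $N\,\mathbb{E}[R_{1}^{2}]=N\,m_{2}^{SR}m_{2}^{RD}$ (the two hops being independent), and the $n\neq m$ terms give $N(N-1)(m_{1}^{SR})^{2}(m_{1}^{RD})^{2}\big|\mathbb{E}[e^{j\Phi_{1}}]\big|^{2}$, which equals $N(N-1)(m_{1}^{SR})^{2}(m_{1}^{RD})^{2}s^{2}$ since $\mathbb{E}[e^{j\Phi_{1}}]=\tfrac{2^{b}}{\pi}\sin(\tfrac{\pi}{2^{b}})=s$ for $\Phi_{1}\sim\mathcal{U}[-2^{-b}\pi,2^{-b}\pi]$. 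This is (\ref{meanK_no_sd}).

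\textbf{Second moment.} Write $\big(\gamma_{IRS}^{ndl}\big)^{2}=\gamma_{s}^{2}\alpha^{4}\sum_{n,m,k,l}\mathbb{E}[R_{n}R_{m}R_{k}R_{l}]\,\mathbb{E}[e^{j(\Phi_{n}+\Phi_{m}-\Phi_{k}-\Phi_{l})}]$ and split the sum according to the pattern of coincidences among $n,m,k,l$. For each pattern the magnitude expectation factorises into a product of $m_{p}^{SR}m_{p}^{RD}$ over the block sizes $p$, while the phase expectation factorises over the distinct index values into factors $\mathbb{E}[e^{jc\Phi_{1}}]$ with $c$ the signed multiplicity of that value: this factor is $1$ when $c=0$, $s=\tfrac{2^{b}}{\pi}\sin(\tfrac{\pi}{2^{b}})$ when $|c|=1$, and $p=\tfrac{2^{b}}{2\pi}\sin(\tfrac{2\pi}{2^{b}})$ when $|c|=2$. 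Multiplying each pattern by the number of index assignments realising it ($N$, $N(N-1)$, $N(N-1)(N-2)$ or $N(N-1)(N-2)(N-3)$), summing and regrouping by powers of $N$, $s$ and $p$ reproduces (\ref{mean_K_2_no_sd}).

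\textbf{Main obstacle.} Everything above is elementary except the bookkeeping in the fourth-moment step: one must enumerate all coincidence patterns of $(n,m,k,l)$, attach to each the correct magnitude-moment product, phase factor and combinatorial weight, and then collapse the result into the compact form (\ref{mean_K_2_no_sd}). In particular, the mixed factors $s^{2}$, $p^{2}$ and the bracketed combinations $1+p$, $1+p^{2}$ there arise only after collecting several distinct patterns that share the same magnitude-moment product but differ in phase factor — for example $n=m\neq k=l$ contributes $\mathbb{E}[e^{j2\Phi_{1}}]\mathbb{E}[e^{-j2\Phi_{1}}]=p^{2}$ while $n=k\neq m=l$ contributes $1$, both with product $(m_{2}^{SR})^{2}(m_{2}^{RD})^{2}$. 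Because of this, the cleanest presentation is not to redo the expansion at all but simply to invoke Theorem \ref{gamma_approx} and observe that deleting the SD link is the substitution $m_{p}^{SD}\to 0$, which annihilates exactly the terms containing an $m_{p}^{SD}$ factor.
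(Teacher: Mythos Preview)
Your proposal is correct and matches the paper's approach in substance: the paper writes down the SNR (\ref{snr_err_ndl}) for the no-SD case (after redefining $\theta^{opt}$, since $\arg(h^{SD})$ is no longer available) and then invokes ``steps similar to Appendix \ref{gamma}'', which amounts exactly to your substitution $m_{p}^{SD}\to 0$ in (\ref{mean_1_general_fad})--(\ref{mean_2_general_fad}). The one point you pass over is that redefinition of the optimal phase, but since the phase-error model $\Phi_{n}\sim\mathcal{U}[-2^{-b}\pi,2^{-b}\pi]$ is retained unchanged this does not affect the moment computation and your shortcut is valid.
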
 
\cbend 
\color{black}
\begin{proof}
When the SD link is in a permanent outage, the phase error at the $n$-th reflector element is given by $\Phi_n:=\theta_n-\theta_{opt}$
where $\theta_{opt}=-\text{arg}\left(\left[\vec{h}^{SR}\right]_{n}\left[\vec{h}^{RD}\right]_{n}\right)$. Here also, we model the phase error as a uniform RV, \textit{i.e.}, $\Phi_n \sim \mathcal{U}\left[-2^{-b}\pi,2^{-b}\pi\right]$. In this case, the SNR expression given in equation (\ref{snr_err}) can be modified as follows:
\begin{equation}
    \gamma_{IRS}^{ndl} =\gamma_{s} \Big| \alpha \sum\limits_{n=1}^{N}\big\vert\left[\vec{h}^{SR}\right]_{n}\big\vert \big\vert\left[\vec{h}^{RD}\right]_{n}\big\vert e^{j\Phi_{n}}\Big|^2.
    \label{snr_err_ndl}
\end{equation}
Next, we follow the steps similar to Appendix \ref{gamma} and arrive at (\ref{meanK_no_sd}) and (\ref{mean_K_2_no_sd}).
\end{proof}
\textcolor{blue}{Note that Corollary \ref{without_sd} recover existing results presented in  \cite[(9)]{de2021large} and \cite[(10)]{de2021large} for the values of  $\kappa_{SD}=\kappa_{SR}=\kappa_{RD}=0$ and $\mu_{SD}=\mu_{SR}=\mu_{RD}=1$, (i.e all links are Rayleigh channels).}
\color{blue}
\cbstart 
\begin{corollary}
\label{without_sd_without_phase_corollary}
In the absence of phase errors and when the SD link is in a permanent outage, the OP can be approximated using (\ref{p_out_mom}) where 
equations (\ref{theta}) and (\ref{k}) can be evaluated using the following expressions:
\begin{equation}
    \mathbb{E}[\gamma_{IRS}^{npdl}] = \gamma_{s}N \alpha^2 \left(  m_2^{SR} m_2^{RD}+(N-1) (m_1^{SR})^2 (m_1^{RD})^2 \right).
    \label{meanK_no_phase_error_no_sd_link}
\end{equation}
\begin{align}
& \mathbb{E}[(\gamma^{npdl}_{IRS})^2] =  N \alpha^4 \left\lbrace  m_4^{SR}m_4^{RD}+(N-1)  (m_2^{SR})^2 (m_2^{RD})^2+(N-1) \left[ 4(N-2)m_2^{SR}m_2^{RD} \right. \right. \nonumber \\ & \left. \left. (m_1^{SR})^2(m_1^{RD})^2+2(m_2^{SR})^2(m_2^{RD})^2+(N-2)(N-3)(m_1^{SR})^4(m_1^{RD})^4\right]+4(N-1)  m_3^{SR} m_3^{RD} \right.  \nonumber \\& \left.  m_1^{SR} m_1^{RD} +2(N-1)(N-2)  (m_1^{SR})^2  (m_1^{RD})^2 m_2^{SR} m_2^{RD} \right\rbrace.
  \label{mean_K_2_no_phase_error_no_sd_link}
\end{align}
\end{corollary}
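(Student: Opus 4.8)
The plan is to derive (\ref{meanK_no_phase_error_no_sd_link}) and (\ref{mean_K_2_no_phase_error_no_sd_link}) as a double specialization of Corollary \ref{without_sd}: set the phase-error to zero by letting $b \to \infty$, exactly as was done in Corollary \ref{without_phase_error}. Concretely, as $b \to \infty$ we have $\frac{\pi}{2^b} \to 0$, so $s = \frac{2^b}{\pi}\sin\!\left(\frac{\pi}{2^b}\right) \to 1$ and $p = \frac{2^b}{2\pi}\sin\!\left(\frac{2\pi}{2^b}\right) \to 1$. Substituting $s = 1$ and $p = 1$ into the moment expressions (\ref{meanK_no_sd}) and (\ref{mean_K_2_no_sd}) of Corollary \ref{without_sd} should immediately yield (\ref{meanK_no_phase_error_no_sd_link}) and (\ref{mean_K_2_no_phase_error_no_sd_link}). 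For the first moment this is transparent: $N\alpha^2\!\left(m_2^{SR}m_2^{RD} + (N-1)(m_1^{SR})^2(m_1^{RD})^2 s^2\right)$ collapses to the claimed expression with $s^2 = 1$. For the second moment one verifies term by term: the $(1+p)$ factor becomes $2$, the $(1+p^2)$ factor becomes $2$, and all $s^2$, $s^4$ factors become $1$, reproducing every summand in (\ref{mean_K_2_no_phase_error_no_sd_link}).

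Alternatively — and this is the route I would actually write out to keep the corollary self-contained — I would start from the no-SD-link, no-phase-error SNR, which is (\ref{snr_err_ndl}) with $\Phi_n \equiv 0$, i.e. $\gamma_{IRS}^{npdl} = \gamma_s \alpha^2 \left(\sum_{n=1}^N \big|[\vec{h}^{SR}]_n\big|\,\big|[\vec{h}^{RD}]_n\big|\right)^2$. Writing $Z_n := \big|[\vec{h}^{SR}]_n\big|\,\big|[\vec{h}^{RD}]_n\big|$, these are i.i.d.\ across $n$ (products of independent $\kappa$–$\mu$ magnitudes), and $\mathbb{E}[Z_n^p] = m_p^{SR} m_p^{RD}$ by independence of the SR and RD links. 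Then $\mathbb{E}[\gamma_{IRS}^{npdl}] = \gamma_s \alpha^2\,\mathbb{E}\!\big[(\sum_n Z_n)^2\big]$ and $\mathbb{E}[(\gamma_{IRS}^{npdl})^2] = \gamma_s^2 \alpha^4\,\mathbb{E}\!\big[(\sum_n Z_n)^4\big]$, so the whole computation reduces to expanding the second and fourth power of a sum of $N$ i.i.d.\ nonnegative RVs. The second-power expansion gives $N\,\mathbb{E}[Z^2] + N(N-1)\,\mathbb{E}[Z]^2$, which is (\ref{meanK_no_phase_error_no_sd_link}). The fourth-power expansion is the multinomial $\sum \binom{4}{i_1,i_2,\dots}$ grouped by partition type of the exponent vector — partitions $4$, $3{+}1$, $2{+}2$, $2{+}1{+}1$, $1{+}1{+}1{+}1$ — giving respectively $N\,\mathbb{E}[Z^4]$, $4N(N-1)\,\mathbb{E}[Z^3]\mathbb{E}[Z]$, $3N(N-1)\,\mathbb{E}[Z^2]^2$, $6N(N-1)(N-2)\,\mathbb{E}[Z^2]\mathbb{E}[Z]^2$, and $N(N-1)(N-2)(N-3)\,\mathbb{E}[Z]^4$. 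Substituting $\mathbb{E}[Z^p] = m_p^{SR}m_p^{RD}$ and collecting the $(N-1)$ factor reproduces (\ref{mean_K_2_no_phase_error_no_sd_link}) after checking the coefficient bookkeeping matches the way the terms are grouped there (e.g.\ the $2(m_2^{SR})^2(m_2^{RD})^2$ inside the bracket times $(N-1)$ plus the standalone $(N-1)(m_2^{SR})^2(m_2^{RD})^2$ accounts for the full $3N(N-1)\mathbb{E}[Z^2]^2$).

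Once the two moments are in hand, the OP expression (\ref{p_out_mom}) follows verbatim from Theorem \ref{gamma_approx}: feed $\mathbb{E}[\gamma_{IRS}^{npdl}]$ and $\mathbb{E}[(\gamma_{IRS}^{npdl})^2]$ into (\ref{theta}) and (\ref{k}) to get $\theta_{mom}$ and $k_{mom}$, and the Gamma-CDF form of the OP is unchanged. So no separate outage-probability argument is needed — the corollary is entirely a moment computation plus an appeal to the already-proved theorem.

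The only real obstacle is the fourth-moment bookkeeping: making sure the multinomial coefficients for each partition type are correct and that they are regrouped exactly into the form printed in (\ref{mean_K_2_no_phase_error_no_sd_link}), since that equation distributes the $2{+}2$ contribution partly inside and partly outside the $(N-1)[\cdots]$ bracket. I would double-check this either by the $b\to\infty$ consistency route above (which sidesteps the combinatorics entirely, since Corollary \ref{without_sd} already did them with the $s,p$ factors) or by a small sanity check at $N=1$ (where $\gamma_{IRS}^{npdl} = \gamma_s\alpha^2 Z_1^2$, so all moments must reduce to $\gamma_s^2\alpha^4 m_4^{SR}m_4^{RD}$ for the second moment, killing every $(N-1)$ term). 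Everything else is mechanical.
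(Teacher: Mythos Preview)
Your proposal is correct, and your first route---substituting $s\to 1$ and $p\to 1$ (equivalently $b\to\infty$) into the moment expressions (\ref{meanK_no_sd}) and (\ref{mean_K_2_no_sd}) of Corollary \ref{without_sd}---is exactly the paper's proof, which consists of the single sentence ``obtained from (\ref{meanK_no_sd}) and (\ref{mean_K_2_no_sd}) by substituting $s,p \rightarrow 1$.'' Your alternative direct multinomial expansion is also valid and a useful independent check, but it goes well beyond what the paper actually writes.
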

\color{black}
\cbend 
\begin{proof}
Equations (\ref{meanK_no_phase_error_no_sd_link}) and (\ref{mean_K_2_no_phase_error_no_sd_link}) are obtained from (\ref{meanK_no_sd}) and (\ref{mean_K_2_no_sd}) by substituting 
$s,p \rightarrow 1$.
\end{proof}
\textcolor{blue}{Note that for the values of $\kappa_{SD}=\kappa_{SR}=\kappa_{RD}=0$ and $\mu_{SD}=\mu_{SR}=\mu_{RD}=1$, (i.e all links are Rayleigh channels), Corollary  \ref{without_sd_without_phase_corollary} recover existing results presented in  \cite[(14)]{de2021large} and \cite[(15)]{de2021large}, respectively.}
\cbend 
\par Although the uni-variate approximation is close to the simulated OP, but it's not possible for us to derive it for all possible cases considered in this paper. The gamma approximation based on moment matching is simple and could be applied to all considered scenarios but was not as tight as the uni-variate approximation. In the next sub-section, we look at another gamma approximation which has the minimum KL divergence with respect to the exact distribution of SNR.
\subsection{KL divergence minimization
} \label{sec_kl_div_min}
In this section, we identify the parameters of a Gamma distribution such that the KL divergence between the resulting RV and the exact SNR is the least among all possible Gamma distributions. Using this result, the OP at $\mathbf{D}$ is given by the following theorem:
\begin{theorem} \label{thm_kl_div_min}
The OP for a threshold $\gamma$ at the node $\mathbf{D}$ is given by 
\begin{equation}
P_{outage}=\frac{\gamma^{k_{kl}}}{\theta_{kl}^{k_{kl}} \Gamma\left(k_{kl}+1\right)}{ }_{1} F_{1}\left(k_{kl}, k_{kl}+1, \frac{-\gamma}{\theta_{kl}}\right),
\label{p_out_kl}
\end{equation}
where $k_{kl}$ and $\theta_{kl}$ are obtained by solving the following two equations:
\begin{align}
\mathbb{E}[\log(\gamma_{IRS})] & = \log(\theta_{kl}) + \psi(k_{kl}), \label{log_e_11} \\
\mathbb{E}[\gamma_{IRS}] &= k_{kl} \times \theta_{kl}.  \label{e_11}
\end{align}
Here, ${ }_{1}F_{1}(.,.,)$ is the confluent hypergeometric function of the first kind \cite{confluent} and $\psi(.)$ is the digamma function \cite{digamma}.
\end{theorem}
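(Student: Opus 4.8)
The plan is to establish the Gamma distribution that minimizes the KL divergence from the true SNR distribution by a standard variational argument, and then to reduce the resulting CDF to the confluent hypergeometric form in (\ref{p_out_kl}). Let $f_{\gamma_{IRS}}$ denote the exact density of $\gamma_{IRS}$ and let $g_{k,\theta}$ denote the Gamma density with shape $k$ and scale $\theta$, i.e. $g_{k,\theta}(x) = x^{k-1} e^{-x/\theta} / (\Gamma(k)\theta^{k})$. First I would write out the KL divergence
\begin{equation}
D\!\left(f_{\gamma_{IRS}} \,\|\, g_{k,\theta}\right) = \int_{0}^{\infty} f_{\gamma_{IRS}}(x) \log\frac{f_{\gamma_{IRS}}(x)}{g_{k,\theta}(x)}\, dx = -H\!\left(f_{\gamma_{IRS}}\right) - \mathbb{E}\!\left[\log g_{k,\theta}(\gamma_{IRS})\right],
\end{equation}
where $H(\cdot)$ is the differential entropy of the true law and does not depend on $(k,\theta)$. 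Substituting the Gamma density, $\mathbb{E}[\log g_{k,\theta}(\gamma_{IRS})] = (k-1)\mathbb{E}[\log \gamma_{IRS}] - \mathbb{E}[\gamma_{IRS}]/\theta - \log\Gamma(k) - k\log\theta$, so minimizing the KL divergence is equivalent to maximizing $J(k,\theta) := (k-1)\mathbb{E}[\log\gamma_{IRS}] - \mathbb{E}[\gamma_{IRS}]/\theta - \log\Gamma(k) - k\log\theta$ over $k>0$, $\theta>0$.

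Next I would take the stationarity conditions. Differentiating $J$ with respect to $\theta$ gives $\partial J/\partial\theta = \mathbb{E}[\gamma_{IRS}]/\theta^{2} - k/\theta = 0$, which yields $\mathbb{E}[\gamma_{IRS}] = k\theta$, i.e. (\ref{e_11}). Differentiating with respect to $k$ gives $\partial J/\partial k = \mathbb{E}[\log\gamma_{IRS}] - \psi(k) - \log\theta = 0$, i.e. (\ref{log_e_11}), where $\psi$ is the digamma function. I would then note these two equations have a unique solution $(k_{kl},\theta_{kl})$: eliminating $\theta = \mathbb{E}[\gamma_{IRS}]/k$ from the second equation reduces to $\log k - \psi(k) = \log\mathbb{E}[\gamma_{IRS}] - \mathbb{E}[\log\gamma_{IRS}]$, and the left-hand side $\log k - \psi(k)$ is strictly decreasing on $(0,\infty)$ from $+\infty$ to $0$, while the right-hand side is nonnegative by Jensen's inequality (since $\log$ is concave), so a unique root exists. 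A brief check of the second-order conditions (or concavity of $J$ in the natural parameters) confirms this stationary point is the global maximizer.

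Finally I would convert the CDF of the Gamma$(k_{kl},\theta_{kl})$ random variable to the stated form. Writing $P_{outage} = \mathbb{P}[\text{Gamma}(k_{kl},\theta_{kl}) < \gamma] = \gamma(k_{kl}, \gamma/\theta_{kl})/\Gamma(k_{kl})$ with $\gamma(\cdot,\cdot)$ the lower incomplete Gamma function, I would invoke the standard identity $\gamma(a,z) = a^{-1} z^{a}\, {}_{1}F_{1}(a; a+1; -z)$ (equivalently $\gamma(a,z) = a^{-1}z^{a}e^{-z}\,{}_{1}F_{1}(1;a+1;z)$), which after dividing by $\Gamma(k_{kl})$ and using $k_{kl}\Gamma(k_{kl}) = \Gamma(k_{kl}+1)$ gives exactly (\ref{p_out_kl}). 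The only mild subtlety — and the part I would be most careful about — is the justification that the stationary point is a genuine global minimum of the KL divergence (existence, uniqueness, and the boundary behaviour as $k\to 0$ or $k\to\infty$ and $\theta\to 0$ or $\theta\to\infty$); the rest is bookkeeping with known special-function identities, and the moment $\mathbb{E}[\gamma_{IRS}]$ is already available from (\ref{mean_1_general_fad}), leaving $\mathbb{E}[\log\gamma_{IRS}]$ to be supplied (numerically or in closed form) as the one new ingredient.
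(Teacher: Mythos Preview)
Your proposal is correct and follows essentially the same route as the paper: write the KL divergence, reduce the minimization to maximizing the expected log-density of the Gamma, and differentiate in $(k,\theta)$ to obtain (\ref{log_e_11}) and (\ref{e_11}). Your treatment is in fact more complete than the paper's, since you also supply the uniqueness argument via the monotonicity of $\log k-\psi(k)$ and the explicit lower-incomplete-Gamma to ${}_{1}F_{1}$ identity, neither of which the paper spells out.
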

\begin{proof}
Please refer Appendix \ref{proof_kl_min} for the proof. 
\end{proof}
The derivation of the exact expression for evaluating $\mathbb{E}[\log(\gamma_{IRS})]$ is complicated and hence we proceed with the following approximation for the same \cite[(11)]{srinivasan2017approximate}:
\begin{equation}
\mathbb{E}[\log \gamma_{IRS}] \approx \log \left(\mathbb{E}[\gamma_{IRS}] \right)-\frac{1}{2} \frac{\mathbb{E}[\gamma_{IRS}^2]-\mathbb{E}^2[\gamma_{IRS}] }{\mathbb{E}^2[\gamma_{IRS}]}.
\label{approx_mean_log}
\end{equation}
Given that we can compute the first and second moments of $\gamma_{IRS}$ using (\ref{mean_1_general_fad}) and (\ref{mean_2_general_fad}), we can easily evaluate (\ref{approx_mean_log}). Then we can solve for the parameters $k_{kl}$ and $\theta_{kl}$ using the solvers available in any mathematical software such as \textit{Matlab, Mathematica, or Octave}. Thus, the method of KL divergence minimization also provides us with a simple expression for the OP that is very amenable for computation and further analysis.
\begin{corollary}
For the special cases without SD link or phase error or both, the OP for a threshold $\gamma$ is given by (\ref{p_out_kl}). Corresponding values of scale and shape parameters can be solved using equations (\ref{log_e_11}) and (\ref{e_11}), and the corresponding moments can be evaluated using \ref{meanK_no_phase_error}-\ref{mean_K_2_no_phase_error_no_sd_link}.  
\end{corollary}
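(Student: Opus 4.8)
The plan is to show that this corollary is essentially a bookkeeping statement: it asserts that the KL-divergence based Gamma approximation of Theorem \ref{thm_kl_div_min} specializes, without any new analysis, to the three degenerate scenarios (no SD link, no phase error, and neither). First I would observe that the proof of Theorem \ref{thm_kl_div_min} (in Appendix \ref{proof_kl_min}) does not use any specific feature of the general SNR expression (\ref{snr_err}) beyond the fact that $\gamma_{IRS}$ is a positive random variable whose first two moments exist; hence the variational argument identifying the minimum-KL Gamma parameters via (\ref{log_e_11})--(\ref{e_11}) goes through verbatim for each of the modified SNR random variables $\gamma_{IRS}^{np}$, $\gamma_{IRS}^{ndl}$, and $\gamma_{IRS}^{npdl}$. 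The only thing that changes across the three cases is the numerical value of the moments that feed into those two equations.

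Next I would handle the moment inputs. For the $\mathbb{E}[\gamma_{IRS}]$ appearing in (\ref{e_11}) we already have closed forms in each degenerate regime: (\ref{meanK_no_phase_error}) for no phase error, (\ref{meanK_no_sd}) for no SD link, and (\ref{meanK_no_phase_error_no_sd_link}) for both, each established in Corollaries \ref{without_phase_error}, \ref{without_sd}, and \ref{without_sd_without_phase_corollary}. For $\mathbb{E}[\log \gamma_{IRS}]$ in (\ref{log_e_11}) we use the same surrogate (\ref{approx_mean_log}), which is itself purely a function of the first and second moments; substituting the appropriate pair from (\ref{meanK_no_phase_error})--(\ref{mean_K_2_no_phase_error_no_sd_link}) gives the corresponding $\mathbb{E}[\log \cdot]$ estimate in each case. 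Thus $(k_{kl},\theta_{kl})$ in each scenario is obtained by solving (\ref{log_e_11})--(\ref{e_11}) with the scenario-specific moments, and plugging those into (\ref{p_out_kl}) yields the stated OP expression.

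Concretely, I would write the argument as: (i) restate that in each degenerate case the received SNR is still a nonnegative RV with finite first and second moments (those moments being exactly the quantities computed in the earlier corollaries); (ii) invoke Theorem \ref{thm_kl_div_min} applied to that RV, noting its proof nowhere exploits the presence of the SD term or the phase-error term, so (\ref{p_out_kl}) with (\ref{log_e_11})--(\ref{e_11}) holds mutatis mutandis; (iii) substitute the relevant moment expressions from Corollaries \ref{without_phase_error}, \ref{without_sd}, \ref{without_sd_without_phase_corollary} into the approximation (\ref{approx_mean_log}) to close the system of equations. Since everything reduces to reuse of prior results, the corollary follows.

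The only genuinely nontrivial point — and the one I would flag rather than belabor — is justifying claim (ii): that the KL-minimization derivation is structurally independent of which SNR model we started from. I expect this to be immediate from inspecting Appendix \ref{proof_kl_min}, because the optimality conditions for the best-fitting Gamma depend on the target distribution only through $\mathbb{E}[\log \gamma_{IRS}]$ and $\mathbb{E}[\gamma_{IRS}]$; no distributional form of $\gamma_{IRS}$ is assumed. If one wanted to be fully rigorous one would also note that the surrogate (\ref{approx_mean_log}) is a second-order Taylor expansion of $\log(\cdot)$ about the mean and is valid whenever the variance-to-mean-square ratio is small, which holds in all three regimes for the parameter ranges of interest. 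Beyond that, the corollary is a direct corollary in the literal sense, requiring no new estimates.
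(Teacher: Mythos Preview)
Your proposal is correct and matches the paper's approach: the paper treats this corollary as immediate, providing no explicit proof, since it is indeed just the observation that Theorem~\ref{thm_kl_div_min} and the surrogate (\ref{approx_mean_log}) depend on $\gamma_{IRS}$ only through its first two moments, which for the degenerate scenarios are already supplied by Corollaries~\ref{without_phase_error}--\ref{without_sd_without_phase_corollary}. Your write-up is more detailed than what the paper offers, but the substance is the same.
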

Note that the approximation for the expectation of the logarithm of SNR provided in (\ref{approx_mean_log}) makes use of only the first and second moment of $\gamma_{IRS}$ and this approximation does not hold equally well throughout the support of $\gamma_{IRS}$. This was particularly observed in the simulations of certain special cases like scenarios without the SD link. Hence, we propose the following method to circumvent this issue for scenarios without an SD link and no phase error. In such a case we have, 
\begin{equation}
    \gamma_{IRS}^{npdl} = \gamma_{s} \left(  \alpha \sum\limits_{n=1}^{N}\big\vert\left[\vec{h}^{SR}\right]_{n}\big\vert\big\vert\left[\vec{h}^{RD}\right]_{n}\big\vert \right)^2.
\end{equation}
\cbstart
\textcolor{blue}{Now, we can approximate the double $\kappa-\mu$ RV $\left[\gamma_{dr}\right]_{n} :=\big\vert\left[\vec{h}^{SR}\right]_{n}\big\vert\big\vert\left[\vec{h}^{RD}\right]_{n}\big\vert$ as a Gamma RV with shape parameter and scale parameter $k_{kl,dr}$ and $\theta_{kl,dr}$ respectively using the method of KL divergence minimisation\footnote{The authors of \cite{wang2020study} also approximates double Rayleigh RVs as a Gamma RV, but using the method of moment matching.}. In this case, the expectation of the logarithm of the RV $\left[\gamma_{dr}\right]_{n}$ has a closed-form expression (as given in \ref{mean_log_dr}) and hence we can avoid the approximation used in (\ref{approx_mean_log}). } \cbend 
\begin{equation}
\begin{aligned}
& \mathbb{E}\left[\log\left(\left[\gamma_{dr}\right]_{n}\right) \right] = \frac{-\kappa_{SR}^{\frac{1}{2}-\frac{\mu_{SR}}{2}}}{2} (1+\kappa_{SR})^{\frac{1+\mu_{SR}}{2}} \mu_{SR}\left(\frac{(1+\kappa_{SR}) \mu_{SR}}{\Omega_{sr}}\right)^{-\mu_{SR}}\left(\frac{(\kappa_{SR}^2+\kappa_{SR}) \mu_{SR}^{2}}{\Omega_{sr}}\right)^{\frac{-1+\mu_{SR}}{2}} \\
&\Omega_{sr}^{\frac{-1-\mu_{SR}}{2}}\left(\log \left(\frac{(1+\kappa_{SR}) \mu_{SR}}{\Omega_{sr}}\right)-\psi (\mu_{SR})+{}_1F_1^{(1,0,0)}(0, \mu_{SR},-\kappa_{SR} \mu_{SR})\right)-\frac{\kappa_{RD}^{\frac{1-\mu_{RD}}{2}}}{2} (1+\kappa_{RD})^{\frac{1+\mu_{RD}}{2}} \\& \mu_{RD} \left(\frac{(1+\kappa_{RD}) \mu_{RD}}{\Omega_{rd}}\right)^{-\mu_{RD}}\left(\frac{\kappa_{RD}(1+\kappa_{RD}) \mu_{RD}^{2}}{\Omega_{rd}}\right)^{\frac{-1+\mu_{RD}}{2}} 
\Omega_{rd}^{\frac{-1-\mu_{RD}}{2}}\left(\log \left(\frac{(1+\kappa_{RD}) \mu_{RD}}{\Omega_{rd}}\right) \right.\\ & \left. -\psi (\mu_{RD}) +{}_1F_1^{(1,0,0)}(0, \mu_{RD},-\kappa_{RD} \mu_{RD})\right), \ \forall \ n,
\end{aligned}
\label{mean_log_dr}
\end{equation}
where $\psi(.)$ is the digamma function \cite{Polygamma} and ${}_1F_1^{(1,0,0)}(a,b,z)$ is the derivative of the confluent hypergeometric function with respect to the parameter $a$ and can be evaluated using \cite{diffconfluent}. For the cases with SD link, or $b>1$, we could not arrive at simple expressions for the distribution of $\gamma_{IRS}$ even after approximating the double $\kappa-\mu$ RV as a gamma RV \cite{wang2020study}.  
\subsection{Which approximation should one finally use?}
 The major difference between the approximation proposed in section \ref{sec_dim_reduc} and the approximations proposed in Section \ref{sec_mom} and \ref{sec_kl_div_min} lies in the fact that the later two approximations are obtained by matching certain moments/statistics of $\gamma_{IRS}$ with the moments/statistics of a specific distribution (Gamma distribution in this case). However, in Section \ref{sec_dim_reduc} we directly approximate the exact integral expression for the OP using the uni-variate dimension reduction method \textcolor{blue}{(except for the scenarios without an SD link).} While the moment matching based approximation may fail to follow the tail of the SNR accurately (this failure to accurately model the tail of the SNR is observed in the next section), the approximation using the uni-variate dimension reduction method is an approximation of the exact CDF and hence, is better at approximating the tail. Moment matching also has its own advantages in terms of having a simple expression. \cbstart \textcolor{blue}{This simple expression is very amenable for further analysis and for the derivation of the statistics of functions of the SNR like the rate, secrecy capacity etc. Furthermore, the closed form expression for OP is highly useful for formulating resource allocation problems where we need certain performance guarantees in terms of the OP. Similarly, the gamma approximation using the KL divergence minimization also results in a simple expression for OP, however the approximation for the expectation of logarithm of the SNR makes this approximation less accurate in certain regimes. However, for the cases without an SD link, we can circumvent this issue as discussed in the last section.} Hence, depending upon the specific application, one can choose among the proposed approximations and also the existing approximations. For example, if one's focus is on the tail behavior of OP then one may choose to use the uni-variate approximation. \cbend 
\section{simulation Results}
\label{simulation}
In this section, we present the observations of simulation experiments to verify the results presented in Section \ref{sec_op}. 
The simulation setting used in this paper is similar to
\cite{li2020ergodic} (shown in Fig \ref{set_up}). Here, nodes $\mathbf{S}$ and $\mathbf{D}$ are located at the points $(0,0)$ and $(90,0)$ respectively and the $\mathbf{IRS}$ is located at the point $(d,h)$. Throughout the simulations, we have taken the value of amplitude coefficient $\alpha$ to be 1 (similar to \cite{li2020ergodic}), $\beta$ is chosen as 4, $b$ is chosen as 5, and $\gamma_s$ to be $73$ dB, unless mentioned otherwise. \textcolor{blue}{Here, we have chosen, $\kappa_{SD}=0.5$, $\mu_{SD}=0.8$, $\kappa_{SR}=1.41$, $\mu_{SR}=2$, $\kappa_{RD}=1.52$, and $\mu_{RD}=$2.5 and $h$ is chosen to be $10$ metres whereas $d$ is varied across simulations\footnote{\textcolor{blue}{We have verified the simulations for different values of $\kappa$ and $\mu$ and observed a good match between the simulated values of OP and the proposed approximations. Due to spave constraints, the results for special cases like the Rayleigh, Rician and Nakagami-m are included in the supplementary document.}}. In the subsequent subsections, we compare the performance of the different approximations proposed for the cases with and without SD link}. 
\cbstart
\color{blue}
\begin{table}[ht]
\begin{minipage}[b]{0.4\linewidth}
\centering
\begin{tabular}{ | l | r | r | r |}
    \hline
    Method & $N=5$ & $N=100$ \\ \hline \hline
    Uni-variate & $6.5536\times10^{-4}$ & $7.0463\times10^{-4}$ \\ \hline
    Moment matching & 0.0140 &0.0170  \\ \hline
    KL divergence & 0.0431  & 0.0295 \\ \hline
   \end{tabular}
    \caption{Table of KS statistic for $d=30$ and $b=5$}
    \label{table_ks}
\end{minipage}\hfill
\begin{minipage}[b]{0.56\linewidth}
 \centering
    \begin{tikzpicture}[scale=0.8]
      \draw[-] (0, 0) -- (9, 0) ;
      \draw[-] (4, 2) -- (4, 0) ;
      \draw[-] (0, 0) -- (4, 2) ;
      \draw[-] (4, 2) -- (9, 0) ;
      \node at(4.5,1) {$\mathbf{h}$};
      \node at(2,-0.5) {$d$};
       \node at(7,-0.5) {90 - $d$};
       \node at(2,1.4) {$d_{sr}$};
       \node at(6.5,1.3) {$d_{rd}$};
      \node at (0, 0) [circle,fill,scale=0.5]{};
     \node at (9, 0) [circle,fill,scale=0.5]{};
      \node at (4, 2) [circle,fill,scale=0.5]{};
    \node at (0, -0.5) []{($0,0$)};
    \node at (0, 0.5) []{$\mathbf{S}$};
    \node at (9, -0.5) []{($90,0$)};
    \node at (9, 0.5) []{$\mathbf{D}$};
    \node at (4, 2.5) []{($d,h$)};
     \node at (4, 3) []{IRS};
      \end{tikzpicture}
     
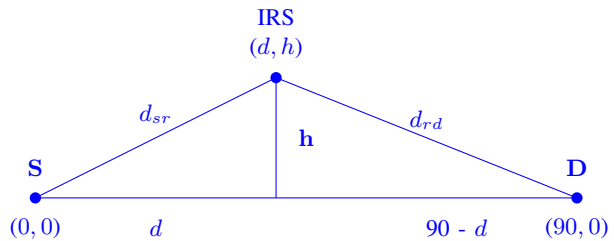
\captionof{figure}{Simulation set up}
     \label{set_up}
\end{minipage}
\end{table}

 \color{black}
\subsection{Results with SD link}
\label{with_sd_results}
\begin{table}
\begin{tabular}{|l||*{8}{c|}}\hline
\multirow{5}{*}{}
$N$ &\backslashbox{Method}{Threshold($\gamma$)}
&\makebox[3em]{-12 dB}&\makebox[3em]{-7 dB}&\makebox[3em]{-5 dB}&\makebox[3em]{-2 dB}
&\makebox[3em]{0 dB}&\makebox[3em]{2 dB}&\makebox[3em]{5 dB}\\\hline \hline
$N=5$ &Simulated &0.2161   &0.4899  &0.6417 &0.8620 &0.9556 &0.9931 &1.0000\\\cline{2-9}
&Uni-variate Approx.  & \textbf{0.2164}  & \textbf{0.4898}  & \textbf{0.6417}  & \textbf{0.8618} & \textbf{0.9558}  & \textbf{0.9931}  & \textbf{1.0000} \\\cline{2-9}
&Gamma($k_{mom},\theta_{mom}$) &0.2037 &0.4884 &0.6454 &0.8655   &0.9561 &  0.9925  &0.9999\\\cline{2-9}
&Gamma($k_{kl},\theta_{kl}$) &   0.1739   & 0.4677    &0.6371    &0.8724  &  0.9629  &  0.9948 &    1.0000\\\hline \hline
$N=50$ &Simulated &0.1530  &0.4185  & 0.5784 &0.8253 &0.9399  &0.9898  &0.9999 \\\cline{2-9}
&Uni-variate Approx. & \textbf{0.1530}  & \textbf{0.4189}  & \textbf{0.5782}  & \textbf{0.8251} & \textbf{0.9399}  & \textbf{0.9897}  & \textbf{0.9999} \\\cline{2-9}
&Gamma($k_{mom},\theta_{mom}$) &0.1475  &0.4137  & 0.5779 &0.8284 &0.9410  &0.9892  &0.9999 \\\cline{2-9}
&Gamma($k_{kl},\theta_{kl}$) &0.1237   & 0.3923   & 0.5668  &  0.8338  &  0.9479  &  0.9920  &  1.0000 \\\hline \hline
$N=100$ &Simulated &0.0909  &0.3390  & 0.5031 &0.7772 &0.9174  & 0.9843 &0.9999 \\\cline{2-9}
&Uni-variate Approx.  & \textbf{0.0908}  & \textbf{0.3394} & \textbf{0.5030} & \textbf{0.7771} & \textbf{0.9174}  & \textbf{0.9843} & \textbf{0.9999} \\\cline{2-9}
&Gamma($k_{mom},\theta_{mom}$) &0.0972  &0.3321  &0.4979  &0.7789 &0.9190   &0.9842  &0.9998  \\\cline{2-9}
&Gamma($k_{kl},\theta_{kl}$) &    0.0800   & 0.3112   &  0.4845   & 0.7821   & 0.9259    &0.9874    &0.9999 \\\hline \hline
\end{tabular}
\caption{Comparison of OP with SD link with varying N when $d=30$, $b=5$}
\label{diff_N}
\end{table}
\begin{figure}[h!]
\begin{minipage}[b]{0.45\linewidth}
\centering
\includegraphics[width=\textwidth]{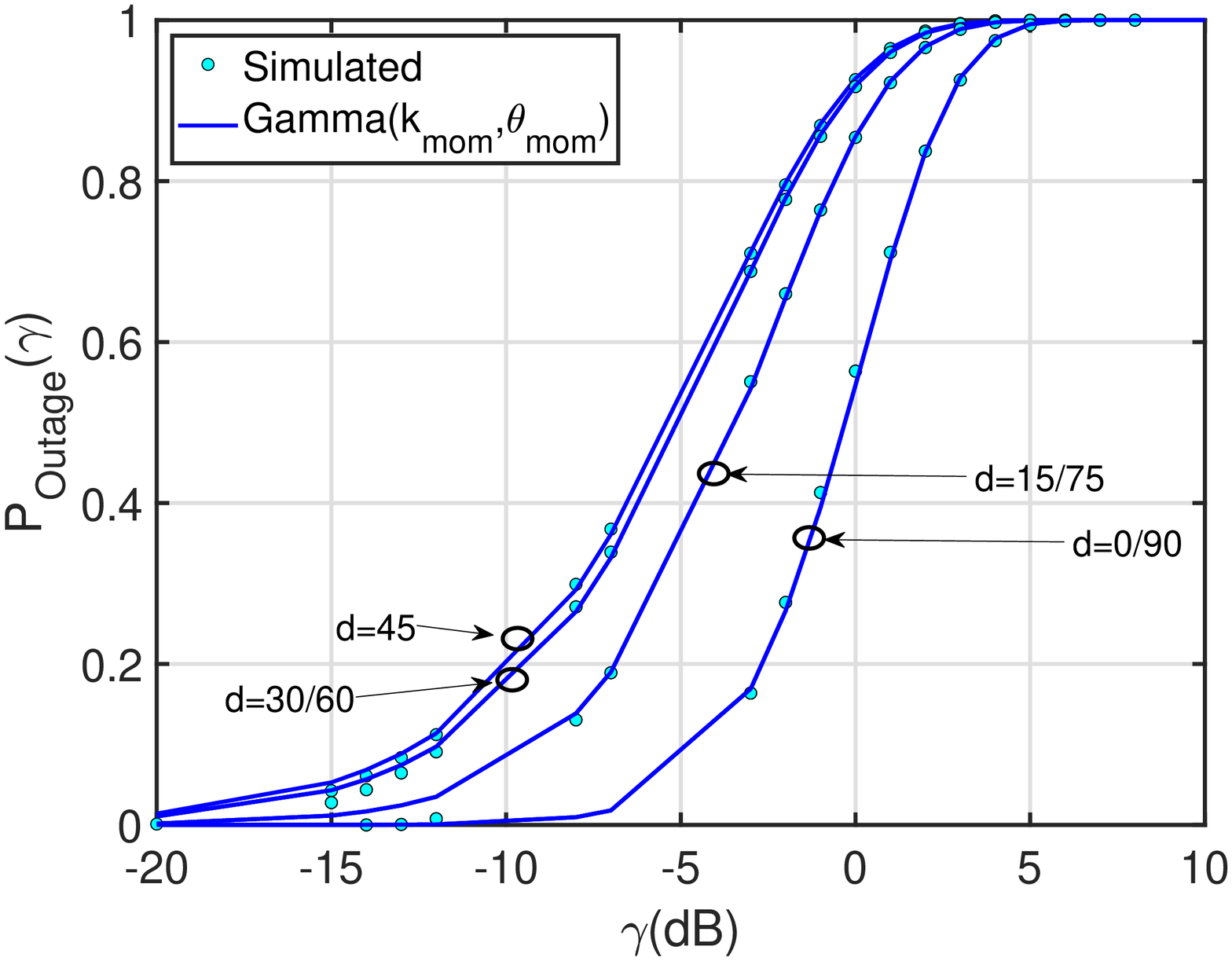}
\caption{Impact of $d$ on the OP with SD link when $N=100$ and b=5}
\label{d_vary_N_150_s_1}
\end{minipage}
\hspace{0.5cm}
\begin{minipage}[b]{0.45\linewidth}
\centering
\includegraphics[width=\textwidth]{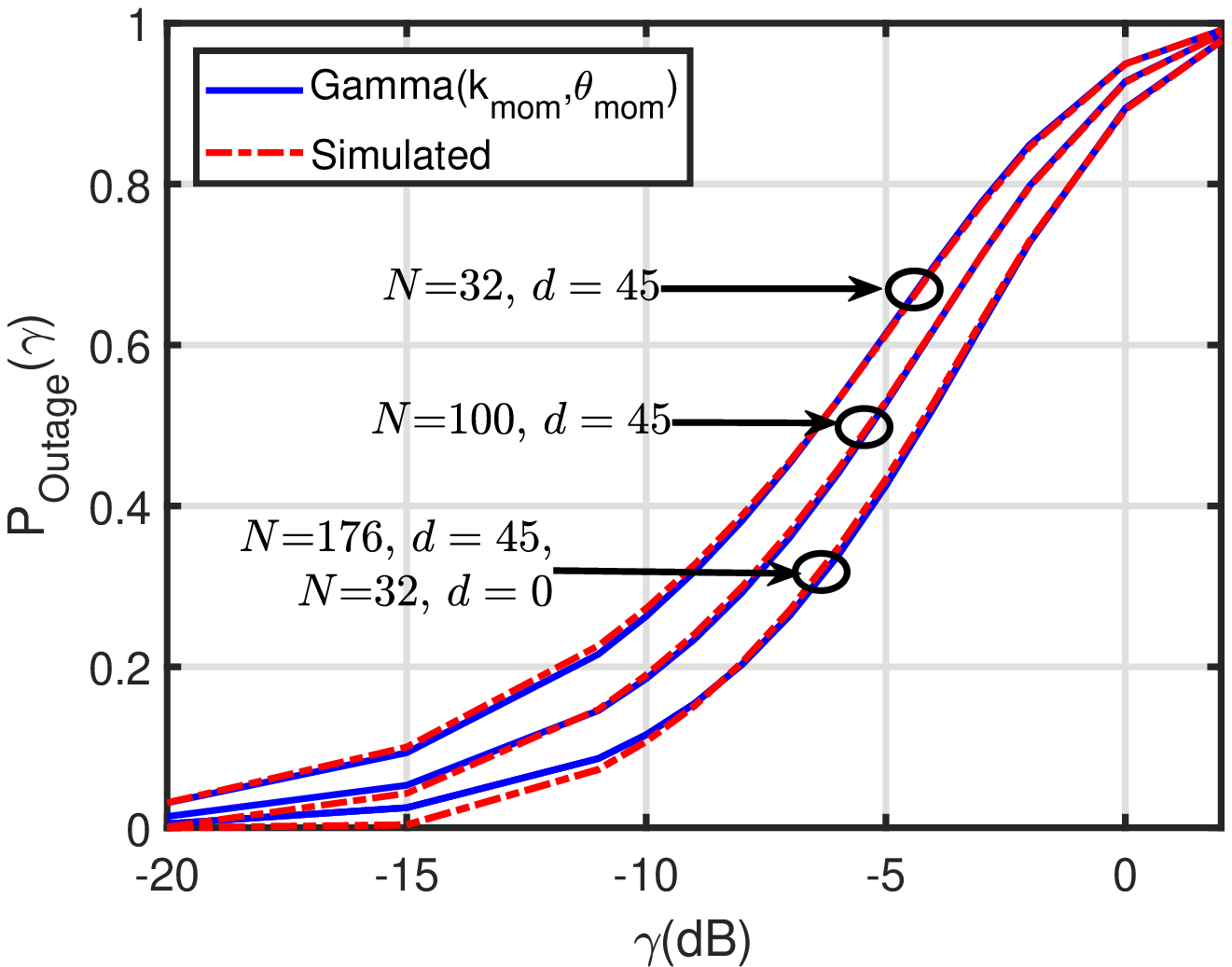}
\caption{Impact of $N$ on the OP with SD link when $b=5$}
\label{vary_N_s_2}
\end{minipage}
\end{figure}
\textcolor{blue}{
In this sub-section, we examine the closeness between the simulated values of OP and the approximations proposed in Section \ref{sec_dim_reduc}-\ref{sec_kl_div_min}. 
Table \ref{diff_N} compares the OP obtained using (\ref{Eq:LIS_DimReducApprox}), (\ref{p_out_mom}), and (\ref{p_out_kl}) with the simulated values of OP for various values of $N$\footnote{We could observe numerical issues in the OP evaluated using (7) for very large values of $\kappa_{AB}$ and $\mu_{AB}$. But in a practical scenario, the values of $\kappa_{AB}$ and $\mu_{AB}$ is rarely larger than $5$. Thus, the proposed results are useful in all scenarios of practical interest.}. Table \ref{diff_N} corroborates the fact that an increase in $N$ improves the performance of an IRS system. One can also observe that the OP evaluated using all the three approximations are close to the simulated values and the uni-variate approximation demonstrates the best performance amongst the three.}
\begin{table}
\begin{tabular}{|l||*{8}{c|}}\hline
\multirow{5}{*}{}
$b$ &\backslashbox{Method}{Threshold($\gamma$)}
&\makebox[3em]{-12 dB}&\makebox[3em]{-7 dB}&\makebox[3em]{-5 dB}&\makebox[3em]{-2 dB}
&\makebox[3em]{0 dB}&\makebox[3em]{2 dB}&\makebox[3em]{5 dB}\\\hline \hline 
$b=1$ &Simulated &0.1348 &0.3964  & 0.5578 &0.8127  &0.9342  &0.9885 &0.9999 \\  \cline{2-9}
&Uni-variate Approx.  & 0.1319  & \textbf{0.3968}  & \textbf{0.5588}  & \textbf{0.8146} & \textbf{0.9358}  & \textbf{0.9889}  & \textbf{0.9999} \\\cline{2-9}
&Gamma($k_{mom},\theta_{mom}$) &0.1322  &  0.3908  &0.5562  &0.8156  &0.9355  & 0.9880 &0.9999 \\ \cline{2-9}
&Gamma($k_{kl},\theta_{kl}$) & 0.1103   & 0.3694   & 0.5444    &0.8204   & 0.9424   & 0.9909   & 0.9999
\\\hline \hline
$b=2$ &Simulated &0.1021 & 0.3545 & 0.5181 &0.7872 & 0.9223  &0.9856  &0.9999  \\ \cline{2-9}
&Uni-variate Approx.  & \textbf{0.1019}  & \textbf{0.3550}  & \textbf{0.5181}  & \textbf{0.7873} & \textbf{0.9224}  & \textbf{0.9855}  & \textbf{0.9999} \\\cline{2-9}
&Gamma($k_{mom},\theta_{mom}$) &0.1060 &  0.3478 &0.5139  &0.7893 &0.9238  &0.9853 &0.9998 \\ \cline{2-9}
&Gamma($k_{kl},\theta_{kl}$) &  0.0876   & 0.3267   & 0.5008  &  0.7930   & 0.9307  &  0.9885   & 0.9999\\\hline \hline
$b=10$ &Simulated    &0.0907  & 0.3388 & 0.5028   & 0.7770  &0.9174 &0.9843   &0.9999 \\ \cline{2-9}
&Uni-variate Approx.  & \textbf{0.0906}  & \textbf{0.3392}  & \textbf{0.5027}  & \textbf{0.7770} & \textbf{0.9174}  & \textbf{0.9843}  & \textbf{0.9999} \\\cline{2-9}
&Gamma($k_{mom},\theta_{mom}$) & 0.0970    & 0.3318 &0.4977  &0.7787   &0.9189  &0.9842  &0.9998\\ \cline{2-9}
&Gamma($k_{kl},\theta_{kl}$)& 0.0799   & 0.3110  &  0.4842   & 0.7819   & 0.9258   & 0.9874   & 0.9999\\\hline \hline
$b=\infty$ &Simulated &0.0907  & 0.3388 & 0.5028   & 0.7770  &0.9174 &0.9843   &0.9999 \\ \cline{2-9}
&Uni-variate Approx.  & \textbf{0.0906}  & \textbf{0.3392}  & \textbf{0.5027}  & \textbf{0.7770} & \textbf{0.9174}  & \textbf{0.9843}  & \textbf{0.9999} \\\cline{2-9}
&Gamma($k_{mom},\theta_{mom}$) & 0.0970    & 0.3318 &0.4977  &0.7787   &0.9189  &0.9842  &0.9998\\ \cline{2-9}
&Gamma($k_{kl},\theta_{kl}$) &  0.0799    &0.3110   & 0.4842   & 0.7819   & 0.9258   & 0.9874   & 0.9999\\\hline \hline
\end{tabular}
\caption{Comparison of OP with SD link with varying $b$ when $d=30$, $N=100$}
\label{diff_b}
\end{table}
\textcolor{blue}{
Table \ref{diff_b} demonstrates the effect of the number of quantization bits $b$ on the OP. Here, one can observe that with a small number of bits itself one can achieve the performance of $b=\infty$ (\textit{i.e.} the no phase error scenario). Furthermore, the improvement in performance with increasing $b$ is not very large for $b>2$. Next, Fig. \ref{d_vary_N_150_s_1}
demonstrates the impact of the position of the $\mathbf{IRS}$ with respect to the positions of $\mathbf{S}$ and $\mathbf{D}$. 
Here one can observe the symmetry of the OP values for different IRS locations about the midpoint of the nodes $\mathbf{S}$ and $\mathbf{D}$. We can also observe that farther the IRS from either of the nodes $\mathbf{S/D}$, larger is the OP. Fig. \ref{vary_N_s_2} demonstrates the number of IRS elements required to match the performance of a system with $N=32$ and $d = 0/90$ when the $IRS$ is located at $d=45$. From Fig. \ref{vary_N_s_2} it is clear that additional $144$ reflector elements are needed when the $\mathbf{IRS}$ is placed mid-way between the nodes $\mathbf{S}$ and $\mathbf{D}$, compared to a system with an IRS located above either of the nodes $\mathbf{S}$ or $\mathbf{D}$. The authors of \cite{tao2020performance,hou2019mimo,wang2020study} also consider the presence of SD link in their analysis but do not take into account the phase error due to $b$ bit phase representation.}
\par \textcolor{blue}{The Kolmogorov-Smirnov (KS) test using the KS statistic is one common test used to decide if a sample comes from a population with a specific distribution \cite{kong2021channel}. The KS statistic is given by $D_{ks}:=\max _{x}\left|F_{}(x)-F_{\text {obs }}(x)\right|$, where $F(x)$ is the proposed approximate CDF and $F_{\text {obs }}(x)$ is the empirical distribution generated using the observations. Let the hypothesis $\vec{H}_0$ denote that the SNR follows the distribution $F(x)$. Then the hypothesis is accepted when $D_{ks}<D_{max}$ and $D_{\max} \approx \sqrt{-(1 / 2 \nu) \ln (\tau / 2)}$, where $\tau$ is the significance level, and $\nu$ is the number of random samples used for generating the empirical distribution \cite{kong2021channel}. We have tabulated the KS statistic for two different values of $N$ in Table \ref{table_ks}. Here, we have taken $\nu=10^6$ and hence for $\tau=0.05$, we have $D_{max}=0.0014$. From the results in Table \ref{table_ks}, we can see that for the method of uni-variate dimension reduction the value of $D_{ks}$ is less than $D_{max}$ for all values of $N$ considered. However, the values of $D_{ks}$ evaluated for the gamma approximations are always larger than the value of $D_{max}$. This is because in both of these approximations, only certain moments of the Gamma distribution are matched with the moments of the exact SNR distribution and hence the approximation of the CDF is not equally tight throughout the support of the SNR. Since the KS test takes into consideration the worst deviation throughout the support, this shows that both the gamma approximations are not as good as the approximation using the uni-variate approximation at all points in the support. This further reinforces the utility of the uni-variate approximation whenever possible. Note that the uni-variate approximation cannot be used in scenarios without an SD link and the gamma approximations will be useful in this regime. The results in the next sub-section shows that the Gamma approximations are fairly close to the empirical CDF for most of the points in the support of interest in the cases without SD link also. }
\subsection{Results without SD link}
\label{without_sd_results}
\textcolor{blue}{
In this sub-section, we study the performance of an IRS-assisted system without an SD link. Table \ref{diff_b_withoutsd} demonstrates the variation in the OP for different values of $b$. Note that for the particular simulation setting considered we need $38$ extra elements to achieve the performance comparable to a system without phase error when only one bit is used to represent the phase. Hence, one can either increase the number of elements or increase $b$ to achieve better performance. Similarly, Fig. \ref{wrtd_23} elucidates the  variation in the OP with respect to $d$. Here also we can observe that the farther the IRS from either of the nodes $\mathbf{S/D}$, the larger is the OP. Furthermore, when the IRS was shifted by $45$ meters \textit{i.e.}, from $d = 0 $ to $d = 45$, three extra bits were required to get similar performance in terms of OP.}
\par \textcolor{blue}{Note that the authors of \cite{atapattu2020reconfigurable,de2021large, kudathanthirige2020performance} also considers scenarios without SD link. In this context, we would like to point out that the work in \cite{atapattu2020reconfigurable} which approximates the square root of SNR as a Gamma RV gives OP expressions which are as tight as ours (in a Rayleigh fading scenario), however, extending their result to scenarios with SD link is not trivial. We had recovered the expressions given by \cite{de2021large} as special cases (presented in corollaries \ref{without_sd} and \ref{without_sd_without_phase_corollary})}.
\begin{table}
\begin{tabular}{|l||*{11}{c|}}\hline
\multirow{5}{*}{}
 $N,\ b$ &\backslashbox{Method}{Threshold($\gamma$)}
 &\makebox[3em]{-25 dB}&\makebox[3em]{-24 dB}&\makebox[3em]{-23 dB}
&\makebox[3em]{-22 dB}&\makebox[3em]{-21 dB}&\makebox[3em]{-20 dB}\\\hline
$N=99$, $b=1$ &Simulated & 0.0005 & 0.0425 & 0.4830 & 0.9612 &  0.9999 & 1\\ \cline{2-8}
& Gamma($k_{mom},\theta_{mom}$)  & 0.0005  & 0.0430 & 0.4830 & 0.9611 & 0.9999  & 1 \\ \cline{2-8}
&Gamma($k_{kl},\theta_{kl}$) & 0.0005  & 0.0428 & 0.4829 & 0.9614 & 0.9999 & 1\\\hline \hline
$N=61,\ b=\infty$ &Simulated &0.0002 &0.0695 &0.7647 & 0.9988  &1 &1 \\  \cline{2-8}
&Gamma($k_{mom},\theta_{mom}$)&0.0002 & 0.0704   &  0.7634 & 0.9990 &1 &1 \\ \cline{2-8}
&Gamma($k_{kl},\theta_{kl}$)  &     0.0002   & 0.0702 &   0.7636  & 0.9990      &1 &1 \\\hline \hline
$N=61,\ b=5$ &Simulated &0.0002 &0.0733 &0.7742 & 0.9989  &1 &1 \\  \cline{2-8}
&Gamma($k_{mom},\theta_{mom}$)&0.0003 &0.0744   & 0.7729 & 0.9991 &1 &1 \\ \cline{2-8}
&Gamma($k_{kl},\theta_{kl}$)  &0.0002   & 0.0742  &  0.7731   & 0.9991      &1 &1 \\\hline \hline
$N=61,\ b=2$ &Simulated & 0.0493 &0.6801 &0.9964 &1 &1 &1 \\  \cline{2-8}
&Gamma($k_{mom},\theta_{mom}$) & 0.0504 &0.6777  & 0.9969 &1 &1 &1 \\ \cline{2-8}
&Gamma($k_{kl},\theta_{kl}$)  &   0.0503   & 0.6778   & 0.9969      &1 &1 &1 \\\hline
\end{tabular}
\caption{Comparison of OP without SD link with varying $b$ when $d=30$}
\label{diff_b_withoutsd}
\end{table}
\begin{table}
\begin{tabular}{|l|*{11}{c|}}\hline
\multirow{12}{*}{}
 $N$ & $d$ & $b$ &\backslashbox{Method}{Threshold($\gamma$)}
&\makebox[3em]{-15 dB}&\makebox[3em]{-10 dB}&\makebox[3em]{-5 dB}&\makebox[3em]{-2 dB}
&\makebox[3em]{0 dB}&\makebox[3em]{2 dB}&\makebox[3em]{5 dB}\\\hline \hline
5 &0 & 1 &Simulated &0.1143 &0.2899 &0.6288 &0.8547  & 0.9527  &0.9925   &1.0000    \\\cline{4-11}
& & & Uni-variate Approx. & 0.1141 & 0.2903  & 0.6288  & 0.8547 & 0.9528 & 0.9925 & 1.0000 \\\cline{1-11}
5 &0 &5 & Simulated &0.1045 &0.2774 &0.6177 &0.8484  & 0.9499 & 0.9919  &0.9999  \\\cline{4-11}
    & & & Uni-variate Approx. & 0.1045 & 0.2778 & 0.6175 & 0.8482 & 0.9500  & 0.9919  & 0.9999 \\\cline{1-11}
5 & 45&1 &Simulated & 0.1293 & 0.3085  &0.6449 &0.8637  & 0.9564 & 0.9933 &1.0000    \\\cline{4-11}
&&& Uni-variate Approx. & 0.1291  & 0.3089  & 0.6449 & 0.8636 &  0.9565 & 0.9932  & 1.0000  \\\cline{1-11}
    5 &45 &5 &Simulated &0.1274 &0.3062 &0.6429  &0.8626  &0.9559 &0.9932   &1.0000    \\\cline{4-11}
    & & & Uni-variate Approx. & 0.1272 & 0.3066 & 0.6429 & 0.8625 & 0.9560 & 0.9931  & 1.0000 \\\cline{1-11} \hline \hline
150 & 0 & 1 &Simulated &0 &0 &0.0383 &0.3062 & 0.5905 & 0.8532   & 0.9951    \\\cline{4-11}
& & & Uni-variate Approx. & 0 &  0 & 0  & 0.1468 & 0.5729 & 0.9312   & 1.0000  \\\cline{1-11} 
150 &0 &5 &Simulated &0 &0 &0 &0.0141 &0.2144 &0.5722  &0.9630    \\\cline{4-11}
    & & & Uni-variate Approx. & 0 & 0 & 0 & 0.0139 & 0.2145 & 0.5727   & 0.9624  \\\cline{1-11}
150 & 45 & 1 &Simulated &0.0458 &0.1949 & 0.5354  &0.7998  &0.9282 &0.9869  &0.9999    \\\cline{4-11}
&&& Uni-variate Approx. & 0.0378 & 0.1909 & 0.5371 & 0.8018  & 0.9303  & 0.9878    & 0.9999   \\\cline{1-11}
150 &45 &5 &Simulated &0.0129 &0.1337 &0.4666  & 0.754  &0.9055 & 0.9814 &0.9998  \\\cline{4-11}
  & & & Uni-variate Approx. &0.0129 &0.1337 & 0.4673  &0.7526 & 0.9052 &0.9811  & 0.9998 \\\cline{1-11}
\end{tabular}
\caption{OP with SD link}
\label{table_ref_forimpact_with_sd}
\end{table}
\begin{table}
\begin{tabular}{|l|*{11}{c|}}\hline
\multirow{12}{*}{}
$N$ &$d$ &$b$ &\backslashbox{Method}{Threshold($\gamma$)}
&\makebox[3em]{-49 dB}&\makebox[3em]{-38 dB}&\makebox[3em]{-36 dB}&\makebox[3em]{-34 dB} &\makebox[3em]{-32 dB}&\makebox[3em]{-30 dB}\\\hline
5 & 0 & 1 &Simulated &0.0365  &0.1224  & 0.3301 &0.6541  &0.9149 & 0.9939    \\\cline{4-10}
&&&Gamma Moment matching &0.0367 & 0.1234  &0.3311 &0.6540  &0.9147 &0.9942      \\\cline{1-10}
    5 &0 &5 &Simulated &0.0000 &0.0006  &0.0117 &0.1193  &0.5104 & 0.9137     \\\cline{4-10}
    &&&Gamma Moment matching &0.0001 & 0.0012 & 0.0155 & 0.1253 &0.5047 &  0.9136    \\\cline{1-10}
    $N$ &$d$ &$b$ &\backslashbox{Method}{Threshold($\gamma$)}
&\makebox[3em]{-55 dB}&\makebox[3em]{-53 dB}&\makebox[3em]{-51 dB}&\makebox[3em]{-49 dB} &\makebox[3em]{-47 dB}&\makebox[3em]{-45 dB}\\\hline
5 & 45 & 1 &Simulated & 0.0322 & 0.1097  &0.3041 & 0.6234   &0.8992 &0.9917      \\\cline{4-10}
&&& Gamma Moment matching &0.0326 &0.1115  & 0.3062  &  0.6236 &  0.8988  & 0.9920    \\\cline{1-10}
    5 &45 &5 &Simulated &0.0000 &0.0004  & 0.0090 & 0.0999 &0.4648 & 0.8918    \\\cline{4-10}
    &&&Gamma Moment matching &0.0000 &0.0009  &0.0125 &  0.1063 &0.4597 &  0.8909   \\\cline{1-10}
$N$ &$d$ &$b$ &\backslashbox{Method}{Threshold($\gamma$)}
&\makebox[3em]{-11 dB}&\makebox[3em]{-10 dB}&\makebox[3em]{-9 dB}&\makebox[3em]{-7 dB} &\makebox[3em]{-6 dB}&\makebox[3em]{-5 dB}\\\hline
100 & 0 & 1 &Simulated &0.0221 &0.3655  &0.9258 &0.9996  &1  & 1   \\\cline{4-10}
&& &Gamma Moment matching &0.0224  &0.3648 &0.9259 &0.9996 &1  &1  \\\cline{1-10}
100 &0 &5 &Simulated & 0        & 0   &  0  & 0.0019  &  0.4460   & 0.9973      \\\cline{4-10}
  && &Gamma Moment matching     &0    & 0     &0  & 0.0021   & 0.4438   & 0.9975    \\\cline{1-10}
  $N$ &$d$ &$b$ &\backslashbox{Method}{Threshold($\gamma$)}
&\makebox[3em]{-25 dB}&\makebox[3em]{-24 dB}&\makebox[3em]{-23 dB}&\makebox[3em]{-22 dB} &\makebox[3em]{-21 dB}&\makebox[3em]{-20 dB}\\\hline
100 & 45 &1 &Simulated     &  0.2554   &  0.8665  & 0.9987 &1 &1 &1 \\\cline{4-10}
&&& Gamma Moment matching & 0.2556   &0.8664   &0.9987 &1 &1 &1 \\\cline{1-10}
100 &45 &5 &Simulated &0  & 0 &0  & 0.0003 &0.2577  & 0.9875     \\\cline{4-10}
&&&Gamma Moment matching &0 &0  &0   & 0.0004 &0.2569   & 0.9878  \\\cline{1-10}
\end{tabular}
\caption{OP without SD link}
\label{table_ref_forimpact_without_sd}
\end{table}
\begin{table}
	\begin{minipage}{0.5\linewidth}
		\centering
		\begin{tabular}{|l||*{5}{c|}}\hline
 Parameter
 &Impact of $N$ &Impact of $b$ &Impact of $d$\\\hline \hline 
$N$ is small  &NA & less &less \\ \hline
$N$ is large &NA & more &more \\\hline \hline
$b$ is small & less      & NA  & less\\ \hline
$b$ is large & more & NA & more \\\hline \hline
$d$ close to $\mathbf{S/D}$ & more & more & NA\\ \hline
$d$ away from $\mathbf{S/D}$ & less &less & NA \\\hline
\end{tabular}
\caption{Summary of impact of $N$, $b$ and $d$ on OP of}
	\label{impact_on_OP}
	\end{minipage}\hfill
	\begin{minipage}{0.45\linewidth}
	\centering
	\includegraphics[width=\textwidth]{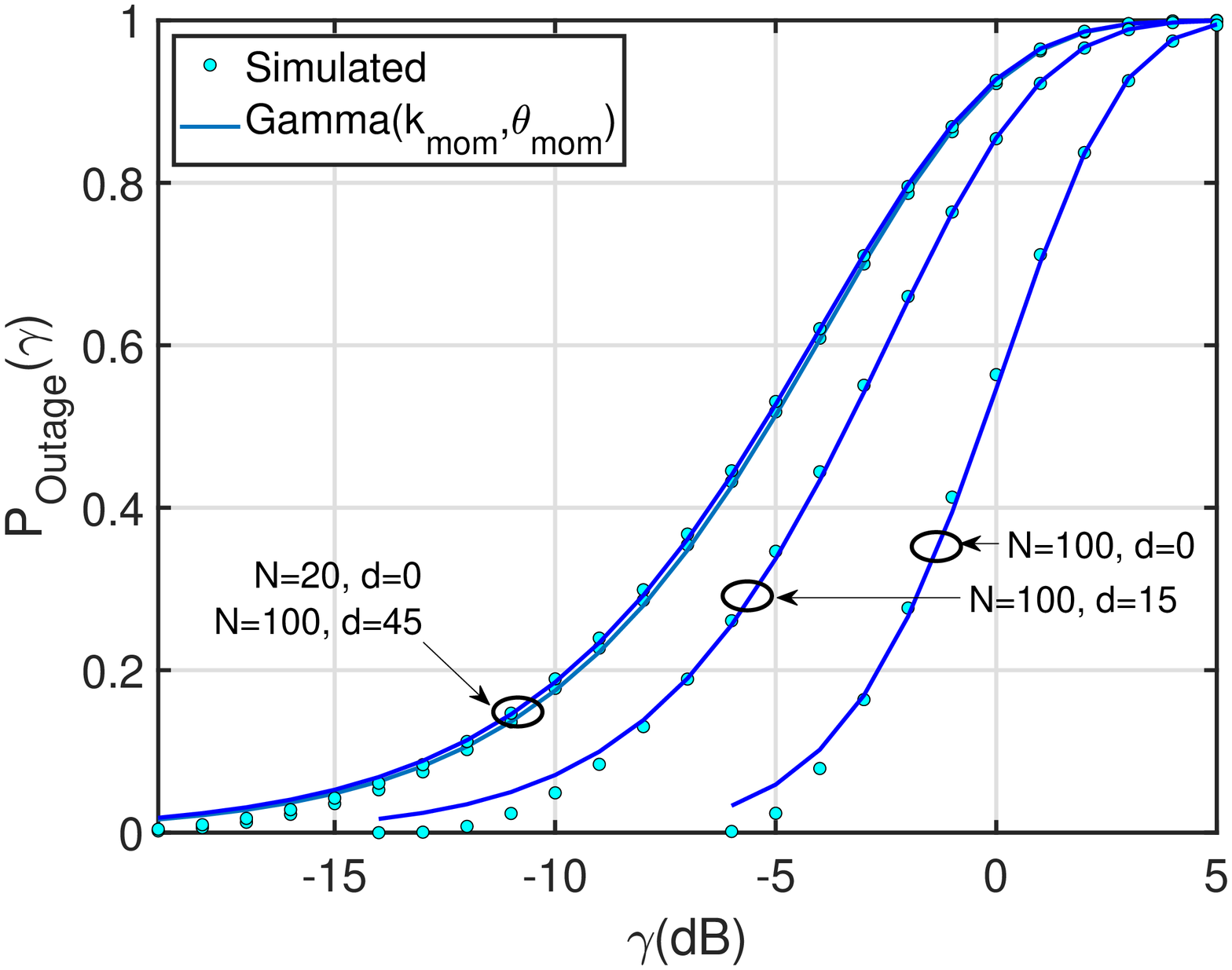}
  \captionof{figure}{Impact of $d$ on the OP without SD link for $\mathcal{S}_{2}$, $N=100$.}
      \label{wrtd_23}
	\end{minipage}
\end{table}
\subsection{Key Inferences}
\textcolor{blue}{In this sub-section, we discuss the key inferences drawn from the results  presented in Section \ref{with_sd_results} and \ref{without_sd_results}. Table \ref{table_ref_forimpact_with_sd} and \ref{table_ref_forimpact_without_sd} compares the values of OP for different values of $N$, $d$, and $b$ for the scenarios with and without SD link, respectively. Since the uni-variate approximation provides the best performance among the three approximations proposed (for scenarios with SD link), we have included only uni-variate approximation in Table \ref{table_ref_forimpact_with_sd}. Similarly, we have only included the moment matching results for the scenarios without SD link. From Table \ref{table_ref_forimpact_with_sd} we can observe that the reduction in OP as $b$ increases from $1$ to $5$ is larger at $N=100$ than at $N=5$. Similar observations can also be made for the cases without SD link from Table \ref{table_ref_forimpact_without_sd}. Thus, we conclude that an IRS with a small number of reflector elements is less sensitive to the number of bits used for representing the phase. Similarly, the values of OP in Tables \ref{table_ref_forimpact_with_sd} and \ref{table_ref_forimpact_without_sd} show that the impact of $d$ on the OP is also larger for large values of $N$. Here, we can also observe that the impact of $b$ and $N$ on the OP increases as the IRS moves closer to either of the nodes $S$ or $D$. Furthermore, we can observe from Tables \ref{table_ref_forimpact_with_sd} and \ref{table_ref_forimpact_without_sd} that the impact of phase errors is larger in a system without an SD link. We can observe that the path loss from the SD link ($d_{SD}^{-\beta}$) is lower than the path loss in the link via the IRS ($d_{SR}^{-\beta}d_{RD}^{-\beta}$)\footnote{\textcolor{blue}{In a triangle, sum of any two sides ($d_{SR} + d_{RD}$) is greater than the third side ($d_{SD}$) and ($d_{SR} \times d_{RD} \ge d_{SR} + d_{RD}$) whenever $d_{SR}>1$ and $d_{RD}>1$. Hence, $d_{SR} \times d_{RD}$ is greater than ($d_{SD}$) when $d_{SR}>1$ and $d_{RD}>1$.}}. Thus, the contribution of the SD link signal to the SNR is larger when compared to the links via the IRS and hence the effect of the SD link can be dominant, especially in the small $N$ regime. We can also observe that the outage performance is more sensitive to the variations in parameters like $b$ and $d$ in the no SD link scenarios when compared to the performance in scenarios with an SD link. This is because whenever the SD link is not in the outage and $d_{SD}$ is fixed, the SD link is the dominant term in the SNR expression, and a small decrease in the terms contributed by the links via the IRS will not degrade the SNR drastically. This, in turn, ensures that the outage does not increase drastically. We have summarised the above observations in Table  \ref{impact_on_OP}.}. \color{blue}
\par Next, in tables \ref{Table:LIS_Res_RelayVsLIS} and \ref{Table:LIS_Res_MisoVsLIS} we compare the OP of the IRS aided system with the performance of a system with a decode and forward (DF) relay and one with multi-antenna source node respectively. Here, we assume that the total transmit power available at the source and relay nodes together and the power available at the multi antenna source node is same as the power available at the source node of the IRS aided system. The OP at the destination of a DF relay ($RL$) assisted system for an threshold ${\gamma}$ is given by \cite[eqn (16)]{laneman2004cooperative}
  \begin{equation}
  P_{outage}^{DF}=\mathbb{P}\left[\min \lbrace |h^{SR}|^2 ,|{h}^{SD}|^2+|{h}^{RD}|^2 \rbrace <\frac{\gamma}{\gamma_s}\right].
 \end{equation} where $h^{SD}$, $h^{SR}$, $h^{RD}$ denote the channel coefficients between the nodes $S$ and $D$, $S$ and $RL$ and $RL$ and $D$ respectively. Here, we assume, ${h}^{SD}$ $\sim$ $\kappa$-$\mu$ $\left(\kappa_{SD}, \mu_{SD}\right)$, ${h}^{SR}$ $\sim$ $\kappa$-$\mu$ $\left(\kappa_{SR}, \mu_{SR}\right)$ and ${h}^{RD}$ $\sim$ $\kappa$-$\mu$ $\left(\kappa_{RD}, \mu_{RD}\right)$. Table \ref{Table:LIS_Res_RelayVsLIS} shows the corresponding values of OP when the relay is present at $(d,h)$ in a system similar to the one shown in Fig \ref{set_up}. From this table, we can see that the performance of the DF relay system depends upon the relay location and with a considerable number of IRS elements, we can always achieve better performance using an IRS aided communication system. Furthermore, the IRS aided system circumvents the need for computational capability and power availability for signal decoding at the IRS unlike the DF relay system. Next, table \ref{Table:LIS_Res_MisoVsLIS} compares the OP of an IRS aided system with an MISO system with $M=4$ antennas at the source node using maximal ratio transmission (MRT) for beamforming. Let $\vec{h}^{SD} \in  \mathbb{C}^{M}$ represent the channel between the nodes $S$ and $D$ and  $\left[\vec{h}^{SD}\right]_{m}$ $\sim$ $\kappa$-$\mu$ $\left(\kappa_{sd}, \mu_{sd}\right)$ $\forall  m \in \lbrace 1,\cdots,M \rbrace$. With MRT at S, the received SNR is given by $||\vec{h}^{SD}||^2$. Hence, the OP of the MISO system at a threshold $\gamma$ can be evaluated as
\begin{equation}
  P_{outage}^{MISO}=\mathbb{P}\left[||\vec{h}^{SD}||^2 < \frac{\gamma}{\gamma_s} \right].
 \end{equation}  Unlike the case of relays, here we need an even larger number of IRS elements to beat the performance of the MRT MISO system. However, note that the computational and power requirements at the multi antenna system will be considerably more than the IRS aided system.  
\begin{table}[h!]
     \centering
        \begin{tabular}{|l||*{8}{c|}}\hline
        \backslashbox{Method}{Threshold($\gamma$)}
        &\makebox[3em]{-15 dB}&\makebox[3em]{-10 dB}&\makebox[3em]{-7 dB}&\makebox[3em]{-5 dB}&\makebox[3em]{-2 dB} &\makebox[3em]{0 dB}&\makebox[3em]{2 dB}&\makebox[3em]{5 dB}\\\hline \hline
        Relay at d=0  &0.0015 &0.0683  &0.3826 &0.7268   & 0.9846  &0.9998  &1  &1 \\\hline \hline
        Relay at d=45  &0 & 0.0018  &0.0083 &0.0261   & 0.1609   &0.4901   &  0.9211  & 1.0000 \\\hline \hline
        IRS with $N$ =25 at $d=0$, $b=5$  &0.0192 &0.1472  & 0.3188   & 0.4831  &0.7634    &0.9106  &0.9826 &0.9998 \\\hline \hline
       IRS with $N$ =50 at $d=0$, $b=5$ &0 & 0.0271  &0.1524  &0.3042  &0.6237  &0.8323  &0.9588 &0.9994 \\\hline \hline
        IRS with $N$ =100 at $d=0$, $b=5$ &0 & 0 &0  & 0.0239  &0.2765  & 0.5639  &0.8372 & 0.9941 \\\hline \hline
        \end{tabular}
     \caption{Comparison of DF Relay and IRS assisted systems}
     \label{Table:LIS_Res_RelayVsLIS}
 \end{table}
 \begin{table}[h!]
     \centering
       \begin{tabular}{|l||*{11}{c|}}\hline
\backslashbox{Method}{Threshold($\gamma$)}
&\makebox[3em]{-5 dB}&\makebox[3em]{-2 dB} &\makebox[3em]{0 dB}&\makebox[3em]{2 dB}&\makebox[3em]{3 dB}&\makebox[3em]{4 dB}&\makebox[3em]{5 dB}\\\hline \hline  
MISO with $M=4$ &0.0323   &0.1772  &0.4260     &0.7544 &0.8821& 0.9583 &0.9902 \\\hline \hline
IRS with $N$ =120, $d=0$, $b=5$ &0.0230   &0.1548  &0.4030  &0.7412 &0.8747    &0.9549    &0.9889 \\\hline \hline
IRS with $N$ =270, $d=15$, $b=5$ & 0.0241   &0.1593   & 0.4100    &0.7470    &0.8784    &0.9566    &0.9894 \\\hline \hline
IRS with $N$ =520, $d=30$, $b=5$ & 0.0291  & 0.1778  &0.4369   &0.7669&0.8903    &0.9617    &0.9909 \\\hline \hline
\end{tabular}
     \caption{Comparison of MISO and IRS assisted systems}
     \label{Table:LIS_Res_MisoVsLIS}
 \end{table}
  \cbend 
  \color{black}
\section{Conclusion and Future Work}
\label{conclusion}
This paper studied the OP of an IRS-assisted communication system in the presence of phase errors due to quantization in a $\kappa-\mu$ fading environment. We proposed three different approximations using 1) uni-variate dimension reduction, 2) moment matching and, 3) KL divergence minimization. Our simulation results showed that the derived expressions are tight and can be reliably used for further analysis. We also observed that the uni-variate dimension reduction method provides an accurate approximation for the OP in terms of an integral expression whereas the moment matching and KL divergence minimization results in simple closed-form expressions for the OP. Furthermore, the proposed approximations are highly useful in evaluating the effects of various system parameters on the OP. In this work, we also studied the impact of the parameters like the number of bits available for quantization, the position of IRS w.r.t. source and destination and, the number of elements present at IRS. Since we have provided a  tight approximation to the CDF, we believe that this can be used to study all other performance metrics which are functions of the CDF of SNR. \cbstart \textcolor{blue}{Note that the system model considered can be more generalised and it would be an interesting future work to study systems where the source and/or the destination is equipped with multiple antennas.} \cbend 
\begin{appendices}
\color{blue} \cbstart 
\section{Proof for lemma \ref{Lemma:LIS_Exact}}
\label{proof:LIS_Exact}
Using (\ref{snr_err}) and (\ref{snr_outage}), the OP for threshold $\gamma$ can be evaluated as
\begin{equation}\label{Eq:LIS_ExactCDF1}
    \begin{aligned}
        P_{outage} & = \mathbb{P}\left( \gamma_{s} \vert(|{h}^{SD}| +  \alpha\sum\limits_{i=1}^N\vert[\vec{h}^{SR}]_{i}\vert \vert [\vec{h}^{RD}]_{i}\vert e^{j\Phi_{i}})\vert^2 \le \gamma \right).
    \end{aligned}
\end{equation}
Let  $|{h}^{SD}|, |[\vec{h}^{SR}]_{i}| $ and $|[\vec{h}^{RD}]_{i}|$ be denoted by $g_{SD}, [\mathbf{g}_{SR}]_i $ and $[\mathbf{g}_{RD}]_i$ respectively. After some algebraic manipulations, we can re-write (\ref{Eq:LIS_ExactCDF1}) as
\begin{equation}\label{Eq:LIS_ExactCDF2}
    \begin{aligned}
        P_{outage} & = \mathbb{P} \left(\gamma_{s}\left(g_{SD}  + \alpha\sum\limits_{i=1}^N [\mathbf{g}_{SR}]_i  [\mathbf{g}_{RD}]_i \cos(\Phi_{i})\right)^2+\gamma_{s}\left(\alpha\sum\limits_{i=1}^N [\mathbf{g}_{SR}]_i  [\mathbf{g}_{RD}]_i \sin(\Phi_{i}) \right)^2 \leq \gamma \right) \\
        & = \mathbb{P} \left(\left(\sqrt{\gamma_{s}}  g_{SD}  +  \mathbf{C}^{T}\mathbf{X}\right)^2 + \left(\mathbf{S}^{T}\mathbf{X}\right)^{2}   \leq \gamma \right)
    \end{aligned}
\end{equation}
where $\mathbf{X} = \left[\sqrt{\gamma_{s}}[\mathbf{g}_{SR}]_{1}  [\mathbf{g}_{RD}]_{1} \dots \sqrt{\gamma_{s}}[\mathbf{g}_{SR}]_{N}  [\mathbf{g}_{RD}]_{N}   \right]$, $\mathbf{C} = \left[ \alpha\cos(\Phi_{1}), \dots ,\alpha\cos(\Phi_{N})  \right]^{T} $ and $\mathbf{S} = \left[\alpha \sin(\Phi_{1}) \dots \alpha \sin(\Phi_{N})  \right]^{T} $. Now, the RV of interest is $Y = \left(\sqrt{\gamma_{s}}  g_{SD}  +  \mathbf{C}^{T}\mathbf{X}\right)^2 + \left(\mathbf{S}^{T}\mathbf{X}\right)^{2}$. Note that $Y$ is the sum of the square of two RVs, one of which is again a sum of $N$ random variables ($\mathbf{S}^{T}\mathbf{X}$, where each $X_{i}$ is a double $\kappa-\mu$ RV and $\Phi_{i}$ is a uniformly distributed RV over the interval $ [-2^{-b}\pi,2^{-b}\pi]$). To the best of our knowledge, characterising the p.d.f. of this sum is not straight forward and is not available in the open literature. Similarly, characterising the distribution of the other term $\textit{i.e.} \sqrt{\gamma_{s}}  g_{SD}  +  \mathbf{C}^{T}\mathbf{X}$ is also difficult. To proceed further, we first derive the conditional CDF of $\gamma_{IRS}$ for a particular value of $\mathbf{\Phi} = \boldsymbol{\phi}$ and $\mathbf{X} = \mathbf{x}$ and is given below
\begin{equation}\label{Eq:LIS_CondCDF1}
    \begin{aligned}
        &P_{outage}\vert\left(\mathbf{\Phi} = \boldsymbol{\phi}, \mathbf{X} = \mathbf{x} \right) = \mathbb{P} \left( \sqrt{\gamma_{s}}  g_{SD} \leq \sqrt{\gamma - \left(\mathbf{s}^{T}\mathbf{x}\right)^{2}}- \mathbf{c}^{T}\mathbf{x} \right)\\
        &= \left(  1- \text{Q}_{\mu_{SD}}\left(\sqrt{2\mu_{SD}\kappa_{SD}},\sqrt{2\mu_{SD}\left(1 + \kappa_{SD}\right)} \frac{\left(\sqrt{\gamma - \left(\mathbf{s}^{T}\mathbf{x}\right)^{2}}- \mathbf{c}^{T}\mathbf{x}\right)}{\sqrt{\gamma_{s}}\hat{t}_{SD}} \right) \right) \operatorname{U}\left(\sqrt{\gamma - \left(\mathbf{s}^{T}\mathbf{x}\right)^{2}}- \mathbf{c}^{T}\mathbf{x}\right)
    \end{aligned}
\end{equation}
Now to evaluate the CDF of $\gamma_{IRS}$, we just need to evaluate the expectation of the R.H.S. of (\ref{Eq:LIS_CondCDF1}) with respect to the RVs $\mathbf{\Phi} = \left[\Phi_{1},\dots,\Phi_{N}\right]^{T}$ and  $\mathbf{X} = \left[X_{1},\dots,X_{N}\right]^{T}$, both of which are multivariate vectors with i.i.d. entries. The p.d.f. of $\Phi_{i}$'s are $f_{\Phi_{i}}(\phi_{i}) = 
 \frac{ 2^{b}}{2\pi}, \ \frac{-\pi}{2^{b}} \le \phi_{i} \le \frac{\pi}{2^{b}}$.
The p.d.f. of double $\kappa-\mu$ RV $X_{i}$ is given as\cite[(9)]{Bhargav2018:ProductKappaMu}  
\begin{equation}\label{Eq:LIS_pdfXMeijer}
    \begin{aligned}
        f_{X_{i}}\left(x_{i}\right) &= \frac{2\sqrt{a_{SR} a_{RD}}}{\sqrt{\gamma_{s}}\rho_{SR}\rho_{RD}} \sum_{m=0}^{\infty} \sum_{n=0}^{\infty}\begin{array}{l}
         \frac{\left(\mu_{SR}\kappa_{SR} \right)^{m}}{m! \Gamma\left( \mu_{SR} + m\right)}\frac{\left(\mu_{RD}\kappa_{RD} \right)^{n}}{n!\Gamma\left( \mu_{RD} + n\right)} 
        G_{0, 2}^{2, 0}
\left[\left. a_{SR}a_{RD} \frac{x_{i}^{2}}{\gamma_{s}}\right|_{\left(\mu_{SR}+m - \frac{1}{2}\right),\left(\mu_{RD}+n-\frac{1}{2}\right)}\right]
        \end{array} 
    \end{aligned}
\end{equation}
where $ \rho_{SR} = e^{\mu_{SR}\kappa_{SR}}, \rho_{RD} = e^{\mu_{RD}\kappa_{RD}} $ and $a_{SR} = \frac{\mu_{SR}\left(1+\kappa_{SR}\right) }{\hat{t}^{2}_{SR}}, a_{RD} = \frac{\mu_{RD}\left(1+\kappa_{RD}\right) }{\hat{t}^{2}_{RD}} $. 
Using the identity $ G_{0,2}^{2,0}\left[\left. z\right|_{b,c} \right] = 2 z^{\frac{b+c}{2}}K_{b-c}\left(2\sqrt{z} \right)$,we have
\begin{equation}\label{Eq:LIS_pdfXBessel}
    \begin{aligned}
        f_{X_{i}}\left(x_{i}\right) &= \frac{4\sqrt{a_{SR} a_{RD}}}{\sqrt{\gamma_{s}}\rho_{SR}\rho_{RD}} \sum_{m=0}^{\infty} \sum_{n=0}^{\infty}\begin{array}{l}
         \frac{\left(\mu_{SR}\kappa_{SR} \right)^{m}}{m! \Gamma\left( \mu_{SR} + m\right)}\frac{\left(\mu_{RD}\kappa_{RD} \right)^{n}}{n!\Gamma\left( \mu_{RD} + n\right)} 
         \left(\sqrt{\frac{a_{SR}a_{RD}}{\gamma_{s}}} x_{i}\right)^{\mu_{SR}+\mu_{RD}+m+n-1} \\ \times K_{\mu_{SR}-\mu_{RD} + m-n}\left(2\sqrt{\frac{a_{SR}a_{RD}}{\gamma_{s}}} x_{i} \right)
        \end{array} 
    \end{aligned}
\end{equation}
Thus, the CDF of $\gamma_{IRS}$ is given by
\begin{equation}\label{Eq:LIS_ExactCDF3}
    \begin{aligned}
        P_{outage} &= \int\dots\int\left(  1- \text{Q}_{\mu_{SD}}\left(\sqrt{2\mu_{SD}\kappa_{SD}},\sqrt{2\mu_{SD}\left(1 + \kappa_{SD}\right)} \frac{\left(\sqrt{\gamma - \left(\mathbf{s}^{T}\mathbf{x}\right)^{2}}- \mathbf{c}^{T}\mathbf{x}\right)}{\sqrt{\gamma_{s}}\hat{t}_{SD}} \right) \right)\\ 
        &\hspace{25mm} \times\operatorname{U}\left(\sqrt{\gamma - \left(\mathbf{s}^{T}\mathbf{x}\right)^{2}}- \mathbf{c}^{T}\mathbf{x}\right) \prod_{i=1}^{N}  f_{X_{i}}\left(x_{i}\right) f_{\Phi_{i}}\left(\phi_{i}\right) d x_{1} d \phi_{1} \dots d x_{N} d \phi_{N}
    \end{aligned}
\end{equation}
The result in (\ref{Eq:LIS_ExactCDF_Final}) follows by substituting the p.d.f. expressions of $\Phi_{i}$ and $X_{i}$ in (\ref{Eq:LIS_ExactCDF3}), and this completes the proof.
\section{Proof for Theorem \ref{Thm:LIS_DimReduc}}
\label{proof:LIS_DimReduc}
Consider a random vector $\mathbf{Y} = \left[\mathbf{X} \ \mathbf{\Phi}  \right]^{T}$ then (\ref{Eq:LIS_ExactCDF3}) can be interpreted as
\begin{equation}\label{Eq:LIS_ExactCDF4}
    \begin{aligned}
        P_{outage} &= \mathbb{E}\left[\left(  1- \text{Q}_{\mu_{SD}}\left(\sqrt{2\mu_{SD}\kappa_{SD}},\sqrt{2\mu_{SD}\left(1 + \kappa_{SD}\right)} \frac{\left(\sqrt{\gamma - \left(\mathbf{s}^{T}\mathbf{x}\right)^{2}}- \mathbf{c}^{T}\mathbf{x}\right)}{\sqrt{\gamma_{s}}\hat{t}_{SD}} \right) \right)\right. \\
        &\hspace{20mm}\left.\operatorname{U}\left(\sqrt{\gamma - \left(\mathbf{s}^{T}\mathbf{x}\right)^{2}}- \mathbf{c}^{T}\mathbf{x}\right) \right], \\
        &= \int_{\mathbb{R}^{2N}} g\left( \mathbf{y} \right) f_{\mathbf{Y}}\left( \mathbf{y}\right) d \mathbf{y}.
    \end{aligned}
\end{equation}
Now, using \cite[eq. 20]{Rahman2004:Integral_DimensionReduction} we approximate (\ref{Eq:LIS_ExactCDF4}) as follows
\begin{equation}\label{Eq:LIS_DimReducMain}
    \begin{aligned}
         P_{outage} &\approx \sum_{i = 1}^{2N}\mathbb{E}\left[g\left(\mu_{1}\dots\mu_{i-1},y_{i},\mu_{i+1},\dots,\mu_{2N}\right)\right] - \left( 2N - 1\right)g\left(\mu_{1},\dots,\mu_{2N} \right)
    \end{aligned}
\end{equation}
where 
\begin{equation}\label{Eq:LIS_mean}
    \begin{aligned}
        \mu_{i} = \mathbb{E}\left[ Y_{i}\right] = \begin{cases}\begin{array}{l}
             \frac{\left( \mu_{SR} \right)_{\frac{1}{2}} \left( \mu_{RD} \right)_{\frac{1}{2}}}{\sqrt{a_{SR} a_{RD}}} {}_{1}F_{1}\left(-\frac{1}{2} ; \mu_{SR} ; -\kappa_{SR} \mu_{SR}\right)  \\
             \times {}_{1}F_{1}\left(-\frac{1}{2} ; \mu_{RD} ; -\kappa_{RD} \mu_{RD}\right) 
        \end{array}   \triangleq \mu  &\quad 1\le i \le N
        \\ 0 &\quad N+1 \le i \le 2N.
        \end{cases}
    \end{aligned}
\end{equation}
Now, $g\left(\mu_{1}\dots\mu_{i-1},y_{i},\mu_{i+1},\dots,\mu_{2N}\right)$ can be calculated using (\ref{Eq:LIS_mean}) as follows
\begin{equation}\label{Eq:LIS_g}
    \begin{aligned}
        g\left(\mu_{1}\dots\mu_{i-1},y_{i},\mu_{i+1},\dots,\mu_{2N}\right)
        \\ &\hspace{-5cm}= \begin{cases}
        \begin{array}{l}
            \left[ \left(  1- \text{Q}_{\mu_{SD}}\left(\sqrt{2\mu_{SD}\kappa_{SD}},\sqrt{2\mu_{SD}\left(1 + \kappa_{SD}\right)} \frac{\left(\sqrt{\gamma} - x_{i} - (N-1)\alpha\mu\right)}{\sqrt{\gamma_{s}}\hat{t}_{SD}} \right) \right) \right. \\
             \left. \operatorname{U}\left(\sqrt{\gamma}- x_{i} - (N-1)\alpha\mu \right) \right] \quad 1 \le i \le N
        \end{array}
             \vspace{0.5cm}
            \\
            \begin{array}{l}
                 \left[\left(  1- \text{Q}_{\mu_{SD}}\left(\sqrt{2\mu_{SD}\kappa_{SD}},\sqrt{2\mu_{SD}\left(1 + \kappa_{SD}\right)} \frac{\left(\sqrt{\gamma - \left(\alpha\sin{\left( \phi_{i} \right)}\mu \right)^{2}}- 
        \alpha\cos{\left(\phi_{i}\right)} \mu - (N-1)\alpha\mu\right)}{\sqrt{\gamma_{s}}\hat{t}_{SD}} \right) \right) \right. \\
                 \left. \operatorname{U}\Bigg( \Bigg.\sqrt{\gamma - \left(\alpha\sin{\left( \phi_{i} \right)}\mu \right)^{2}}- 
        \alpha\cos{\left(\phi_{i}\right)} \mu - (N-1)\alpha\mu \Bigg.\Bigg) \right]  \quad N+1 \le i \le 2N,
            \end{array} 
        \end{cases}
    \end{aligned}
\end{equation}
and 
\begin{equation}\label{Eq:LIS_gconst}
    \begin{aligned}
        g\left(\mu_{1},\dots,\mu_{2N}\right) = \left(  1- \text{Q}_{\mu_{SD}}\left(\sqrt{2\mu_{SD}\kappa_{SD}},\sqrt{2\mu_{SD}\left(1 + \kappa_{SD}\right)} \frac{\left(\sqrt{\gamma}- N\alpha\mu\right)}{\sqrt{\gamma_{s}}\hat{t}_{SD}} \right) \right) \operatorname{U}\left(\sqrt{\gamma}- N\alpha\mu \right).
    \end{aligned}
\end{equation}
Finally, the approximation in (\ref{Eq:LIS_DimReducApprox}) is deduced after substituting values from (\ref{Eq:LIS_g}) , (\ref{Eq:LIS_gconst}) into (\ref{Eq:LIS_DimReducMain}), and the
proof is complete.
\section{proof for Theorem \ref{gamma_approx}}
\label{gamma}
Here, we derive the first and second moments of the RV $\gamma_{IRS}$. Note that $\gamma_{IRS}$ can be expanded as follows: 
 \begin{align}
    \frac{\gamma_{IRS}}{\gamma_{s}} = & \underbrace{|{h}^{SD}|^2 + \alpha^2 \sum_{i=1}^N|[\vec{h}^{SR}]_{i}|^2|[\vec{h}^{RD}]_{i}|^2}_{A}+\underbrace{ 
    2\alpha|{h}^{SD}| \sum_{i=1}^N |[\vec{h}^{SR}]_{i}||[\vec{h}^{RD}]_{i}|\cos(\Phi_{i})}_{B}+ \nonumber \\ & 
     \underbrace{2\alpha^2 \sum_{i=1}^{N-1}\sum_{k=i+1}^N|[\vec{h}^{SR}]_{i}||[\vec{h}^{RD}]_{i}||[\vec{h}^{SR}]_{k}||[\vec{h}^{RD}]_{k}|\cos(\Phi_{i}-\Phi_{k})}_{C}.
\end{align}
Hence, the first two moments of $\gamma_{IRS}$ can be evaluated as
$\mathbb{E}\left[\gamma_{IRS}\right] =\gamma_{s} \left( \mathbb{E}[A]+\mathbb{E}[B]+\mathbb{E}[C] \right)$ and $\mathbb{E}\left[\gamma_{IRS}^2\right] =\gamma^{2}_{s}\left( \mathbb{E}\left[A^2\right] + \mathbb{E}\left[B^2\right] + \mathbb{E}\left[C^2\right] + 2 \mathbb{E}\left[AB\right] + 2 \mathbb{E}\left[BC\right] + 2 \mathbb{E}\left[AC\right] \right)$. Next, in order to evaluate the above moments, we substitute $A$, $B$, $C$ and derive individual expectations using the fact that the RVs $\{h^{SD},\left[\vec{h}^{SR}\right]_i,\left[\vec{h}^{RD}\right]_i; i=1,\cdots,N\}$ are independent. In the subsequent paragraphs, we assume
 $s:=\frac{2^b}{\pi}{\sin\left(\frac{\pi}{2^b}\right)}$, $p:=\frac{2^b}{2\pi}{\sin\left(\frac{2\pi}{2^b}\right)}$, $m_{1}^{AB}:=\mathbb{E}\left[|[\vec{h}^{AB}]_i|\right]$, $m_{2}^{AB}:=\mathbb{E}\left[|[\vec{h}^{AB}]_i|^2\right]$, $m_{3}^{AB}:=\mathbb{E}\left[|[\vec{h}^{AB}]_i|^3\right]$ and $m_{4}^{AB}:=\mathbb{E}\left[|[\vec{h}^{AB}]_i|^4\right]$ for  $A,B$ $\in$ \{$\mathbf{S}$,\ $\mathbf{R}$,\ $\mathbf{D}$\}. Let us first evaluate the first moments of RV's $A$, $B$, and $C$. Here, we have,
\begin{equation}
  \begin{aligned}
    \mathbb{E}\left[A\right] &=\mathbb{E}\left[|{h}^{SD}|^2 + \alpha^2 \sum_{i=1}^N|[\vec{h}^{SR}]_{i}|^2|[\vec{h}^{RD}]_{i}|^2\right]
    = m_2^{SD}+N \alpha^2 m_2^{SR} m_2^{RD}.
    \label{meanA}
    \end{aligned}
\end{equation}
Similarly, we have,
\begin{equation}
\begin{aligned}
    \mathbb{E}\left[B\right] &=\mathbb{E}\left[2\alpha|{h}^{SD}| \sum_{i=1}^N |[\vec{h}^{SR}]_{i}||[\vec{h}^{RD}]_{i}|\cos\left(\Phi_{i}\right)\right] =2N\ \alpha s m_1^{SD} m_1^{SR} m_1^{RD},
    \end{aligned}
    \label{meanB}
\end{equation} 
where  $ s = \mathbb{E}\left[\cos(\Phi_{i})\right]$. 
\begin{equation}
    \begin{aligned}
    \mathbb{E}[C] &=\mathbb{E}\left[2\alpha^2 \sum_{i=1}^{N-1}\sum_{k=i+1}^N|[\vec{h}^{SR}]_{i}||[\vec{h}^{RD}]_{i}||[\vec{h}^{SR}]_{k}||[\vec{h}^{RD}]_{k}|\cos\left(\Phi_{i}-\Phi_{k}\right)\right]\\
    &=N(N-1) \alpha^2 s^2 (m_1^{SR})^2 (m_1^{RD})^2,
    \label{meanC}
    \end{aligned}
\end{equation}
where $\mathbb{E}\left[\cos\left(\Phi_{i}-\Phi_{k}\right)\right]=\left(\frac{2^b}{{\pi}}\sin\left(\frac{\pi}{2^b}\right)\right)^2=s^2$.
Similarly, the second moments of the RV $A$, $B$, and $C$ are derived as follows,
\begin{equation}
\begin{aligned}
    \hspace{-1cm}\mathbb{E}\left[A^2\right]&=\mathbb{E}\left[\left(|{h}^{SD}|^2 + \alpha^2 \sum_{i=1}^N|[\vec{h}^{SR}]_{i}|^2|[\vec{h}^{RD}]_{i}|^2\right)^2\right]\\
    &=m_4^{SD}+2\alpha^2N m_2^{SR}m_2^{RD}m_2^{SD}+N \alpha^4 m_4^{SR}m_4^{RD}+N(N-1) \alpha^4 (m_2^{SR})^2 (m_2^{RD})^2,
    \end{aligned}
\end{equation}

\begin{equation}
\begin{aligned}
     \hspace{-1cm}\mathbb{E}[B^2]=&\mathbb{E}\left[\left(2\alpha|{h}^{SD}| \sum_{i=1}^N |[\vec{h}^{SR}]_{i}||[\vec{h}^{RD}]_{i}|\cos(\Phi_{i})\right)^2\right]\\
    & = 4\alpha^2 m_2^{SD}\left[Nm_2^{SR}m_2^{RD}\frac{1+p}{2} +N(N-1)s^2(m_1^{SR})^2 (m_1^{RD})^2 \right],
\end{aligned}
\end{equation}
where $\mathbb{E}\left[\cos^2(\Phi_{i})\right]=\frac{1}{2}+\frac{2^b}{2\pi}\sin(\frac{\pi}{2^b})\cos(\frac{\pi}{2^b})=\frac{1+p}{2}$. Next, we have
%\begin{equation}
\begin{align}
    \mathbb{E}[C^2]&=\mathbb{E}\left[\left(2\alpha^2 \sum_{i=1}^{N-1}\sum_{k=i+1}^N|[\vec{h}^{SR}]_{i}||[\vec{h}^{RD}]_{i}||[\vec{h}^{SR}]_{k}||[\vec{h}^{RD}]_{k}|\cos(\Phi_{i}-\Phi_{k})\right)^2\right] \nonumber \\
     &=4\alpha^4\sum_{i=1}^{N-1}\sum_{k=i+1}^N\left(\mathbb{E}\left[|[\vec{h}^{SR}]_{i}|^2\right]\right)^2\left(\mathbb{E}\left[|[\vec{h}^{RD}]_{i}|^2\right]\right)^2\mathbb{E}\left[\cos^2(\Phi_{i}-\Phi_{k}))\right]  \nonumber \\
     &+4\alpha^4\sum_{j=1}^N\sum_{\substack{i=1 \\  i \ne j \ne k \ne l}}^{N-1}\sum_{k=i+1}^N\sum_{l=j+1}^N \left(\mathbb{E}\left[|\left[\vec{h}^{SR}\right]_{i}|\right]\right)^4\left(\mathbb{E}\left[|\left[\vec{h}^{RD}\right]_{i}|\right]\right)^4\mathbb{E}^2\left[\cos\left(\Phi_{i}-\Phi_{k}\right)\right]  \label{equation_c2}\\
     &+ 8\alpha^4 \sum_{j=1}^N\sum_{\substack{i=1 \\  j=i \ne k \ne l }}^{N-1}\sum_{k=i+1}^N\sum_{l=j+1}^N \Big[ \mathbb{E}\left[|[\vec{h}^{SR}]_{k}|\right]\mathbb{E}\left[|[\vec{h}^{RD}]_{k}|\right]\mathbb{E}\left[|[\vec{h}^{SR}]_{l}|\right]\mathbb{E}\left[|[\vec{h}^{RD}]_{l}|\right] \Big. \nonumber \\
     &\hspace{4cm} \Big. \mathbb{E}\left[|[\vec{h}^{SR}]_{i}|^2\right]\mathbb{E}\left[|[\vec{h}^{RD}]_{i}|^2 \right]\mathbb{E}\left[\cos(\Phi_{i}-\Phi_{k})\cos(\Phi_{i}-\Phi_{l})\right] \Big].\nonumber
\end{align}
%\end{equation}
Using $ \mathbb{E} \left[\cos^2(\Phi_{i}-\Phi_{k})\right]=\frac{1}{2}+\frac{1}{2}\left(\frac{2^b}{2\frac{\pi}{2^b}}\sin\left(2\frac{\pi}{2^b}\right)\right)^2=\frac{1+p^2}{2}$ and  $\mathbb{E}\left[\cos\left(\Phi_{i}-\Phi_{k}\right)\cos(\Phi_{i}-\Phi_{l})\right]=\frac{1}{2}s^2+\frac{1}{2}p s^2$,
(\ref{equation_c2}) can be re-written as,
\begin{equation}
    \begin{aligned}
     \mathbb{E}\left[C^2\right]&=N(N-1)\alpha^4\left[2(N-2)m_2^{SR}m_2^{RD}s^2\left(1+p\right)(m_1^{SR})^2(m_1^{RD})^2+(m_2^{SR})^2(m_2^{RD})^2(1+p^2)\right.\\
     &\left.+s^4(N-2)(N-3)(m_1^{SR})^4(m_1^{RD})^4\right].
    \end{aligned}
\end{equation}
Next, we derive the expectation of RV's $AB$, $BC$, and $AC$, 
\begin{align}
 \mathbb{E}[AB] = \mathbb{E}\left[\left(|{h}^{SD}|^2 +  \alpha^2\sum_{i=1}^N|[\vec{h}^{SR}]_{i}|^2|[\vec{h}^{RD}]_{i}|^2 \right) \times \left( 2\alpha|{h}^{SD}| \sum_{i=1}^N |[\vec{h}^{SR}]_{i}||[\vec{h}^{RD}]_{i}|\cos(\Phi_{i})\right)\right].
\end{align}
We can rewrite the above equation as (\ref{expectation_ab}).
\begin{align}
 \mathbb{E}\left[AB\right] = & 2\alpha \mathbb{E}\left[|{h}^{SD}|^3\right] \sum_{i=1}^N \mathbb{E}\left[|[\vec{h}^{SR}]_{i}|\right]\mathbb{E}\left[|[\vec{h}^{RD}]_{i}|\right] \mathbb{E}\left[\cos(\Phi_{i})\right] \nonumber \\& 
 +4\alpha^3 \mathbb{E}\left[|{h}^{SD}|\right] \sum_{j=1}^{N-1}\sum_{i=j+1}^N \mathbb{E}\left[|[\vec{h}^{SR}]_{i}|^2\right]\mathbb{E}\left[|[\vec{h}^{RD}]_{i}|^2 \right]  \mathbb{E}\left[|[\vec{h}^{SR}]_{i}|\right]\mathbb{E}\left[|[\vec{h}^{RD}]_{i}|\right]\mathbb{E}\left[\cos(\Phi_{i})\right] \nonumber \\&
 +2\alpha^3 \mathbb{E}\left[|{h}^{SD}|\right] \sum_{i=1}^N \mathbb{E}\left[|[\vec{h}^{SR}]_{i}|^3\right]\mathbb{E}\left[|[\vec{h}^{RD}]_{i}|^3\right] \mathbb{E}\left[ \cos(\Phi_{i})\right]).
 \label{expectation_ab}
\end{align}
The above expression can be evaluated as,
 \begin{align}
 \mathbb{E}[AB] = &2N\alpha m_3^{SD} m_1^{SR} m_1^{RD}s +2 \alpha^3 N m_1^{SD} m_3^{SR} m_3^{RD} s+4 \alpha^3 m_1^{SD} m_1^{SR} m_1^{RD} m_2^{SR} m_2^{RD}\frac{N(N-1)}{2}s.
 \end{align}
Next, we consider the expectation of the term $BC$,
\begin{align}
 \mathbb{E}[BC] = & \mathbb{E} \left[ 2\alpha|{h}_{SD}| \sum_{i=1}^N |[\vec{h}_{SR}]_{i}||[\vec{h}_{RD}]_{i}|\cos(\Phi_{i}) \times \nonumber \right. \\ 
 & \left. 2\alpha^2\sum_{i=1}^{N-1}\sum_{k=i+1}^N|[\vec{h}_{SR}]_{i}||[\vec{h}_{RD}]_{i}||[\vec{h}_{SR}]_{k}||[\vec{h}_{RD}]_{k}|(\cos(\Phi_{i}-\Phi_{k})) \right].
\end{align}
We can write the above equation as 
\begin{align}
 \mathbb{E}[BC]= 2\alpha^3 m_1^{SD} \left[m_1^{SR} \right]^3\left[m_1^{RD}  \right]^3
N(N-1)(N-2)s^3 +2\alpha^3s m_1^{SD}  m_2^{SR}m_2^{RD} m_1^{SR} m_1^{RD} N(N-1)(1+p).
\end{align}
Finally, we evaluate the expectation of the last term $AC$,
\begin{align}
 \mathbb{E}[AC] = & \mathbb{E}\left[\left(|{h}^{SD}|^2 +  \alpha^2\sum_{i=1}^N|[\vec{h}^{SR}]_{i}|^2|[\vec{h}^{RD}]_{i}|^2 \right) \right.\times \nonumber \\ & \left. \left(2\alpha^2\sum_{i=1}^{N-1}\sum_{k=i+1}^N|[\vec{h}^{SR}]_{i}||[\vec{h}^{RD}]_{i}||[\vec{h}^{SR}]_{k}||[\vec{h}^{RD}]_{k}|\cos(\Phi_{i}-\Phi_{k}) \right)\right].
\end{align}
Again, we can rewrite the above equation as
\begin{align}
 \mathbb{E}[AC] = & \mathbb{E}[|{h}^{SD}|^2] \mathbb{E}[C]+ 4\alpha^4\sum_{i=1}^{N-1} \sum_{k=i+1}^N \mathbb{E}[|[\vec{h}^{SR}]_{j}|^3]\mathbb{E}[|[\vec{h}^{RD}]_{j}|^3]\mathbb{E}[|[\vec{h}^{SR}]_{j}|]\mathbb{E}[|[\vec{h}^{RD}]_{j}|] \mathbb{E}[\cos(\Phi_{i}-\Phi_{k})] \nonumber \\&
  +2\alpha^4\sum_{j=1}^N\sum_{\substack{i=1 \\ i \ne j \ne k}}^{N-1}\sum_{k=i+1}^N  \mathbb{E}[|[\vec{h}^{SR}]_{j}|^2]\mathbb{E}[|[\vec{h}^{RD}]_{j}|^2]\mathbb{E}[|[\vec{h}^{SR}]_{j}|]^2\mathbb{E}[|[\vec{h}^{RD}]_{j}|]^2\mathbb{E}[\cos(\Phi_{i}-\Phi_{k})].
\end{align}
Substituting for the expectations, we get
\begin{align}
 \mathbb{E}[AC] = &m_2^{SD} \left(2 \alpha^2(m_1^{SR})^2 (m_1^{RD})^2 \frac{N(N-1)}{2}s^2\right)+4 \alpha^4 m_3^{SR} m_3^{RD} m_1^{SR} m_1^{RD} \frac{N(N-1)}{2}s^2\\
 &+2 \alpha^4 (m_1^{SR})^2  (m_1^{RD})^2 m_2^{SR} m_2^{RD} \frac{N(N-1)(N-2)}{2}s^2.
\end{align}
Thus, the first and second moments are given by
\begin{equation}
    \mathbb{E}\left[{\gamma_{IRS}}\right] =\gamma_{s}\left( m_2^{SD}+N \alpha^2 m_2^{SR} m_2^{RD}+2Ns\ \alpha m_1^{SD} m_1^{SR} m_1^{RD}+N(N-1) \alpha^2(m_1^{SR})^2 (m_1^{RD})^2 s^2\right).
    \label{mean_1_general_fad}
\end{equation}
\begin{align} \label{mean_2_general_fad}
   &\mathbb{E}\left[{\gamma_{IRS}}^2\right] =\gamma^2_{s}\left\lbrace m_4^{SD}+2\alpha^2N m_2^{SR}m_2^{RD}m_2^{SD}+N \alpha^4 m_4^{SR}m_4^{RD}+N(N-1) \alpha^4 (m_2^{SR})^2 (m_2^{RD})^2 \right.\\ \nonumber
 & \left.+ 4\alpha^2 m_2^{SD}\left[N m_2^{SR}m_2^{RD}\frac{1+p}{2} +N(N-1)s^2(m_2^{SR})^2 (m_2^{RD})^2 \right] + N(N-1)\alpha^4 \left[ 2(N-2) s^2(1+p) \right.  \right.\\ \nonumber
 &\left. \left. m_2^{SR}m_2^{RD}(m_1^{SR})^2(m_1^{RD})^2 +(m_2^{SR})^2(m_2^{RD})^2(1+p^2)+s^4(N-2)(N-3)(m_1^{SR})^4(m_1^{RD})^4\right] \right.\\ \nonumber
 &\left.+ 4N\alpha s \left[ m_3^{SD} m_1^{SR} m_1^{RD} + \alpha^2  m_1^{SD} m_3^{SR} m_3^{RD} +(N-1) \alpha^2 m_1^{SD} m_1^{SR} m_1^{RD} m_2^{SR} m_2^{RD}
 \right] \right. \\ \nonumber
 &\left.+ 4N(N-1)\alpha^3s \left[ m_1^{SD} \left(m_1^{SR} \right)^3\left(m_1^{RD}  \right)^3
(N-2)s^2  +(1+p) m_1^{SD}  m_2^{SR}m_2^{RD} m_1^{SR} m_1^{RD} 
\right]\right. \\ \nonumber
 &\left.+ 2 \alpha^2 N(N-1) s^2 m_1^{SR} m_1^{RD}  \left[m_2^{SD} m_1^{SR} m_1^{RD} +\alpha^2 \left(2 m_3^{SR} m_3^{RD}  + m_1^{SR}  m_1^{RD} m_2^{SR} m_2^{RD} (N-2)\right)
 \right]\right\rbrace.
\end{align}
Note that the values of $m_{1}^{AB},m_{2}^{AB},m_{3}^{AB}$ and $m_{4}^{AB}$ depend upon the channel fading statistics and for the case of $\kappa-\mu$ fading these moments can be evaluated using \cite[(3)]{Bhargav2018:ProductKappaMu}

% distribution with parameters $\kappa_{AB}$ and $\mu_{AB}$ we have, 
% \begin{equation}
% \mathbb{E}\left[X^{p}\right]=\frac{\Omega_{AB}^{\frac{p}{2}} \Gamma\left(\mu_{AB}+\frac{p}{2}\right) \mathrm{e}^{-\kappa_{AB} \mu_{AB}}}{\Gamma\left(\mu_{AB}\right)\left[\left(1+\kappa_{AB}\right) \mu_{AB}\right]^{\frac{p}{2}}}\  _1F_{1}\left(\mu_{AB}+\frac{p}{2} ; \mu_{AB} ; \kappa_{AB} \mu_{AB}\right),
% \label{kappa_mu_k_th_moment}
% \end{equation}where $\Omega_{AB}:=\mathbb{E}[X^2]$ and is equal to $d_{AB}^{-\beta}$ in our case. Given that $|[\vec{h}^{AB}]_i|$ follows the $\kappa-\mu$ distribution with parameters $\kappa_{AB}$ and $\mu_{AB}$, we can substitute the values of $m_{1}^{AB}:=\mathbb{E}\left[|[\vec{h}^{AB}]_i|\right]$, $m_{2}^{AB}:=\mathbb{E}\left[|[\vec{h}^{AB}]_i|^2\right]$, $m_{3}^{AB}:=\mathbb{E}\left[|[\vec{h}^{AB}]_i|^3\right]$ and $m_{4}^{AB}:=\mathbb{E}\left[|[\vec{h}^{AB}]_i|^4\right]$ in (\ref{mean_1_general_fad}) and (\ref{mean_2_general_fad}) to evaluate the first and second moment of SNR in different fading scenarios.  
\cbend 
\color{black}
\section{Proof for Theorem \ref{thm_kl_div_min}} \label{proof_kl_min}
Here, we proceed with steps similar to the KL divergence minimization used by the authors of  \cite{srinivasan2017approximate}. Let $p(\gamma)$ and $q(\gamma)$ respectively represent the pdf of the SNR and the Gamma distribution
that minimizes the KL divergence between $p(\gamma)$ and all the Gamma distributions i.e.,
\begin{align}
q(\gamma)=\underset{q(\gamma)}{\operatorname{argmin}}  \ \text{KL} (p(\gamma) \| q(\gamma)) &=\underset{q(\gamma)}{\operatorname{argmax}} \int p(\gamma)[\ln (q(\gamma))-\ln (p(\gamma))] d \gamma, \nonumber \\
&=\underset{q(\gamma)}{\operatorname{argmax}} \int p(\gamma) \ln (q(\gamma)) d \gamma.
\end{align}
Here we have, $q(\gamma) = \frac{\theta_{kl}^{-k_{kl}}}{\Gamma[k_{kl}]} \gamma^{k_{kl}-1} \exp\left( \frac{-\gamma}{\theta_{kl}}\right)$, where $k_{kl}$, $\theta_{kl}$ are respectively the shape and scale parameter of the Gamma distribution. Thus,
\begin{align}
q(\gamma)  & = \underset{q(\gamma)}{\operatorname{argmax}} \int p(\gamma) \left(-k_{kl} \log(\theta_{kl}) - \log(\Gamma[k_{kl}]) + (k_{kl}-1)\log(\gamma) - \frac{\gamma}{\theta_{kl}} \right) \ d\gamma \\
& = \underset{q(\gamma)}{\operatorname{argmax}} -k_{kl} \log(\theta_{kl}) - \log(\Gamma[k_{kl}]) + (k_{kl}-1) \mathbb{E}[\log(\gamma)] - \frac{\mathbb{E}[\gamma]}{\theta_{kl}}.
\label{gamma_kl_opti}
\end{align}
Now, the parameters $k_{kl}$ and $\theta_{kl}$ can be identified by differentiating (\ref{gamma_kl_opti}) with respect to $k_{kl}$ and $\theta_{kl}$ and then equating each of the expression to zero. Thus, we have
\begin{align}
\mathbb{E}[\log(\gamma_{IRS})] & = \log(\theta_{kl}) + \psi(k_{kl}), \label{log_e_1} \\
\mathbb{E}[\gamma_{IRS}] &= k_{kl} \times \theta_{kl}.  \label{e_1}
\end{align}
One can observe that (\ref{log_e_1}) and (\ref{e_1}) are equivalent to matching the first moment of $\gamma_{IRS}$ and the first moment of $\log(\gamma_{IRS})$ to the corresponding moments of a gamma RV. Thus, the probability of outage for a threshold $\gamma$ is obtained by evaluating the CDF of the Gamma RV with parameters $k_{kl}$ and $\theta_{kl}$ at $\gamma$. The corresponding expression is given in (\ref{p_out_kl}).  
\end{appendices}
\color{black}
\newpage
\title{Supplementary material for: Outage Probability Expressions for an IRS-Assisted System with and without Source-Destination Link for the Case of Quantized Phase Shifts in $\kappa - \mu$ Fading}
\maketitle
In this document, we present the simulation results for various special cases of the $\kappa-\mu$ fading. 
\subsection{Results for Rayleigh channel} \label{sec_rayleigh}
First, we consider the case where all the links experience independent Rayleigh fading. Hence we have, $\kappa_{SD}=0$, $\mu_{SD}=1$, $\kappa_{SR}=0$, $\mu_{SR}=1$, $\kappa_{RD}=0$ and $\mu_{RD}=1$. In this case, the uni-variate approximation can be simplified as
\begin{equation}\label{Eq:LIS_DimReducApproxRayleigh}
\begin{aligned}
P_{outage} &\approx 
\frac{4 N d_{sr}^{\beta}d_{rd}^{\beta}}{\gamma_{s}}\int_{0}^{\infty} \left(  1-e^{\frac{-(\sqrt{\gamma}- x - (N-1)\alpha\mu)^{2}d_{sd}^{\beta}}{\gamma_{s}}} \right) \operatorname{U}\left(\sqrt{\gamma}- x - (N-1)\alpha\mu \right)x\textit{K}_0 \left[ \frac{2 x}{\sqrt{\gamma_{s}} d_{sr}^{-\beta/2} d_{rd}^{-\beta/2}}\right] d x
\\
&+N \frac{ 2^{b}}{2\pi}\int_{\frac{-\pi}{2^{b}}}^{\frac{\pi}{2^{b}}} \left(  1-e^{\frac{-\left(\sqrt{\gamma - \left(\alpha\sin{\left( \phi \right)}\mu \right)^{2}}- \alpha\cos{\left( \phi \right)} \mu - (N-1)\alpha\mu\right)^{2}d_{sd}^{\beta}}{\gamma_{s}}} \right) \operatorname{U}\Bigg( \Bigg.\sqrt{\gamma - \left(\alpha\sin{\left( \phi \right)}\mu \right)^{2}}- 
\\& 
\alpha\cos{\left( \phi \right)} \mu - (N-1)\alpha\mu \Bigg.\Bigg) d \phi 
-(2N-1)  \left(  1-e^{\frac{-(\sqrt{\gamma}- N\alpha\mu)^{2}d_{sd}^{\beta}}{\gamma_{s}}} \right) \operatorname{U}\left(\sqrt{\gamma}- N\alpha\mu \right),
\end{aligned}
\end{equation}
where $\mu = \frac{\pi \sqrt{\gamma_{s}}  d_{sr}^{-\beta/2} d_{rd}^{-\beta/2}}{4}$.
\begin{table}[t]
	\begin{tabular}{|l||*{8}{c|}}\hline
		\multirow{5}{*}{}
		$N=5$ &\backslashbox{Method}{Threshold($\gamma$)}
		&\makebox[3em]{-15 dB}&\makebox[3em]{-10 dB}&\makebox[3em]{-5 dB}&\makebox[3em]{-2 dB}
		&\makebox[3em]{0 dB}&\makebox[3em]{2 dB}&\makebox[3em]{5 dB}\\\cline{2-9}
		&Simulated & 0.0937 & 0.2728 &0.6405 & 0.8714 &0.9615 &0.9943 &1.0000 \\\cline{2-9}
		&Uni-variate Approx.  & \textbf{0.0938} & \textbf{0.2732} & \textbf{0.6403} & \textbf{0.8713} & \textbf{0.9615} & \textbf{0.9943} & \textbf{1.0000} \\\cline{2-9}
		&Gamma($k_{mom},\theta_{mom}$) & 0.0947 & 0.2734 & 0.6400 & 0.8712 & 0.9615 & 0.9943 & 1.0000\\\cline{2-9}
		&Gamma($k_{kl},\theta_{kl}$) & 0.0744 & 0.2460     & 0.6328 & 0.8782  & 0.9676 & 0.9961 & 1.0000\\\hline \hline
		$N=50$ &\backslashbox{Method}{Threshold($\gamma$)}
		&\makebox[3em]{-15 dB}&\makebox[3em]{-10 dB}&\makebox[3em]{-5 dB}&\makebox[3em]{-2 dB}
		&\makebox[3em]{0 dB}&\makebox[3em]{2 dB}&\makebox[3em]{5 dB}\\\cline{2-9}
		&Simulated & 0.0543 & 0.2115 & 0.5816 & 0.8402 & 0.9493 & 0.9921 & 1.0000\\\cline{2-9}
		&Uni-variate Approx. & \textbf{0.0545} & \textbf{0.2116} & \textbf{0.5819} & \textbf{0.8403} & \textbf{0.9494} & \textbf{0.9920} & 0.9999\\\cline{2-9}
		&Gamma($k_{mom},\theta_{mom}$) & 0.0635 & 0.2148 & 0.5794 & 0.8394 & 0.9496 & 0.9922 & \textbf{1.0000}\\\cline{2-9}
		&Gamma($k_{kl},\theta_{kl}$) &0.0490 & 0.1909 & 0.5697 & 0.8452 & 0.9559 & 0.9943 & 1.0000 \\\hline 
	\end{tabular}
	\caption{Comparison of OP with SD link for $\mathcal{S}_{1}$ with varying N for $d=30$ and $b=5$}
	\label{diff_N_rayl}
\end{table}

\subsection{Results for Nakagami channel}\label{sec_nakagami}
Next, we consider the case of Nakagami-m fading which is a special case of the $\kappa-\mu$ channel for $\kappa=0$ and $\mu=m$. In this case, the uni-variate approximation can be simplified as

\begin{equation}\label{Eq:LIS_DimReducApproxNakagami}
\begin{aligned}
P_{outage} &\approx  
\frac{4N \left( \sqrt{\frac{a_{SR} a_{RD}}{\gamma_{s}}}\right)^{m_{SR}+m_{RD}}}{\Gamma\left( m_{SR} \right)\Gamma\left( m_{RD}\right)}
\int_{0}^{T_{1}} h_{1}\left(x\right)  
x^{m_{SR}+m_{RD}-1} K_{m_{SR}-m_{RD}}\left(2\sqrt{\frac{a_{SR}a_{RD}}{\gamma_{s}}} x \right) dx
\\
&+N \frac{ 2^{b}}{2\pi}\int_{\frac{-\pi}{2^{b}}}^{\frac{\pi}{2^{b}}}
\begin{array}{l}
P\left(m_{SD}, \frac{m_{SD}\left(\sqrt{\gamma - \left(\alpha\sin{\left( \phi_{i} \right)}\mu \right)^{2}}- 
	\alpha\cos{\left(\phi_{i}\right)} \mu - (N-1)\alpha\mu\right)^{2}}{\gamma_{s}\hat{t}^{2}_{SD}} \right)   \\
\times\operatorname{U}\Bigg( \Bigg.\sqrt{\gamma - \left(\alpha\sin{\left( \phi \right)}\mu \right)^{2}}-
\alpha\cos{\left( \phi \right)} \mu - (N-1)\alpha\mu \Bigg.\Bigg)  d \phi    
\end{array}
\\
&-(2N-1) P\left(m_{SD},\frac{m_{SD}\left(\sqrt{\gamma} - N\alpha\mu\right)^{2}}{\gamma_{s}\hat{t}^{2}_{SD}}\right) \operatorname{U}\left(\sqrt{\gamma}- N\alpha\mu \right),
\end{aligned}
\end{equation}
where
\begin{equation}
\begin{aligned}
a_{SR} &= \frac{m_{SR}}{\hat{t}^{2}_{SR}}, a_{RD} = \frac{m_{RD}}{\hat{t}^{2}_{RD}}, \quad 
T_{1} = \sqrt{\gamma} - \left(N-1\right)\alpha\mu, \\
\mu &= \frac{\sqrt{\gamma_{s}}\left( m_{SR} \right)_{\frac{1}{2}} \left( m_{RD} \right)_{\frac{1}{2}}}{\sqrt{a_{SR} a_{RD}}}, \quad        h_{1}\left(x\right) = P\left(m_{SD},\frac{m_{SD}\left(\sqrt{\gamma} - x - (N-1)\alpha\mu\right)^{2}}{\gamma_{s}\hat{t}^{2}_{SD}}\right).
\end{aligned}
\end{equation}
and $\textit{K}_{\nu}$ is modified Bessel function of the second kind of order $\nu$ \cite{BesselKdef}, $P\left(a,z\right)$ is the Regularized Gamma function \cite{ReglowGammadef} and $\operatorname{U}(\cdot)$ is the unit step function. \par Here, the results in Table \ref{diff_N_Nakagami} are generated using the following parameters: $\kappa_{SD}=0$, $\mu_{SD}=1$, $\kappa_{SR}=0$, $\mu_{SR}=2$, $\kappa_{RD}=0$ and $\mu_{RD}=1.2$. Note that this simulation setting is similar to the scenario considered in Table \ref{diff_N_rayl} except that the values of $\mu_{SR}$ and $\mu_{RD}$ are larger. We can observe that the OP decreases with the increase in the number of multipath clusters in the $SR$ and $RD$ link. This decrease in OP is more significant for large values of $N$. 
\begin{table}[t]
	\begin{tabular}{|l||*{8}{c|}}\hline
		\multirow{5}{*}{}
		$N=5$ &\backslashbox{Method}{Threshold($\gamma$)}
		&\makebox[3em]{-15 dB}&\makebox[3em]{-10 dB}&\makebox[3em]{-5 dB}&\makebox[3em]{-2 dB}
		&\makebox[3em]{0 dB}&\makebox[3em]{2 dB}&\makebox[3em]{5 dB}\\\cline{2-9}
		&Simulated & 0.0936 & 0.2721 & 0.6401 & 0.8711 &0.9614 & 0.9944 & 1 \\\cline{2-9}
		&Uni-variate Approx.  & \textbf{0.0934} & \textbf{0.2726} & \textbf{0.871} & \textbf{0.9244} & \textbf{0.9614} & \textbf{0.9943} & \textbf{1} \\\cline{2-9}
		&Gamma($k_{mom},\theta_{mom}$) &0.0944  & 0.2728 & 0.6395 &0.8709  & 0.9614 & 0.9943 & 1 \\\cline{2-9}
		&Gamma($k_{kl},\theta_{kl}$) & 0.0741  &0.2455   & 0.6323 & 0.8779  &0.9675  & 0.9961 & 1 \\\hline \hline
		$N=50$ &\backslashbox{Method}{Threshold($\gamma$)}
		&\makebox[3em]{-15 dB}&\makebox[3em]{-10 dB}&\makebox[3em]{-5 dB}&\makebox[3em]{-2 dB}
		&\makebox[3em]{0 dB}&\makebox[3em]{2 dB}&\makebox[3em]{5 dB}\\\cline{2-9}
		&Simulated &0.0511 &0.2064  &0.5755  & 0.8376 & 0.9483 & 0.9919 & 1\\\cline{2-9}
		&Uni-variate Approx. & \textbf{0.0514} & \textbf{0.2063} & \textbf{0.5765} & \textbf{0.8373}  & \textbf{0.9482} & \textbf{0.9917} & \textbf{0.9999}\\\cline{2-9}
		&Gamma($k_{mom},\theta_{mom}$)  & 0.0611 & 0.2098 & 0.5737 & 0.8363 & 0.9483 & 0.9919 & 1\\\cline{2-9}
		&Gamma($k_{kl},\theta_{kl}$)  & 0.047 & 0.1862 & 0.5639 & 0.8419 & 0.9547 & 0.9941 &1  \\\hline 
	\end{tabular}
	\caption{Comparison of OP with SD link for $\mathcal{S}_{1}$ with varying N for $d=30$ and $b=5$}
	\label{diff_N_Nakagami}
\end{table}

\subsection{Results for Rician channel}\label{sec_rician}
Here, we consider the case of Rician fading with parameter $K$, which is a special case of the $\kappa-\mu$ channel for $\kappa=K$ and $\mu=1$. The results in Table \ref{diff_N_Rician} are generated using the following parameters: we have $\kappa_{SD}= 0$, $\mu_{SD}=1$,$\kappa_{SR}=2$, $\mu_{SR}=1$, $\kappa_{RD}=2.5$ and $\mu_{RD}=1$. This setting is similar to the scenario considered in Table \ref{diff_N_rayl} except that the values of $\kappa_{SR}$ and $\kappa_{RD}$ are larger. Note that the parameter $\kappa$ represents the ratio  between  the  total  power  of the  dominant components and  the  total  power  of the  scattered  waves. Thus, larger values of $\kappa_{SR}$ and $\kappa_{RD}$ means that the dominant components of the $SR$ and $RD$ links are stronger and hence the OP decreases. This can be observed from the results in Table \ref{diff_N_Rician}. 
\begin{table}[t]
	\begin{tabular}{|l||*{11}{c|}}\hline
		\multirow{3}{*}{}
		$N$ &\backslashbox{Method}{Threshold($\gamma$)}
		&\makebox[3em]{-15 dB}&\makebox[3em]{-10 dB}&\makebox[3em]{-5 dB}&\makebox[3em]{-2 dB}
		&\makebox[3em]{0 dB}&\makebox[3em]{2 dB}&\makebox[3em]{5 dB}\\\hline \hline
		$N=5$ &Simulated  & $0.0932$ & $0.2720$ & $0.6386$ & $0.8707$ &  $0.9610$ & $0.9942$ & $1.0000$ \\\cline{2-9}
		&Uni-variate Approx. & $\mathbf{0.0933}$ & $0.2724$ & $0.6396$ & $0.8709$ &  $0.9613$ & $0.9943$ & $1.0000$ \\\cline{2-9}
		&Gamma($k_{mom},\theta_{mom}$)  & $0.0943$ & $0.2726 $ & $ 0.6393$ &  $0.8709$ & $0.9614$  & $0.9943$  & $1.0000$ \\\cline{2-9}
		&Gamma($k_{kl},\theta_{kl}$) & $0.0740$ &  $0.2453$& $0.6321$  & $0.8779$ &  $0.9675$ & $0.9961$  & $1.0000$ \\\hline \hline
		$N=50$ &Simulated  & $0.0500$ & $0.2044$  & $0.5757$  & $0.8368$  & $0.9479$  & $0.9917$  & $1.0000$ \\\cline{2-9}
		&Uni-variate Approx.  & $0.0505$ & $0.2047$  & $0.5748$  & $0.8364$  & $0.9478$  & $0.9916$  & $0.9999$ \\\cline{2-9}
		&Gamma($k_{mom},\theta_{mom}$)  & $0.0604$ & $0.2084$ & $0.5720$  & $0.8353$ & $0.9480$  & $0.9919$ & $0.9999$ \\\cline{2-9}
		&Gamma($k_{kl},\theta_{kl}$) & $0.0465$ & $0.1849$ & $0.5622$  & $0.8410$ & $0.9543$  & $0.9940$ & $1.0000$ \\\hline \hline
	\end{tabular}
	\caption{Comparison of OP with SD link for $\mathcal{S}_{1}$ with varying N for $d=30$ and $b=5$}
	\label{diff_N_Rician}
\end{table}
\subsection{Results for $\kappa-\mu$ channel}\label{sec_kappamu}
Here, we present the results for the most general case of $\kappa-\mu$ fading scenario with the following parameters: $\kappa_{SD}=0.7$, $\mu_{SD}=1$, $\kappa_{SR}=2.4$, $\mu_{SR}=5$, $\kappa_{RD}=1.8$ and $\mu_{RD}=3$. From the results in Section\ref{sec_rayleigh} to Section\ref{sec_rician}, we know that the increase in $\kappa_{SR},\kappa_{RD}$, $\mu_{SR}$ and $\mu_{RD}$ decreases OP. Here, the value of $\kappa_{SD}$ is also larger than the value of $\kappa_{SD}$ used in Section \ref{sec_rayleigh}. From the results in Table \ref{diff_N_k_mu}, we can see that the OP is lesser than the values of OP observed in Table \ref{diff_N_rayl}.  
% \begin{table}[t]
% \begin{tabular}{|l||*{11}{c|}}\hline
% \multirow{3}{*}{}
% $N$ &\backslashbox{Method}{Threshold($\gamma$)}
% &\makebox[3em]{-15 dB}&\makebox[3em]{-10 dB}&\makebox[3em]{-5 dB}&\makebox[3em]{-2 dB}
% &\makebox[3em]{0 dB}&\makebox[3em]{2 dB}&\makebox[3em]{5 dB}\\\hline \hline
% $N=5$ &Simulated &0.1471   & 0.3247     &0.6439 &   0.8569    & 0.9523    &0.9922 &1.0000 \\\cline{2-9}
% &Uni-variate Approx.  &0.1464& 0.3233 &0.6432 &0.8565    &0.9516    &0.9914 &0.9993 \\\cline{2-9}
% &Gamma($k_{mom},\theta_{mom}$) &0.1255    &0.3097   &0.6493 &0.8624    &0.9528    &0.9912 &0.9999\\\cline{2-9}
% &Gamma($k_{kl},\theta_{kl}$) &0.0986   & 0.2779   & 0.6404   & 0.8692   & 0.9600   & 0.9939   & 1.0000\\\hline \hline
% $N=50$ &Simulated &0.0892   & 0.2546   &0.5811   &0.8190  &0.9352   &0.9882  &0.9999 \\\cline{2-9}
% &Uni-variate Approx. &0.0890& 0.2536 &0.5802  &0.8190   &0.9347 & 0.9876 &0.9995 \\\cline{2-9}
% &Gamma($k_{mom},\theta_{mom}$) &0.0834    &0.2409   &0.5819   &0.8246  &0.9366   &0.9874  &0.9998 \\\cline{2-9}
% &Gamma($k_{kl},\theta_{kl}$) &0.0642  & 0.2130  &0.5701   &0.8297  &0.9439   &0.9906  &0.9999 \\\hline \hline
% \end{tabular}
%  \caption{Comparison of OP with SD link for $\mathcal{S}_{1}$ with varying N for $d=30$ and $b=5$}
% \label{diff_N_k_mu}
% \end{table}

\begin{table}[t]
	\begin{tabular}{|l||*{11}{c|}}\hline
		\multirow{3}{*}{}
		$N$ &\backslashbox{Method}{Threshold($\gamma$)}
		&\makebox[3em]{-15 dB}&\makebox[3em]{-10 dB}&\makebox[3em]{-5 dB}&\makebox[3em]{-2 dB}
		&\makebox[3em]{0 dB}&\makebox[3em]{2 dB}&\makebox[3em]{5 dB}\\\hline \hline
		$N=5$ &Simulated &0.0797   & 0.2440  &  0.6222  &  0.8772  &  0.9705  &  0.9973   & 1.0000 \\\cline{2-9}
		&Uni-variate Approx.  & 0.0795 & 0.2437 & 0.6213 & 0.8766 & 0.9699  & 0.9967 & 0.9993 \\\cline{2-9}
		&Gamma($k_{mom},\theta_{mom}$) &0.0653   & 0.2322  &  0.6280   & 0.8808  &  0.9700   & 0.9968 &   1.0000\\\cline{2-9}
		&Gamma($k_{kl},\theta_{kl}$) &0.0512  &  0.2094   & 0.6222  &  0.8876   & 0.9748   & 0.9978   & 1.0000\\\hline \hline
		$N=50$ &Simulated & 0.0389  &  0.1736 &   0.5441  &  0.8350  &  0.9560   & 0.9954  &  1.0000 \\\cline{2-9}
		&Uni-variate Approx. & 0.0388 & 0.1737 & 0.5432 & 0.8344  & 0.9556 & 0.9950 &0.9996 \\\cline{2-9}
		&Gamma($k_{mom},\theta_{mom}$) &0.0364   & 0.1637  &  0.5460  &  0.8390  &  0.9561 &   0.9948  &  1.0000 \\\cline{2-9}
		&Gamma($k_{kl},\theta_{kl}$) &0.0279  &  0.1453   & 0.5374  &  0.8446  &  0.9613  &  0.9962  &  1.0000 \\\hline \hline
	\end{tabular}
	\caption{Comparison of OP with SD link for $\mathcal{S}_{1}$ with varying N for $d=30$ and $b=5$}
	\label{diff_N_k_mu}
\end{table}
\bibliographystyle{IEEEtran}
\bibliography{reference.bib}
\end{document}